\pdfpagewidth=8.5in
\pdfpageheight=11in

\documentclass{article}
\usepackage[english]{babel}
\usepackage[utf8]{inputenc}
\usepackage{times}
\usepackage{soul}
\usepackage{amsmath}
\usepackage{amssymb}
\usepackage{amsthm}
\usepackage{url}
\usepackage{booktabs}
\usepackage{fontawesome}
\usepackage{utfsym}
\usepackage{wasysym}
\usepackage{makecell}
\usepackage{algorithm}
\usepackage{algpseudocode}
\usepackage{pifont}
\newcommand{\cmark}{\ding{51}} 
\newcommand{\xmark}{\ding{55}} 
\newcommand{\qmark}{?}

\newtheorem{theorem}{Theorem}

\newtheorem{lemma}{Lemma}
\newtheorem{definition}{Definition}
\newtheorem{observation}{Observation}
\newtheorem{proposition}{Proposition}
\usepackage[a4paper,top=2cm,bottom=2cm,left=3cm,right=3cm,marginparwidth=1.75cm]{geometry}
\usepackage{graphicx}
\usepackage[colorlinks=true, allcolors=blue]{hyperref}
\usepackage{comment}
\usepackage[small]{caption}
\usepackage{booktabs}
\usepackage[switch]{lineno}
\usepackage{xcolor}
\usepackage{dblfloatfix}
\usepackage{authblk}
\usepackage{natbib}

\urlstyle{same}
\setlength{\intextsep}{1em}   
\setlength{\textfloatsep}{1em} 

\pdfinfo{
/Title (Temporal Fair Division of Indivisible Goods with Scheduling)
}

\title{Temporal Fair Division of Indivisible Goods with Scheduling}
\author[1]{Kui-Wang Choi}
\author[2]{Minming Li}
\affil[1,2]{Department of Computer Science, City University of Hong Kong}
\affil[ ]{\textit {kuiwchoi2-c@my.cityu.edu.hk, minming.li@cityu.edu.hk}}

\date{}

\begin{document}
\maketitle

\begin{abstract}
We study temporal fair division, where agents receive goods over multiple rounds and cumulative fairness is required. We investigate Temporal Envy-Freeness Up to One Good (TEF1) and Up to Any Good (TEFX), its approximation $\alpha$-TEFX, and Temporal Maximin Share (TMMS). Motivated by known impossibilities in standard settings, we consider the model in various restricted settings and extend it by introducing scheduling.

Our main contributions draw the boundary between possibility and impossibility. First, regarding temporal fair division without scheduling, we prove that while constant-factor $\alpha$-TEFX is impossible in general, a $1/2$-approximation is achievable for generalized binary valuations and identical days with two agents. Second, regarding temporal fair division with scheduling, we demonstrate that a scheduling buffer of size at least $n/2$ enables TEF1 for identical days. However, we establish that TEFX and TMMS remain largely impossible even with scheduling or restricted domains. These results highlight the inherent difficulty of strict temporal fairness and quantify the trade-offs required to achieve approximation guarantees.
\end{abstract}

\section{Introduction}
Fair division is a fundamental problem in multi-agent systems and economics, traditionally focusing on the static allocation of resources, such as splitting a cake or distributing a set of goods at a single point in time \citep{lipton2004,budish2011,gourves2014,caragiannis2019,plaut2017,aziz2021,chan2019,amanatidis2020}. However, many real-world allocation problems are dynamic. From cloud computing resources and food bank donations to organ exchange networks, goods often arrive over time and must be allocated online \citep{aleksandrov2015,ghodsi2013,bertsimas2011,lee2019,benade2024}. Although this approach captures the dynamic feature for real-world problems, it ignores the fairness in intermediate allocations. Therefore, the goal of temporal fair division is to maintain fairness cumulatively over the long run, rather than just instantaneously in each round.

The transition from dynamic to temporal settings introduces significant theoretical challenges. Unlike dynamic fair division, where intermediate allocations need not be fair, temporal mechanisms must account for this. Recent work has painted a bleak picture for this domain: temporal fair allocations are often impossible to guarantee without compromising efficiency or social welfare \citep{he2019,elkind2024,cookson2024}. For instance, strong fairness concepts like Envy-Freeness up to any good (EFX) are frequently unattainable in temporal settings because an agent's perception of fairness fluctuates wildly as new goods arrive.

These impossibility results from the temporal fair division model raise a natural question: Are these failures inherent to the problem, or are they artifacts of the model's rigidity? Standard temporal fair division assumes that goods must be consumed or allocated immediately upon arrival. However, in many applications, such as job scheduling on a server or assigning shifts, a natural flexibility exists: A system administrator can delay a job slightly; a manager can reshuffle next week's roster; a student affairs office can assign part-time jobs to students within a flexible interval \citep{schwiegelshohn2000,li2021,im2020}.

Therefore, our main research question is to \textit{examine the boundaries of temporal fairness across various settings, with or without scheduling, in temporal fair division}.

\subsection{Main Results}
We list our detailed results and previous results in the temporal fair division without scheduling in Table~\ref{tab:tef}, which is considered in Section~\ref{woschedule}, and our results with scheduling in Table~\ref{tab:tefR}, which is considered in Section~\ref{wschedule}.
\begin{itemize}
    \item \textbf{Allocations without Scheduling:} We establish that TEF1 allocations exist under the house allocation settings when \(T = 3\) and can be computed in polynomial time, while $\alpha$-TEFX allocations have guaranteed lower bounds in most settings. However, under various settings, we demonstrate that exact TEFX yields mixed results and that TMMS allocations may not exist.
    \item \textbf{Allocations with Scheduling:} We show that scheduling improves fairness achievability in specific cases—guaranteeing TEF1 with a buffer of at least $n/2$ rounds with identical days. But we prove that scheduling alone is insufficient to guarantee the general existence of either TEFX or TMMS allocations.
\end{itemize}

\begin{table*}[htbp]
    \small
  \centering
  \caption{Possibilities, impossibilities, and open questions in temporal fair division. \cmark indicates a possibility result; \xmark indicates an impossibility; \qmark indicates an open question}
  \label{tab:tef}
    \begin{tabular}{|c|c|c|c|c|}
      \hline
       & TEF1 & TEFX & \(\alpha\)-TEFX & TMMS \\
      \hline
      General Settings & \makecell{\xmark for \(n \geq 3\) \\ \cmark for \(n = 2\) \\ \citep{he2019} \\ \cmark for \(T = 2\) \\ \citep{elkind2024}} & \makecell{\xmark \\ \citep{elkind2024}} & \makecell{\cmark \(\frac{\frac{1}{2} \cdot min_{g \in O} v_i(g)}{min_{g \in O} v_i(g) + \frac{1}{2} \cdot max_{g \in O} v_i(g)}\) \\ (\(v_i(g) > 0\)) \\(\textcolor{red}{Theorem~\ref{thm:atefx}})} & \makecell{\xmark \\ (\textcolor{red}{Theorem~\ref{thm:tmmsgeneralsettingR}})} \\
      \hline
      Identical Days & \makecell{\cmark for \(T = 3\) \\ House Allocation Settings \\ (\textcolor{red}{Theorem~\ref{thm:tef1identicaldaysHouse}})} & \makecell{\xmark for \(T \geq 3\) \\ (\textcolor{red}{Theorem~\ref{thm:tefxidenticaldays}}) \\ \qmark for \(T = 2\)} & \makecell{\cmark \(\frac{1}{2}\) for \(n = 2\) \\ (\textcolor{red}{Theorem~\ref{thm:atefxidenticaldays}})} & \makecell{\xmark \\ (\textcolor{red}{Theorem~\ref{thm:tmmsidenticaldays}})} \\
      \hline
      \makecell{Generalized\\Binary Valuation} & \makecell{\cmark \\ \citep{elkind2024}} & \makecell{\cmark for \(n = 2\) \\ (\textcolor{red}{Theorem~\ref{thm:tefxgenbivalue}})} & \makecell{\cmark \(\frac{1}{2}\) \\ (\textcolor{red}{Theorem~\ref{thm:atefxgenbivaluation}})} & \makecell{\cmark \\ \citep{elkind2024}} \\
      \hline
      Identical Valuation & \makecell{\cmark \\ \citep{elkind2024}} & \makecell{\xmark \\ \citep{elkind2024}} & \makecell{\cmark \(\frac{min_{g \in O} v(g)}{max_{g \in O} v(g) + min_{g \in O} v(g)}\) \\ (\(v(g) \in R_{\{0\} \cup [1, \infty)}\)) \\ (\textcolor{red}{Theorem~\ref{thm:atefxidenticalvaluation}})} & \makecell{\xmark \\ (\textcolor{red}{Theorem~\ref{thm:tmmsgeneralsettingR}})} \\
      \hline
      Bi-valued Goods & \makecell{\qmark} & \makecell{\xmark \\ \citep{elkind2024}} & \makecell{\cmark \(\frac{a}{b}\) \\ \(b \geq a\) \\ (\textcolor{red}{Theorem~\ref{thm:atefxbivalued}})} & \makecell{\xmark \\ (\textcolor{red}{Theorem~\ref{thm:tmmsgeneralsettingR}})} \\
      \hline
    \end{tabular}
\end{table*}
\begin{table*}[ht]
\small
  \centering
  \caption{Possibilities, impossibilities, and open questions in temporal fair division with scheduling. \cmark indicates a possibility result; \xmark indicates an impossibility; \qmark indicates an open question; \cmark* indicates this result is implied by the positive result without scheduling.}
  \label{tab:tefR}
    \begin{tabular}{|c|c|c|c|}
      \hline
       & TEF1 & TEFX & TMMS \\
      \hline
      General Settings & \makecell{\cmark* \\ \(r \geq \frac{T}{2}\) \\ \citep{elkind2024}} & \makecell{\xmark \\ (\textcolor{red}{Theorem~\ref{thm:tefxgeneralsettingsR}})} & \makecell{\xmark \\ (\textcolor{red}{Theorem~\ref{thm:tmmsgeneralsettingR}})} \\
      \hline
      Identical Days & \makecell{\cmark \\ \(r \geq \frac{n}{2}\) \\ (\textcolor{red}{Theorem~\ref{thm:tef1identicalDaysRn/2}})} & \makecell{\qmark} & \makecell{\qmark} \\
      \hline
      \makecell{Generalized\\Binary Valuation} & \makecell{\cmark* \\ \citep{elkind2024}} & \makecell{\qmark} & \makecell{\cmark* \\ \citep{elkind2024}} \\
      \hline
      Identical Valuation & \makecell{\cmark* \\ \citep{elkind2024}} & \makecell{\xmark \\ (\textcolor{red}{Theorem~\ref{thm:tefxgeneralsettingsR}})} & \makecell{\xmark \\ (\textcolor{red}{Theorem~\ref{thm:tmmsgeneralsettingR}})} \\
      \hline
      Bi-valued Goods & \makecell{\qmark} & \makecell{\xmark \\ (\textcolor{red}{Theorem~\ref{thm:tefxgeneralsettingsR}})} & \makecell{\xmark \\ (\textcolor{red}{Theorem~\ref{thm:tmmsgeneralsettingR}})} \\
      \hline
    \end{tabular}
\end{table*}
\subsection{Related Works}
Our work is related to online fair division \citep{aleksandrov2015,benade2024,aleksandrov2017,wang2026}, where some (possibly zero) items arrive in consecutive rounds. We study a stronger version of it, known as temporal fair division, in which fairness must hold at every round \citep{he2019,elkind2024,cookson2024}. \cite{he2019} showed that maintaining EF1 for goods is impossible in an uninformed setting without reallocating past items, and proposed a reallocation-based algorithm for two agents. However, they provided a polynomial-time algorithm, where at each round, the allocation is EF1 in the informed setting with two agents. \cite{elkind2024} formalized temporal fairness for goods and chores, proving that TEFX allocations may not exist and providing TEF1 algorithms for two agents with chores or restricted settings: two item types, generalized binary valuations, and unimodal preferences. Finally, \cite{cookson2024} explored SD-EF1, EF1, and PROP1, focusing on achieving both temporal and daily fairness. They demonstrated that, while per-round SD-EF1 is generally impossible, under the identical days setting, it is always attainable, guaranteeing a final SD-PROP1 allocation.

\section{Preliminaries}
First, \(\forall k \in \mathbb{N}\), denote \([k] := \{1, \ldots, k\}\).
We study temporal fair division under an informed online fair division setting, denoted by the tuple \(\mathcal{I} = \left( N, T, \left\{ O_t \right\}_{t \in [T]}, v = (v_1, \ldots, v_n), r \right)\).

We have a set of \(N = [n]\) agents, and a set of items \(\left\{ O_t \right\}\) that arrive in round \(t \in [T]\), which are to be allocated to the agents.
Denote \(O^{t} = \bigcup_{i=1}^{t} O_i\) as the cumulative set of goods up to round \(t\), and \(O = O^{T} = \bigcup_{i=1}^{T} O_i\) as the set of all goods.
For each agent, there is an additive valuation function \(v_i: 2^{O} \longrightarrow \mathbb{R}\) over the items, where for \(S \subseteq O\), \(v_i(S) = \sum_{g \in S} v_i(\{g\})\). We slightly abuse the notation and assume that \(v_i(g) = v_i(\{g\})\). In addition, this work focuses on goods allocation, i.e. \(\forall g \in O, \forall i \in N, v_i(g) \geq 0\).

An allocation \(A = (A_1, \dots, A_n)\) is defined as an ordered partial partition of \(O\), where \(A_i\) is the goods allocated to agent \(i\), such that \(\forall i, j \in [N]\) where \(i \neq j\), \(A_i \cap A_j = \emptyset\) and \(\bigcup_{i \in \mathcal{N}} A_i = O\).

In addition, under the online fair division setting, denote \(A^t = (A^t_1, \dots, A^t_n)\) as the allocation from round \(1\) to after round \(t\), that is \(A^t_i = A^{t-1}_i \cup (A_i \cap O^t)\).

In Section~\ref{woschedule}, we assume the scheduling buffer \(r = 1\). We defer the detailed definition of \(r\) to Section~\ref{wschedule}.

To formally evaluate allocations within our temporal framework, we build upon the standard fairness taxonomy established in the static fair division literature \citep{amanatidis2023}. The ideal benchmark is exact Envy-Freeness (EF) \citep{varian1974}, which dictates that no agent strictly prefers another agent's bundle to their own. However, when goods are indivisible, exact EF is notoriously impossible to guarantee in all instances.

This inherent limitation necessitates practical relaxations. The most common baseline is Envy-Freeness up to One Good (EF1) \citep{lipton2004,budish2011}, which permits envy provided it vanishes upon the hypothetical removal of the most valuable item from the envied agent's bundle. A strictly stronger and more desirable standard is Envy-Freeness up to Any Good (EFX) \citep{caragiannis2019,gourves2014}, requiring the envy to dissipate even if the least valuable item is removed. Because the existence of exact EFX remains a profound open problem for $n > 3$ agents, the literature frequently relies on multiplicative approximations, denoted as $\alpha$-EFX \citep{plaut2017}. Under $\alpha$-EFX, an agent’s valuation of their own bundle must be at least a fraction $\alpha \in (0, 1]$ of their valuation of the envied bundle after removing any single item. Alongside these envy-based metrics, we also evaluate Maximin Share (MMS) \citep{budish2011}, a threshold-based guarantee ensuring that each agent receives a bundle worth at least what they could secure if they acted as the divider and received the worst resulting share.

The transition from static to dynamic environments profoundly disrupts the feasibility of these guarantees. In the temporal setting, we denote the perpetual enforcement of these properties by prefixing them with "T" (i.e., TEF1, TEFX, $\alpha$-TEFX, TMMS), dictating that the respective fairness condition must be satisfied cumulatively at every discrete round $t \in [T]$ \citep{elkind2024}.

This temporal requirement fundamentally alters the landscape of existence bounds. As our subsequent analysis will demonstrate, the relentless need to maintain immediate, round-by-round equity causes exact TEFX and TMMS to frequently collapse into impossibility across multiple valuation domains. Consequently, establishing robust $\alpha$-TEFX approximations serves as our primary mathematical vehicle for guaranteeing structured equity in these highly constrained temporal environments.

\begin{definition} [Envy-Freeness (EF)] 
An allocation \(A = (A_1, \dots, A_n)\) is envy-free (EF) if \(\forall i, j \in N, v_i(A_i) \geq v_i(A_j)\).
\end{definition}

\begin{definition} [Temporal Envy-Freeness (TEF)] 
For any \(t \in [T]\), an allocation \(A^t = (A^t_1, \dots, A^t_n)\) is temporal envy-free (TEF) if \(\forall t' \leq t, A^{t'}\) is EF.
\end{definition}

\begin{definition} [Envy-Freeness Up to One Good (EF1)] 
An allocation \(A = (A_1, \dots, A_n)\) is envy-free up to one good (EF1) if \(\forall i, j \in N, \exists g \in A_j\) s.t. \(v_i(A_i) \geq v_i(A_j \backslash \{g\})\).
 \end{definition}

\begin{definition} [Temporal Envy-Freeness Up to One Good (TEF1)]
For any \(t \in [T]\), an allocation \(A^t = (A^t_1, \dots, A^t_n)\) is temporal envy-free up to one good (TEF1) if \(\forall t' \leq t, A^{t'}\) is EF1.
\end{definition}

\begin{definition} [Envy-Freeness Up to Any Good (EFX)]
An allocation \(A = (A_1, \dots, A_n)\) is envy-free up to any good (EFX) if \(\forall i, j \in N, \forall g \in A_j\) s.t. \(v_i(A_i) \geq v_i(A_j \backslash \{g\})\).
\end{definition}

\begin{definition} [Temporal Envy-Freeness Up to Any Good (TEFX)]
For any \(t \in [T]\), an allocation \(A^t = (A^t_1, \dots, A^t_n)\) is temporal envy-free up to any good (TEFX) if \(\forall t' \leq t, A^{t'}\) is EFX.
\end{definition}

\begin{definition} [Approximately Envy-Freeness Up to Any Good (\(\alpha\)-EFX)]
Let \(\alpha \in (0, 1]\). An allocation \(A = (A_1, \dots, A_n)\) is approximately envy-free up to any good (\(\alpha\)-EFX) if \(\forall i, j \in N, \forall g \in A_j\) s.t. \(v_i(A_i) \geq \alpha \cdot v_i(A_j \backslash \{g\})\).
\end{definition}

\begin{definition} [Approximately Temporal Envy-Freeness Up to Any Good (\(\alpha\)-TEFX)]
For any \(t \in [T]\), an allocation \(A^t = (A^t_1, \dots, A^t_n)\) is approximately temporal envy-free up to any good (\(\alpha\)-TEFX) if \(\forall t' \leq t, A^{t'}\) is \(\alpha\)-EFX.
\end{definition}

\begin{definition} [Maximin Share Fairness (MMS)]
Denote \(M_n(O)\) to be the collection of all possible allocations of the goods in \(O\) to \(n\) agents. An allocation \(A = (A_1, \dots, A_n)\) is maximin share fair (MMS) if \(\forall i \in N\), \(v_i(A_i) \geq \mu_i^n(O) = \max_{B \in M_n(O)} \min_{S \in B} v_i(S)\).
\end{definition}

\begin{definition}[Temporal Maximin Share Fairness (TMMS)]
For any \(t \in [T]\), an allocation \(A^t = (A^t_1, \dots, A^t_n)\) is temporal maximin share fair (TMMS) if \(\forall t' \leq t, A^{t'}\) is MMS.
\end{definition}

Having established the foundational metrics of temporal fairness, we now define the specific operational constraints under which we will evaluate these properties. To systematically examine the boundaries of temporal fair division, we analyze several structured settings, beginning with the Identical Days environment \citep{igarashi2024b}.

Intuitively, this setting dictates that the exact same set of items—with identical valuation profiles—arrives at every discrete time step. While mathematically rigid, this structure directly models pervasive real-world allocation paradigms characterized by strict periodicity. For example, in cloud computing and network bandwidth distribution, a static pool of server resources or time-slots is continuously provisioned and allocated in recurring, identical cycles \citep{baruah1995}.
\begin{definition}[Identical Days]
    A temporal fair division instance is under the identical days setting iff \(\forall t, t' \in [T]\), there is a bijection \(f_{t, t'}: O_t \Rightarrow O_{t'}\) s.t. \(\forall i \in [n], \forall g \in O_t, v_i(g) = v_i(f_{t, t'}(g))\).
\end{definition}

The second environment we investigate is the Generalized Binary Valuation setting \citep{halpern2020}. Under this constraint, agents exhibit strictly dichotomous preferences: an agent values an arriving good at either zero or a fixed positive utility $b$. This mathematical restriction effectively models practical allocation scenarios characterized by strict, needs-based utility. In many real-world contexts, individuals possess highly polarized preferences—they either require a specific resource and derive a standard baseline benefit from it, or they find it entirely useless \citep{camacho2023}. By reducing the valuation space to these stark binary extremes, we can isolate the algorithmic friction caused by competing absolute needs, without the noise of marginal-utility differences.
\begin{definition}[Generalized Binary Valuation]
    A temporal fair division instance is under the generalized binary valuation setting iff \(\forall t \in [T], \forall g \in O_t, \forall i \in [n], v_i(g) \in \{0, b\}\) where \(b \in \mathbb{R}^+\).
\end{definition}

Building upon the dichotomy of generalized binary preferences, the third environment we investigate is the Bi-valued Valuation setting \citep{amanatidis2021}. In this domain, the valuation space is restricted to exactly two strictly positive outcomes, $a$ and $b$. While this retains the structural simplicity of polarized preferences, it captures a more nuanced economic reality. In many practical scenarios, agents possess strong, categorical preferences for high-tier resources, yet they still derive a tangible, baseline utility from secondary or less-preferred goods, rather than viewing them as entirely worthless. This setting allows us to examine the resilience of temporal fairness guarantees when all items are universally desirable, but subject to strict quality or preference tiers.
\begin{definition}[Bi-valued Valuation]
    A temporal fair division instance is under the bi-valued valuation setting iff \(\forall t \in [T], \forall g \in O_t, \forall i \in [n], v_i(g) \in \{b_1, b_2\}\) where \(b_1, b_2 \in \mathbb{R}^+\).
\end{definition}

The fourth environment we explore is the Identical Valuation setting \citep{plaut2017}. In this scenario, every agent assigns the exact same utility to any given item. This strict symmetry accurately models the distribution of universal commodities, standard-issue resources, or fungible assets, such as monetary grants, where subjective differences in preference are non-existent and allocations are purely driven by the objective value of the goods.
\begin{definition}[Identical Valuation]
    A temporal fair division instance is under the identical valuation setting iff \(\forall t \in [T], \forall g \in O_t, \forall i, j \in [n], v_i(g) = v_j(g)\).
\end{definition}

The final environment we explore is the House Allocation setting \citep{gan2019}. In this highly structured domain, the number of arriving goods exactly matches the number of agents at each time step, and the mechanism is strictly constrained to allocate exactly one item to every agent per round. This framework perfectly models assignment problems governed by rigid quotas or capacity constraints—such as assigning daily shifts to employees or allocating individual dormitories—and has been extensively studied to pursue economic efficiency \citep{dai2024}.
\begin{definition}[House Allocation]
    A temporal fair division instance is under the house allocation setting iff \(\forall t \in [T], |O_t| = n\), and \(\forall i \in N\), each agent must be allocated exactly one good.
\end{definition}
\vspace{-1em}
\section{Commonly Used Fair Division Algorithms}
In this section, we lay the methodological foundation for our work by detailing several prominent algorithmic frameworks in the fair division literature. Rather than merely listing these procedures, we examine their underlying mechanics—such as envy-cycle breaking, greedy sequential selection, and localized bundle swapping—to illustrate how they enforce strict equity constraints. We begin by reviewing classic offline mechanisms, namely Envy-Cycle Elimination (ECE), its strategic refinement MAX-ECE, and the Round-Robin (RR) protocol, which are celebrated for securing static fairness guarantees like EF1 and EFX. Subsequently, we transition to the dynamic setting by exploring Envy Balancing (EB), an algorithm explicitly designed to maintain temporal fairness (TEF1) across consecutive rounds. Together, these established protocols serve as the essential algorithmic building blocks for constructing and analyzing our novel sliding window solutions.

\subsection{Preliminaries}
A fair division instance is denoted by the tuple \(\mathcal{I} = \left( N, M, v = (v_1, \ldots, v_n)\right)\), where \(M\) is the set of items with \(m = |M|\) to be allocated to the set of agents \(N\) with \(n = |N|\) and valuation profile \(v\).

\subsection{Envy-Cycle Elimination}
Our first algorithmic building block is the classic \textbf{\textit{Envy-Cycle Elimination}} (ECE) algorithm introduced by \citet{lipton2004}. The core principle of ECE is to iteratively allocate the next available good to an \textit{unenvied} agent. Because the receiving agent is currently not envied by anyone, giving them one additional item ensures that any newly created envy is bounded by exactly that single item, thereby preserving the EF1 property.

However, during the allocation process, a state may arise where every agent is envied by at least one other agent. To resolve this and guarantee the existence of an unenvied agent, the algorithm maintains a directed \textit{envy-graph}. In this graph, vertices represent agents, and a directed edge from $i$ to $j$ exists if and only if agent $i$ envies agent $j$ (i.e., $v_i(A_i) < v_i(A_j)$). If every agent is envied, every node in the graph has an in-degree of at least one, meaning the graph must contain a directed cycle (an \textit{envy-cycle}).

The algorithm systematically breaks this cycle by performing a cyclic shift of bundles along the cycle's edges: each agent receives the bundle of the agent they currently envy. This exchange strictly improves the utility of every agent involved in the cycle without violating the existing EF1 guarantees. By repeatedly identifying and eliminating these cycles, the envy-graph is eventually reduced to a Directed Acyclic Graph (DAG). Every DAG must contain at least one source node (an agent with an in-degree of zero), which perfectly corresponds to an unenvied agent ready to receive the next good.

\begin{theorem}
    \label{thm:ECE1}
    ECE returns an EF1 allocation and runs in polynomial time \citep{lipton2004}.
\end{theorem}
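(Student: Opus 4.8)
The plan is to prove the invariant that, throughout the execution of ECE, the current partial allocation is EF1, and then handle termination and running time separately. The proof is by induction on the number of goods allocated so far. The base case is the empty allocation, which is trivially EF1. For the inductive step I would analyze the two kinds of updates ECE performs between consecutive iterations of the outer loop: (i) the cyclic reassignment of bundles used to eliminate an envy cycle, and (ii) handing the next good to an agent that no one envies.

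For the cycle-elimination step, fix the envy graph in which there is an arc $i \to j$ whenever $i$ envies $j$. If there is no source (in-degree-zero vertex), then following arcs backwards from any vertex produces a repeated vertex and hence a directed cycle $a_1 \to a_2 \to \cdots \to a_k \to a_1$; ECE reassigns to each $a_\ell$ the bundle previously held by $a_{\ell+1}$. Since $a_\ell$ envied $a_{\ell+1}$, it now holds a bundle it strictly prefers to its old one, while the multiset of all bundles — and in particular all bundles held by agents outside the cycle — is unchanged. Hence for every agent the value of its new bundle is at least that of its old bundle, so the EF1 guarantee it already had toward every bundle in the (fixed) multiset still holds, and EF1 is preserved. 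For step (ii), let $i$ be the chosen unenvied agent and $g$ the new good, so $A_i' = A_i \cup \{g\}$. Agent $i$'s envy toward anyone only weakens. For any other agent $j$, since $j$ did not envy $i$ we have $v_j(A_j) \ge v_j(A_i) = v_j(A_i' \setminus \{g\})$, so deleting $g$ from $i$'s bundle removes $j$'s envy toward $i$; envy between pairs not involving $i$ is untouched. Thus the invariant is maintained, and when the outer loop finishes, all of $M$ has been allocated by an EF1 allocation.

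For running time, the only nontrivial point — and what I expect to be the \textbf{main obstacle} — is bounding the number of cyclic reassignments performed before a source appears. Here I would show that each cyclic reassignment strictly decreases the number of arcs in the envy graph: arcs among agents outside the cycle are unchanged; the number of arcs from outside agents into the cycle is unchanged, because outside agents' bundles are fixed and the multiset of bundles held by cycle-agents is merely permuted; arcs from cycle-agents to outside agents can only disappear, since each cycle-agent's bundle value strictly increased; and among the cycle-agents themselves, each $a_\ell$ now holds $B_{a_{\ell+1}}$ with $v_{a_\ell}(B_{a_{\ell+1}}) > v_{a_\ell}(B_{a_\ell})$, so the set of cycle-bundles it values strictly above its own shrinks by at least the bundle $B_{a_{\ell+1}}$. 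Since the envy graph has at most $n(n-1)$ arcs, at most $O(n^2)$ reassignments occur per round, each implementable in $O(n^2)$ time, over $m$ rounds; together with the $O(n)$ work to locate a source or process an arriving good, the total running time is polynomial. (A simpler but weaker alternative that only gives a finite bound is to observe that $\sum_i v_i(A_i)$ strictly increases with each reassignment while the bundle multiset stays fixed.)
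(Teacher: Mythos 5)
Your proof is correct: the paper itself offers no proof of this theorem, citing it directly from Lipton et al.\ (2004), and your argument — the EF1 invariant maintained through both the cycle-swap and the give-to-an-unenvied-agent steps, plus the edge-count potential bounding the number of cycle resolutions — is exactly the standard argument from that source. The only point worth a remark is that after a cycle swap an agent's comparison set gains its own former bundle, but since each cycle agent strictly prefers its new bundle to its old one this comparison is vacuous, so your invariant goes through.
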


Building upon this foundation, we consider a strategic refinement of the ECE framework known as \textbf{\textit{MAX-ECE}}. The primary distinction lies in the item selection rule: rather than allocating an arbitrary available good to the unenvied agent, MAX-ECE mandates a greedy approach where the agent receives their most preferred available good (i.e., the good that maximizes their marginal utility).

This deterministic selection mechanism inherently preserves the baseline EF1 guarantee established in Theorem~\ref{thm:ECE1}. More importantly, by ensuring that high-value items are distributed earlier in the allocation process, MAX-ECE yields strictly stronger fairness guarantees. As demonstrated in the subsequent theorems, this relatively simple modification allows the algorithm to achieve exact EFX for identical valuations, and a bounded approximation of EFX in the general case.

\begin{theorem}
    \label{thm:ECE2}
    MAX-ECE returns an EFX allocation for two agents if both agents have the same valuation profile \citep{plaut2017}.
\end{theorem}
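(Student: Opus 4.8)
The plan is to exploit the special structure that two agents with identical valuations impose on MAX-ECE. First I would record three structural observations. (1) Since the two agents share one valuation $v$, agent $i$ is envied if and only if $v(A_i) > v(A_j)$ for the other agent $j$; in particular the two agents can never envy each other simultaneously, so the envy-cycle swap step of MAX-ECE is never triggered, and consequently each bundle's value is nondecreasing throughout the execution. (2) The unenvied agent selected in each round is always (one of) the agent(s) currently holding a bundle of minimum value. (3) MAX-ECE allocates the goods one at a time in non-increasing order of $v$-value.

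Next I would fix the final allocation $(A_1, A_2)$ and show it is EFX; by symmetry it suffices to show agent $1$ does not EFX-envy $A_2$, and the case $A_2 = \emptyset$ is immediate, so assume $A_2 \neq \emptyset$. Let $g^\star \in A_2$ be the good that MAX-ECE placed into $A_2$ \emph{last} among all goods ever added to $A_2$. By observation (3), every good put into $A_2$ before $g^\star$ has value at least $v(g^\star)$, hence $v(g^\star) = \min_{g \in A_2} v(g)$. By observation (2), at the round $g^\star$ was allocated agent $2$ was unenvied, so writing $A_1^{\circ}, A_2^{\circ}$ for the bundles just before that allocation we have $v(A_2^{\circ}) \le v(A_1^{\circ})$. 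Since no good enters $A_2$ after $g^\star$, we have $A_2 = A_2^{\circ} \cup \{g^\star\}$, and since bundle values are nondecreasing by observation (1), $v(A_1) \ge v(A_1^{\circ}) \ge v(A_2^{\circ}) = v(A_2 \setminus \{g^\star\})$. Combining this with $v(g^\star) = \min_{g \in A_2} v(g)$ gives, for every $g \in A_2$, $v(A_2 \setminus \{g\}) \le v(A_2 \setminus \{g^\star\}) \le v(A_1)$, which is exactly the EFX condition for agent $1$ toward $A_2$ (indeed it holds even when the removed good has value zero).

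The step I expect to be the crux is justifying observation (1): that with two agents no envy cycle can arise, so MAX-ECE reduces to a swap-free greedy and, crucially, each agent's bundle value only increases over time. This monotonicity, together with the fact that $g^\star$ is simultaneously the \emph{last-added} and therefore \emph{least-valued} good of $A_2$, is what lets us "freeze" the inequality certified at the moment $g^\star$ was handed out and transport it to the final allocation. The remaining details — ties in good values, the empty-bundle case, and the symmetric argument for agent $2$ — are routine bookkeeping.
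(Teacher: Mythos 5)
Your proof is correct. The paper does not prove Theorem~\ref{thm:ECE2} itself; it simply cites \cite{plaut2017}, and for identical valuations MAX-ECE degenerates into exactly the greedy procedure analyzed there (no envy cycle can form between two agents with the same valuation, so the swap step is vacuous and the algorithm repeatedly hands the highest-valued remaining good to a minimum-value bundle). Your argument --- that the last good placed in the envied bundle is also its least-valued good, and that the unenvied-at-allocation-time inequality survives to the end by monotonicity of bundle values --- is the standard proof of that result and is sound, including the handling of ties, zero-valued goods, and the empty-bundle case.
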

\begin{theorem}
    \label{thm:ECE3}
    MAX-ECE returns a \(\frac{1}{2}\)-EFX allocation \citep{chan2019}.
\end{theorem}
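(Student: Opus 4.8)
The plan is to prove, by induction over the steps of MAX-ECE, the stronger statement that after \emph{every} good-allocation step and \emph{every} cycle-elimination step the current partial allocation is $\frac{1}{2}$-EFX (quantifying, as in the definition of $\alpha$-EFX, only over goods that the comparing agent values strictly positively). The empty allocation satisfies this trivially, and all goods are eventually allocated, so this yields the theorem.

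Two preliminary facts drive the argument. \emph{Monotonicity:} for each fixed agent $k$, the value $v_k(A_k)$ is non-decreasing throughout the run — it increases when $k$ receives a good, is unchanged when another agent receives a good, and in a cycle elimination every agent on the cycle moves to a bundle it strictly preferred (agents off the cycle are unchanged). \emph{Key lemma:} at any point in the run, if $A_k \neq \emptyset$ then $v_k(A_k) \geq v_k(g)$ for every still-unallocated good $g$. To prove it, trace $k$'s current bundle back to the first moment $k$'s bundle became non-empty; this necessarily happened because $k$ \emph{received} a good $h^\ast$ (an agent holding an empty bundle is unenvied, hence is never handed a bundle during cycle elimination), and by the MAX-ECE rule $h^\ast$ was $k$'s favorite among the goods remaining at that time, which included $g$. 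Thus $v_k(h^\ast) \geq v_k(g)$, and since $h^\ast \in A_k$ immediately after that receipt, $v_k(A_k) \geq v_k(h^\ast) \geq v_k(g)$ at that moment; monotonicity carries the inequality to the present.

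With these in hand the inductive step splits into two cases. \emph{Cycle elimination:} the multiset of bundles is unchanged and every agent weakly prefers its new bundle, so every inequality $v_k(A_k) \geq \frac{1}{2} v_k(A_\ell \setminus \{g\})$ survives — the bound held for that bundle in its old location and $v_k(A_k)$ only increased, and if $\ell$ received $k$'s own old bundle then $k$ now holds a bundle it weakly prefers to it. \emph{Allocating $g$ to an unenvied agent $i$:} only pairs $(k,i)$ with $k \neq i$ can be affected, and $i$ being unenvied gives $v_k(A_k) \geq v_k(A_i)$ before the addition. For the removed good $h=g$ this already suffices; for $h \in A_i \setminus \{g\}$ with $v_k(h)>0$ we must show $v_k(A_k) \geq \frac{1}{2}\bigl(v_k(A_i \setminus \{h\}) + v_k(g)\bigr)$, which follows by adding $v_k(A_k) \geq v_k(A_i \setminus \{h\})$ to the key lemma's $v_k(A_k) \geq v_k(g)$. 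If instead $A_k = \emptyset$, then unenviedness forced $v_k(A_i) = 0$ before the addition, so afterwards $A_i$ holds at most one good that $k$ values positively and the restricted $\frac{1}{2}$-EFX condition for $(k,i)$ is vacuous.

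I expect the key lemma to be the main obstacle: the delicate point is that cycle elimination reallocates \emph{whole} bundles, so one cannot naively say ``$A_k$ contains $k$'s favorite good'' — the fix is precisely to track the monotone scalar $v_k(A_k)$ and to pinpoint the step at which $k$'s bundle first left the empty state, which is guaranteed to be a direct receipt governed by the ``maximum valued good'' rule. A secondary subtlety is that the empty-bundle case only closes under the convention that $\alpha$-EFX quantifies over goods the envious agent values strictly positively.
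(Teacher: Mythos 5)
The paper does not actually prove this statement: Theorem~\ref{thm:ECE3} is imported wholesale from \cite{chan2019}, so there is no in-paper argument to compare against. Your proof is a correct, self-contained reconstruction of the standard argument, and its structure is sound: the monotonicity of $v_k(A_k)$ under both good-receipt and cycle-rotation, the key lemma that a non-empty bundle is worth at least any still-unallocated good (proved by locating the first receipt, which cannot be a cycle step since an empty-bundled agent is never envied), and the two-case induction all check out. The one point worth stressing is that your closing caveat is not a technicality but load-bearing: under the paper's own informal definition of $\alpha$-EFX (``the envied bundle minus \emph{any} item''), the theorem is false for MAX-ECE. For instance, with two agents and goods $h, g, x$ valued $(0,5)$, $(3,3)$, $(1,0)$ by agents $1$ and $2$ respectively, a legal run gives agent $2$ the bundle $\{h,g\}$ and agent $1$ the bundle $\{x\}$, and then $v_1(\{x\}) = 1 < \tfrac{1}{2}\, v_1(\{h,g\}\setminus\{h\}) = \tfrac{3}{2}$. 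Your proof correctly goes through only under the convention (standard in \cite{caragiannis2019} and in the source you are reproving) that removal is quantified over goods the envious agent values strictly positively, or under the additional hypothesis that all values are positive; if you present this proof, you should state that convention explicitly rather than rely on the paper's looser phrasing.
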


\subsection{Round-Robin}
In addition to cycle-elimination techniques, we also analyze the \textbf{\textit{Round-Robin}} (RR) protocol. Widely recognized for its foundational role in job scheduling and operating system design \citep{panda2014}, RR has been extensively adapted in the fair division of indivisible goods due to its elegant simplicity and robust theoretical guarantees.

The mechanism operates by establishing a fixed, cyclic ordering over the set of $n$ agents. Allocation proceeds in sequential turns according to this predefined permutation. During their designated turn, an agent greedily selects their most preferred item from the pool of currently available goods (i.e., the item that maximizes their individual utility). Once every agent has made exactly one selection, a single "round" concludes; the sequence then resets to the first agent and repeats. This iterative, cyclic selection process continues until the set of unallocated goods is entirely exhausted. Because agents take turns sequentially, the maximum size difference between any two agents' bundles at any point during the execution is strictly bounded by one.

\begin{theorem}
    \label{thm:RR1}
    RR returns an EF1 allocation and runs in polynomial time \citep{aziz2021}.
\end{theorem}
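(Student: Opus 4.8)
The plan is to show that for every ordered pair of agents $i,j$, agent $i$ does not envy agent $j$ after the removal of at most one good from $A_j$, by exhibiting an injection from (most of) the goods $j$ receives to the goods $i$ receives along which $i$'s value is nondecreasing. Without loss of generality I would take the round-robin order to be $1,2,\dots,n$ and index each agent's picks by the \emph{phase} (pass through the order) in which they occur, so that agent $k$'s $\ell$-th pick is made during phase $\ell$. The single fact driving everything is myopic optimality: when an agent picks, they take the available good of maximum value to themselves, so if a good $g$ is still unallocated at the moment agent $i$ makes their $\ell$-th pick, then $v_i(i\text{'s }\ell\text{-th pick}) \ge v_i(g)$.

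Next I would split on the relative order of $i$ and $j$. If $i < j$, then in every phase $i$ picks strictly before $j$, so $j$'s $\ell$-th pick is still available when $i$ makes their $\ell$-th pick; pairing these and summing over $\ell$ gives $v_i(A_i) \ge v_i(A_j)$, i.e.\ $i$ does not envy $j$ at all. If $i > j$, the equal-index pairing can fail, so instead I would pair $i$'s $\ell$-th pick with $j$'s $(\ell+1)$-th pick: the latter is made in phase $\ell+1$, strictly after $i$'s $\ell$-th pick in phase $\ell$, hence is still available at that moment, giving $v_i(i\text{'s }\ell\text{-th pick}) \ge v_i(j\text{'s }(\ell+1)\text{-th pick})$. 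Summing over $\ell$ yields $v_i(A_i) \ge v_i(A_j \setminus \{g\})$ where $g$ is $j$'s very first pick, which is exactly EF1.

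I would then justify that the correspondences above are well defined by a counting argument: since an agent earlier in the order never receives fewer goods than an agent later in the order, $|A_i| \ge |A_j|$ when $i<j$ and $|A_i| \ge |A_j| - 1$ when $i>j$; in the second case this is precisely what guarantees that the shifted map (from $j$'s $2$nd, $3$rd, \dots picks into $i$'s $1$st, $2$nd, \dots picks) is injective. For the running-time claim, the algorithm performs $|M|$ allocation steps, each a single maximum-value selection over the remaining goods (or an extraction from a per-agent priority queue), which is clearly polynomial in $n$ and $|M|$.

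The only real subtlety is the case $i>j$: one must simultaneously shift the correspondence by one index \emph{and} invoke the counting bound to see that the shifted map still fits inside $A_i$, because $j$'s discarded first pick may well be the single good $i$ values most. Once the phase indexing and the myopic-optimality inequality are in place, the rest is bookkeeping.
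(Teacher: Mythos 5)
Your proof is correct: the phase-indexed pairing with myopic optimality, the one-index shift for the case $i>j$, and the counting bound $|A_i|\ge|A_j|-1$ together give exactly the standard EF1 argument for Round-Robin. The paper does not reprove this result but simply cites it, and your argument matches the classical proof from the cited literature.
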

\begin{observation}
    \label{thm:RR3}
    If agent \(i\) has just been allocated a good by RR, \(\forall j \leq i, |A_i| = |A_j|\) and \(\forall j > i, |A_i| = |A_j| + 1\).
\end{observation}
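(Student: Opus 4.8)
The plan is to prove the claim by a straightforward induction on the number of allocation steps performed by RR. Recall that RR processes agents in rounds: within each round it visits the agents in the fixed order $1, 2, \ldots, n$, and on each visit the current agent receives exactly one good. Hence the hypothesis ``agent $i$ has just been allocated a good'' means precisely that we are partway through (or exactly at the end of) some round in which agents $1, \ldots, i$ have each already received one good and agents $i+1, \ldots, n$ have not yet been visited.

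For the base case, consider the very first allocation step: here $i = 1$, we have $|A_1| = 1$, and $|A_j| = 0 = |A_1| - 1$ for every $j > 1$, so the claim holds. For the inductive step, suppose the claim holds immediately after some agent $i$ receives a good, i.e.\ $|A_j| = |A_i|$ for all $j \le i$ and $|A_j| = |A_i| - 1$ for all $j > i$. Consider the next allocation step. If $i < n$, the next agent served is $i+1$; after agent $i+1$ receives a good, $|A_{i+1}|$ grows from $|A_i| - 1$ to $|A_i|$, so $|A_j| = |A_{i+1}|$ for all $j \le i+1$ and $|A_j| = |A_{i+1}| - 1$ for all $j > i+1$, which is exactly the statement with $i$ replaced by $i+1$. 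If $i = n$, then by the inductive hypothesis all bundles currently have size $|A_n|$, and RR begins a fresh round by serving agent $1$; after this step $|A_1| = |A_n| + 1$ while $|A_j| = |A_n| = |A_1| - 1$ for all $j > 1$, again matching the claim with picker $1$. This closes the induction.

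The argument is entirely routine; the only point that deserves any care is the wrap-around case $i = n$, where one must explicitly invoke the fact that all bundles have equal size at the end of a round before applying the invariant to the first pick of the next round. I would state the round structure of RR (or cite it from the pseudocode in the appendix) before running the induction so that the phrase ``has just been allocated a good'' is unambiguous.
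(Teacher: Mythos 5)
Your induction is correct and is exactly the routine formalization of the round structure of RR; the paper states this as an observation without any proof, treating it as immediate from the pseudocode, so there is nothing in the paper's argument that your proposal diverges from. The only care you flag, the wrap-around at $i = n$ where all bundles are equal before agent $1$ starts a new round, is handled correctly.
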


\section{Temporal Fair Division without scheduling} \label{woschedule}
\subsection{TEF1} \label{tef1}
The existence of TEF1 for goods under a general setting has been addressed by \cite{he2019}; thus, we focus on TEF1 under various other settings. Some restricted settings for goods have been studied by \cite{elkind2024}. In particular, they proposed polynomial-time algorithms for two types of goods under the generalized binary valuations setting, assuming agents have unimodal preferences over the goods. Additionally, they studied the case when multiple goods arrive in each round, and showed that when there are \(T = 2\) rounds, by running two Round-Robin algorithms, with one of the algorithms having its allocation sequence reversed, we achieve TEF1.

\subsubsection{Identical Days}
In the identical days setting with the house allocation setting, we can compute a TEF1 allocation in polynomial time when there are three rounds. We apply the result from \cite{elkind2024}, as mentioned above, to compute a TEF1 allocation in two rounds. We show that applying this result in rounds \(2\) and \(3\) in our setting yields a TEF1 allocation.

\begin{theorem}
    \label{thm:tef1identicaldaysHouse}
    A TEF1 allocation can be computed in polynomial time under the house allocation setting and identical days setting when \(T = 3\).
\end{theorem}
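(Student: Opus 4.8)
The plan is to construct the allocation in three Round-Robin-style passes and verify temporal EF1 after each of the three rounds. Round $1$ is free: under the house allocation setting every agent receives exactly one good in each round, so $A^1$ is a matching and hence automatically EF1; I would fix it by an arbitrary Round-Robin pass on $O_1$ (this also gives the useful within-round inequality that each agent weakly prefers its own round-$1$ good to any good picked later in round $1$). For rounds $2$ and $3$ I would invoke the two-round TEF1 procedure of \cite{elkind2024}, whose mechanism is that the next round's Round-Robin order is ``reversed'' relative to the agents' current standing, i.e.\ currently-worse-off agents pick first. Concretely: run a Round-Robin pass on $O_2$ in the order that ranks agents poorest-first by their round-$1$ utility (for a forward round $1$ this is exactly the reversed order, so rounds $1$--$2$ are a genuine two-round Elkind pair), and then a Round-Robin pass on $O_3$ in the order that ranks agents poorest-first by their utility after round $2$.

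The verification has three parts. Part~(i): $A^1$ is EF1, as noted. Part~(ii): $A^2$ is EF1 because rounds $1$--$2$ form a two-round Elkind instance, so the cited two-round theorem applies verbatim. Part~(iii), the heart of the proof: show $A^3$ is EF1. For each ordered pair $(i,j)$ I would exhibit a single good of $A^3_j$ whose removal kills $i$'s envy, splitting on whether $i$ picks before or after $j$ in the round-$1$ and round-$2$ passes. In the ``aligned'' subcase the only source of envy is the round-$2$ pass, and it is absorbed by deleting $j$'s round-$2$ good, using the within-round Round-Robin inequality together with Part~(ii). In the ``crossed'' subcase $j$ picked one of $i$'s high-value goods early in a round; here the identical-days hypothesis is essential — since $O_1,O_2,O_3$ are value-isomorphic, an agent that finishes a round comparatively poor is guaranteed, when it picks first/early in the next pass, to secure a good whose value is at least the maximum value present in each earlier round, so the round-$3$ corrective pass provably closes the gap, and one checks that the surplus $j$ accumulated in its early-pick rounds is dominated, up to one good, by $i$'s own picks plus $i$'s compensating round-$3$ pick.

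I expect Part~(iii) to be the main obstacle, for two reasons. First, one must verify that the round-$3$ pass restores EF1 \emph{simultaneously} against all $n-1$ opponents, not merely on average, which is exactly where the identical-days structure and the two-round lemma have to be combined most carefully. Second, the naive instantiations fail and must be ruled out: repeating round $1$'s order in round $3$, or reversing round $2$'s order, both break on small identical-days house-allocation instances in which an agent is shut out of a scarce, high-value item type in two of the three rounds and ends up far below EF1 — so the round-$3$ order must genuinely track the running utilities, and identifying and justifying the right corrective order is the delicate point. All three passes are Round-Robin, so once Part~(iii) is established the construction is clearly polynomial time.
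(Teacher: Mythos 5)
Your proposal diverges from the paper at the crucial point and leaves a genuine gap exactly there. The paper does not run three separate per-round Round-Robin passes. Instead it exploits \(T=3\) by collapsing rounds \(1\) and \(2\) into a single batch: it runs RR once on \(O^2 = O_1 \cup O_2\) (so each agent gets two goods), then \emph{retroactively} decides which of each agent's two goods counts as its round-\(1\) good. The feasibility of that decomposition — choosing one good per agent so that the chosen goods are exactly \(O_1\), as the house-allocation setting demands — is the paper's key lemma, proved by showing a certain graph on \(O^2\) (edges join the two goods of one agent, and the two identical copies of a good type) is bipartite, hence admits the required selection. Round \(1\) is then trivially EF1 (one good each), round \(2\) is EF1 because \(A^2\) is literally an RR output, and round \(3\) is handled by applying the two-round forward-then-reversed RR theorem of Elkind et al.\ to the pair \((O^2, O_3)\). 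The whole instance is thus reduced to a genuine two-round instance, and no new envy analysis is needed.

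Your Part~(iii) is where this matters. You propose a third RR pass on \(O_3\) in a ``poorest-first by current utility'' order and assert that it ``provably closes the gap,'' but you never prove it; the case analysis is described, not carried out, and the central claims (``the surplus \(j\) accumulated \dots is dominated, up to one good, by \dots'') are exactly the statements that need proof. There is no general principle that a poorest-first RR pass layered on an EF1 allocation preserves EF1 — note that the paper explicitly leaves TEF1 under identical days open for general \(T>3\); if your round-\(3\) corrective pass worked as described, it would iterate and resolve that open problem, which is strong evidence the step is not routine. The two-round theorem you cite is specifically about a fixed order and its reversal over two rounds of \emph{cumulative} allocation; it does not license a third pass, and your round-\(3\) order is not the reversal of anything fixed. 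To repair your argument you would either have to supply the full envy accounting for the third pass (using the house-allocation and identical-days structure in a way you have only gestured at), or adopt the paper's reduction: one RR pass over \(O_1\cup O_2\), a matching argument to split it into two valid house-allocation rounds, and the two-round theorem for \(O_3\).
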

\begin{algorithm}
    \caption{Returns a TEF1 allocation}
    \label{alg:tef1identicaldaysHouse}
    \textbf{Input}: Arbitrary temporal fair division instance \(\mathcal{I} = \left( N, T, \left\{ O_t \right\}_{t \in [T]}, v = (v_1, \ldots, v_n), r \right)\)\\
    \textbf{Output}: TEF1 allocation \(\mathcal{A}\)
\begin{algorithmic}[1]
    \State $A^2 \gets$ RR$(N, O^2, v)$
    \For{$i = 1$ to $n$}
        \State $g_1 \gets A^2_i[1]$
        \State $g_2 \gets A^2_i[2]$
        \If{$g_1 \in A^1$ or $g_2 \in A^2$}
            \State $A^1_i \gets A^1_i \cup \{g_2\}$
        \Else
            \State $A^1_i \gets A^1_i \cup \{g_1\}$
        \EndIf
    \EndFor
    \State Reverse $N$
    \State $A^3 \gets A^3 \cup$ RR$(N, O_3, v)$
\end{algorithmic}
\end{algorithm}
\begin{proof}
    The polynomial runtime of Algorithm~\ref{alg:tef1identicaldaysHouse} is easy to verify, given that there is only one \textbf{For} loop, which runs in linear time, RR runs in polynomial time (Theorem~\ref{thm:RR1}), and the other operations run in polynomial time. Thus, we focus on proving correctness.
    
    Intuitively, we run RR in round \(2\), including the goods that arrive in round \(1\). Then, we run another RR, but with the allocation sequence reversed in round \(3\).
    
    In round \(2\), we run RR. Since it returns an EF1 allocation (Theorem~\ref{thm:RR1}), it remains to show that \(A^1\) is EF1 as well. Since in round \(1\), only one good arrives, it is trivially EF1. For each agent \(i\), \(A^2_i = \{g_{i, 1}, g_{i, 2}\}\), and one of the good should be from round \(1\). We show that there exists a way to select exactly one good from each agent, such that the set of these goods equals \(O_1\); thus, we select this good for each agent in round \(1\).
    
    Consider the following graph \(G = (V, E)\). The set of vertices \(V\) equals \(O^2\), and there exists an edge between any arbitrary two goods \(g_1\) and \(g_2\) in the edge set \(E\), if and only if \(\exists i \in N\) s.t. \(A^2_i = \{g_1, g_2\}\), or \(g_1\) is an identical copy of \(g_2\), which arrives on different rounds. We remove all repeated edges that occur when both of these conditions are satisfied.
    \begin{lemma}
        \label{lemma:tef1identicaldaysHouse1}
        \(G\) is bipartite if we consider identical copies on each side.
    \end{lemma}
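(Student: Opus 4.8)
The plan is to show that the graph $G$ constructed from the Round-Robin allocation $A^2$ over $O^2 = O_1 \cup O_2$ has no odd cycles, after we treat the two identical copies of each good-type as lying on opposite sides of a tentative bipartition. First I would fix the natural candidate bipartition: put the $n$ goods of $O_1$ on the left side $L$ and the $n$ goods of $O_2$ on the right side $R$. The copy-edges (an edge joining $g \in O_1$ to its identical twin $f_{1,2}(g) \in O_2$) then automatically go between $L$ and $R$, so they never create an odd cycle on their own. It therefore suffices to understand the "RR-edges": the edge $\{g_1,g_2\}$ with $A^2_i=\{g_1,g_2\}$ for some agent $i$. Since RR proceeds in $n$-sized rounds (one pass allocates one good per agent), in the first pass agent $i$ picks some good and in the second pass agent $i$ picks some good; exactly one of the two goods in $A^2_i$ comes from the "first $n$ picks" and the other from the "second $n$ picks" — this is Observation~\ref{thm:RR3} applied at the moment the $n$-th good is allocated. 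The crucial claim I would establish is that the set of first-pass picks and the set of second-pass picks each contain exactly one copy of every good-type (equivalently, the first $n$ picks form a system of distinct copy-classes, and so do the last $n$). This holds because once agent $i$ has taken one copy of a type in pass~1, RR's rule forces every later pick (by $i$ or anyone, since valuations are identical across copies and RR takes a most-valued available good) to respect availability; more directly, since there are exactly two copies of each type and $2n$ goods total over $n$ agents, and no agent is ever handed two copies of the same type by RR (RR always removes the allocated good, and a most-valued available copy is taken — if both copies of a type were still available RR takes one, leaving the other for a future pick by a possibly different agent), the $n$ first-pass picks are $n$ goods with no repeated type, hence one per type, and likewise for the second pass.

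Granting that claim, I would argue as follows. Colour each first-pass pick by the side its type's "$O_1$-representative vs $O_2$-representative" role dictates — but more cleanly: define a 2-colouring $c$ of $V$ by letting $c(g)=0$ if $g$ was allocated in the first pass of RR and $c(g)=1$ if in the second pass, for every good $g\in O^2$ (every good of $O^2$ is allocated by RR since $|O^2|=2n$ and RR runs $2n$ steps; here I must reconcile this colouring with the $L/R$ split — actually the cleanest is: take $c(g)=0$ iff $g$ was picked in the first $n$ RR-steps). Then every RR-edge $\{g_1,g_2\}$ is bichromatic, because $A^2_i$ has one good from each pass. And every copy-edge $\{g, f_{1,2}(g)\}$ is bichromatic, because by the claim the two copies of a type lie in different passes (one in the first $n$ picks, one in the last $n$). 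Hence $c$ is a proper 2-colouring of $G$, so $G$ is bipartite.

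The main obstacle I anticipate is justifying rigorously that the two copies of each good-type always land in different passes of RR, i.e. that no agent and — more subtly — no pair-of-passes interaction ever puts both copies of a type among the first $n$ allocated goods. The worry is a scenario where agent $1$ takes copy~$a$ of the top type in step~1, and then agent $2$ in step~2 takes copy~$b$ of the same top type (since under identical-across-copies valuations it may still be the most-valued available good), which would put both copies in the first pass. I would rule this out by a more careful invariant: after $k\le n$ RR-steps, the multiset of allocated goods contains at most one copy of each type that is "fully consumed" only when... — hmm, this is exactly the delicate point, so instead I would lean on the identical-days bijection plus a direct exchange/counting argument: if both copies of some type $\tau$ were in the first pass, then since there are $n$ types and only $n$ first-pass slots, some other type $\tau'$ has zero copies in the first pass, hence both copies of $\tau'$ are in the second pass; but RR in step $n+1$ (first step of pass~2) gives agent~$1$ a most-valued available good, and by that point agent~$1$ already holds one good, and one can derive a contradiction with RR's greedy choice by comparing $v(\tau)$ and $v(\tau')$ — an agent who could have taken a $\tau'$-copy earlier (it was available throughout pass~1) would have done so if $v(\tau')$ beat what they actually took, and the positions in the RR order pin down the inequalities. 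Making this exchange argument airtight, handling ties, and confirming it interacts correctly with the $L/R$ (identical-copies-on-each-side) convention in the lemma statement is where the real work lies; the rest is bookkeeping.
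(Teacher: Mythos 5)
There is a genuine gap, and it sits exactly where you suspected. Your final argument rests on the claim that the two identical copies of each good type are allocated in different passes of RR (one among the first $n$ picks, one among the last $n$), so that copy-edges are bichromatic under your pass-colouring $c$. This claim is false. Take $n=2$ with two types: type $\tau_1$ valued $10$ by both agents and type $\tau_2$ valued $1$ by both, one copy of each type per round. RR gives agent $1$ the round-$1$ copy of $\tau_1$ and then agent $2$ the round-$2$ copy of $\tau_1$: both copies of $\tau_1$ (and, symmetrically, both copies of $\tau_2$) land in the same pass, so both copy-edges are monochromatic under $c$, even though the graph in this example is a $4$-cycle and is in fact bipartite. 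Your proposed exchange argument cannot repair this, because there is no contradiction to derive: RR behaved perfectly greedily in that run. Your first candidate bipartition ($L=O_1$, $R=O_2$) also fails, since nothing prevents RR from handing one agent two goods both drawn from $O_1$; indeed, the possibility of such bundles is the whole reason the lemma is needed.

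The paper's proof avoids any colouring tied to the RR order and is purely structural. After discarding the degenerate components in which an agent's two goods are copies of each other (these are isolated edges, trivially bipartite), every remaining vertex is incident to exactly one agent-edge (each bundle has size $2$) and exactly one copy-edge (each good has exactly one twin in the other round). Hence the agent-edges and the copy-edges each form a perfect matching on the remaining vertices, every connected component is a cycle alternating between the two matchings, and an alternating cycle necessarily has even length. So $G$ has no odd cycle and is bipartite. To salvage your write-up, replace the pass-colouring with this union-of-two-matchings observation; the dynamics of RR are irrelevant to the lemma.
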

    \begin{proof}
        Recall that \(G\) is bipartite if and only if no odd cycle occurs. Consider the special case where two goods are allocated to the same agent and are identical copies of each other. If we connect all these pairs of goods, the induced subgraph for all these vertices is bipartite. Notice that these two goods are not connected to any other vertices. Thus, we consider \(G\) with all of these vertices and edges removed in the remainder of our proof.
        
        For an arbitrary vertex \(g_1\), it is connected to two different vertices \(g_2\) and \(g_3\) respectively. Without loss of generality, assume \(\exists i \in N\) s.t. \(A^2_i = \{g_1, g_2\}\), and \(g_1\) is an identical copy of \(g_3\), which arrives on different rounds. \(g_2\) and \(g_3\) do not have any edge between them. Since each good only has exactly two identical copies, \(g_2\) and \(g_3\) are not identical copies of each other. In addition, \(g_3 \notin A^2_i\), so there does not exist any agent getting both \(g_2\) and \(g_3\).
        
        Then, we describe the construction for the connected component from \(g_1\). Let there be two sets \(L\) and \(R\). \(g_1\) is connected to \(g_2\) and \(g_3\) by assumption. We add \(g_1\) to \(L\), and \(g_2\), \(g_3\) to \(R\). Denote \(g_4\) to be a vertex that is the identical copy of \(g_2\). \(g_2\) is connected to \(g_1\) and \(g_4\). It can be proven similarly that \(g_1\) and \(g_4\) are not connected. \(g_4\) is added to \(L\). Then, \(g_4\) is connected to \(g_2\) and \(g_5\), a vertex that has an agent with both of these goods allocated to them. \(g_2\) and \(g_5\) are not connected. \(g_5\) is added to \(R\). It is easy to see that if we continue the construction, edges must exist between an element from \(L\) and an element from \(R\). Therefore, \(G\) is bipartite.
    \end{proof}
    By Lemma~\ref{lemma:tef1identicaldaysHouse1}, a bipartite graph for the goods exists; Algorithm~\ref{alg:tef1identicaldaysHouse} must return a valid solution.
    
    Additionally, Theorem 5.1 by \cite{elkind2024} states that by running RR with the allocation sequence (\(1, \dots, n\)) at the first round, and running RR with the allocation sequence (\(n, \dots, 1\)) at the second round, the final allocation is TEF1. This is equivalent to our case, as after round \(2\), our allocation equals the allocation returned by RR despite any changes in the allocation after round \(1\). Hence, our result follows.
\end{proof}

While our result establishes a definitive guarantee for $T = 3$ under the strict constraints of house allocation, the broader question of whether exact TEF1 allocations can be consistently guaranteed for general identical days instances extending beyond two rounds ($T > 2$) remains an open problem.

\subsection {TEFX} \label{tefx}
The existence of TEFX when there are at least two agents is determined to be negative by \cite{elkind2024}. They used a counterexample to show that even when there are two agents with identical valuations, with bi-valued goods, and \(T = 2\), a TEFX allocation may not exist. We then consider TEFX under various settings.

\subsubsection{Identical Days}
\begin{theorem}
\label{thm:tefxidenticaldays}
    A TEFX allocation is not guaranteed under the identical days setting when \(T > 2\), even if there are only two agents with identical valuations.
\end{theorem}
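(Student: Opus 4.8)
The plan is to construct a single identical-days instance with $n=2$ agents sharing one additive valuation and with $T=3$ rounds for which no allocation sequence can be TEFX at every round; the same instance, padded with extra copies of the day, then handles every $T>2$.

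For each round $t\in[T]$ let $O_t=\{b_t,s_t\}$ with common valuation $v(b_t)=M$ and $v(s_t)=1$ for a fixed constant $M>2$ (for instance $M=3$). The bijection $b_t\mapsto b_{t'}$, $s_t\mapsto s_{t'}$ witnesses the identical-days property, and the two agents are interchangeable, so throughout I will track only the unordered pair of bundle values and break ties between the agents by ``without loss of generality''.

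The argument pins down the allocation round by round. First I would show that at round $1$ the only TEFX allocations split $\{b_1,s_1\}$, giving one agent $\{b_1\}$ and the other $\{s_1\}$: handing both goods to one agent is already not TEFX at round $1$, since the other agent envies and deleting the value-$1$ good from the big bundle still leaves value $M>0$. So assume agent $1$ holds $\{b_1\}$. Next, among the four ways to distribute $\{b_2,s_2\}$ in round $2$, I claim the only one preserving TEFX is $s_2\to$ agent $1$ and $b_2\to$ agent $2$, which reaches the balanced state $A^2_1=\{b_1,s_2\}$, $A^2_2=\{s_1,b_2\}$ of equal value $M+1$: any option that gives agent $1$ a second big good, or gives agent $2$ only $s_2$, leaves agent $2$ envying a bundle whose value stays above agent $2$'s even after deleting a value-$M$ good (for example $2M-M=M>2$), and the symmetric options fail for agent $1$ — this is exactly where $M>2$ enters. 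Finally, from this forced state every distribution of $\{b_3,s_3\}$ fails TEFX: whichever agent ends up with two value-$M$ goods holds a bundle that, after deleting the single value-$1$ good alongside them, is still worth $2M$, while the other agent's total is at most $M+2<2M$, so the latter is not EFX towards the former; giving both new goods to one agent is worse still. Hence no TEFX allocation exists at round $3$, contradicting the requirement that TEFX hold at every round, and the theorem follows; for $T>3$, any sequence that is TEFX at every round would in particular be TEFX at rounds $1,2,3$, which is impossible.

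The conceptual content is entirely in the instance design: making the big-to-small value ratio exceed $2$ is what forces the round-$1$ split to be irreparably unbalanced and then forces the unique round-$2$ state that round $3$ cannot extend. The remaining work is the bookkeeping — four cases per round, and for each the EFX test ranging over every good in the envied bundle — and the only point requiring care is that the EFX violations must be exhibited by deleting the value-$1$ good rather than a value-$M$ good. I do not expect any genuine obstacle beyond this routine enumeration.
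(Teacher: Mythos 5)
Your proposal is correct and follows essentially the same round-by-round forcing argument as the paper: pin down the unique TEFX-preserving split in round 1, show round 2 is forced into a balanced state, and verify that all four distributions in round 3 violate EFX (with the violation witnessed by deleting the value-$1$ good). The only difference is that your instance uses two goods per day valued $\{1,M\}$ with $M>2$, whereas the paper uses three goods valued $\{0,1,3\}$ and must separately argue (via a proposition) that the zero-valued good cannot rescue a non-TEFX allocation; your version is a mild simplification of the same construction.
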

\begin{proof}
    Consider the instance with two agents and three goods that arrive in each round \(\forall t \in [T], O_t = \{g_1, g_2, g_3\}\), where agents have identical valuations: \(v(g_1) = 0, v(g_2) = 1\), and \(v(g_3) = 3\).
    \begin{lemma}
    \label{lemma:tefxidenticaldaysd1}
    On \(t = 1\), one agent must obtain the good valued \(0\) and \(1\), and the other agent must obtain the good valued \(3\) to achieve TEFX.
    \end{lemma}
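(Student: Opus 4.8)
The plan is to reduce the statement to a finite case check. Since a TEFX allocation must be EFX after every round, it must in particular be EFX after round $1$, when the only goods present are those of $O_1 = \{g_1, g_2, g_3\}$ with values $v(g_1) = 0$, $v(g_2) = 1$, $v(g_3) = 3$. Hence it suffices to enumerate the partitions of $\{g_1, g_2, g_3\}$ between the two agents — there are only four up to swapping the agents, namely $(\{g_1,g_2,g_3\},\emptyset)$, $(\{g_1,g_2\},\{g_3\})$, $(\{g_1,g_3\},\{g_2\})$, and $(\{g_2,g_3\},\{g_1\})$ — and identify which of them are EFX.

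The structural fact I would lean on is that $g_1$ is worthless. Since EFX only requires that there \emph{exist} a witness whose removal from the envied bundle preserves the envy, it is enough, for each bad partition, to exhibit one such good; and in every bad case that good can be taken to be a good of value at least $1$ (or $g_1$ when the relevant bundle is a singleton). Concretely: in $(\{g_1,g_2,g_3\},\emptyset)$ the empty-handed agent envies, and removing $g_2$ leaves value $3 > 0$, so EFX fails; in $(\{g_1,g_3\},\{g_2\})$ the agent holding $\{g_2\}$ has value $1$ and envies, and removing the zero-valued $g_1$ leaves $\{g_3\}$ of value $3 > 1$, so EFX fails; in $(\{g_2,g_3\},\{g_1\})$ the agent holding $\{g_1\}$ has value $0$ and envies, and removing $g_2$ leaves $\{g_3\}$ of value $3 > 0$, so EFX fails. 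This leaves only $(\{g_1,g_2\},\{g_3\})$, which I would verify directly is EFX: the agent holding $\{g_1,g_2\}$ has value $1$ and envies the other, but removing $g_3$ from $\{g_3\}$ leaves value $0 \le 1$; the agent holding $\{g_3\}$ has value $3 > 1$ and is not envious. Therefore the unique EFX — hence the unique possibly-TEFX — partition at $t = 1$ assigns $\{g_1, g_2\}$ to one agent and $\{g_3\}$ to the other, which is the claim.

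I do not anticipate a genuine obstacle here; the argument is a short finite enumeration. The only point that needs slight care is handling the definition of EFX for an agent holding the empty bundle and using the correct direction of the "for all removed goods" quantifier — since EFX fails as soon as a single witness good survives removal, the zero-valued (or otherwise large) good always serves as that witness in the three bad cases, which keeps the case analysis clean.
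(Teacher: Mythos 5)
Your proposal is correct and takes essentially the same route as the paper: both arguments reduce the claim to an exhaustive check of the four partitions of $\{g_1,g_2,g_3\}$ up to swapping agents, exhibit a witness good in each of the three bad cases (including the zero-valued good for the $(\{g_1,g_3\},\{g_2\})$ case, matching the paper's Case 3), and verify directly that $(\{g_1,g_2\},\{g_3\})$ is EFX.
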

    \begin{proof}
        Without loss of generality, assume the allocation of goods does not make agent \(2\) envy agent \(1\). Then, there are only three other cases.\\
        \textbf{Case 1:} Assume agent \(1\) does not get any good, and agent \(2\) gets the good valued \(0\), \(1\), and \(3\). Agent \(1\) envies agent \(2\), and the removal of any arbitrary good from agent \(2\), for example, the good valued \(0\), does not eliminate the envy of agent \(1\) towards agent \(2\). Therefore, we do not achieve TEFX in this case.\\
        \textbf{Case 2:} Assume agent \(1\) gets the good valued \(0\), and agent \(2\) gets the good valued \(1\) and \(3\). This case is similar to \textbf{Case 1}, where removing any good from agent \(2\) does not eliminate the envy from agent \(1\) towards agent \(2\). Therefore, we do not achieve TEFX in this case.\\
        \textbf{Case 3:} Assume agent \(1\) gets the good valued \(1\), and agent \(2\) gets the good valued \(0\) and \(3\). Agent \(1\) envies agent \(2\), and the removal of the good valued \(0\) from agent \(2\) does not eliminate the envy of agent \(1\) towards agent \(2\). Therefore, we do not achieve TEFX in this case.
        
        This finishes the proof.
\end{proof}
    \begin{table*}[h]
      \centering
      \caption{Lemma~\ref{lemma:tefxidenticaldaysd1} Different allocation illustration in round \(1\)}
      \label{tab:tefxidenticaldays1}
        \begin{tabular}{|c|c|c|}
          \hline
           & Agent 1 & Agent 2 \\
          \hline
          Successful & \(g_1\), \(g_2\) (\(v(A^1_1) = 1\)) & \(g_3\) (\(v(A^1_2) = 3\)) \\
          \hline
          Case 1 & \(\emptyset\) (\(v(A^1_1) = 0\)) & \(g_1\), \(g_2\), \(g_3\) (\(v(A^1_2) = 4\)) \\
          \hline
          Case 2 & \(g_1\) (\(v(A^1_1) = 0\)) & \(g_2\), \(g_3\) (\(v(A^1_2) = 4\)) \\
          \hline
          Case 3 & \(g_2\) (\(v(A^1_1) = 1\)) & \(g_1\), \(g_3\) (\(v(A^1_2) = 3\)) \\
          \hline
        \end{tabular}
    \end{table*}
    
    Without loss of generality, assume agent \(1\) gets the good valued \(0\) and \(1\), and agent \(2\) gets the good valued \(3\).
    \begin{lemma}
        \label{lemma:tefxidenticaldaysd2}
        On \(t = 2\), the good valued \(1\) must be allocated to agent \(2\), and the good valued \(3\) must be allocated to agent \(1\).
    \end{lemma}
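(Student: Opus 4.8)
The plan is to extend the case analysis of Lemma~\ref{lemma:tefxidenticaldaysd1} to round $t=2$. By that lemma, together with the stated choice of labels, we may assume $A^1_1=\{g_1,g_2\}$ with $v(A^1_1)=1$ and $A^1_2=\{g_3\}$ with $v(A^1_2)=3$. Under the identical days setting the three goods arriving in round $2$ are identical copies $g_1',g_2',g_3'$ with $v(g_1')=0$, $v(g_2')=1$, $v(g_3')=3$. Since round $1$ is already TEFX, it suffices to check that $A^2$ is TEFX and to show that this forces $g_3'$ to agent $1$ and $g_2'$ to agent $2$. I would split into four cases according to which agents receive the pair $(g_2',g_3')$, letting the zero-valued copy $g_1'$ be placed arbitrarily within each case.

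The observation driving the pruning is this: since valuations are identical, $g_1$ and $g_1'$ are worth $0$ to both agents, and TEFX (envy must be removable by deleting \emph{any} good of the envied bundle, exactly as used in Lemma~\ref{lemma:tefxidenticaldaysd1}) implies that whenever agent $i$ envies agent $j$ and $A^2_j$ contains a good worth $0$ to $i$, the envy survives deletion of that good --- a contradiction; hence no such envy may occur. In the case $(g_2',g_3')\mapsto(\text{agent }1,\text{agent }1)$ this already fails: $v(A^2_1)\ge 0+1+1+3=5>3\ge v(A^2_2)$, so agent $2$ envies agent $1$, yet $g_1\in A^2_1$ is worth $0$ to agent $2$, for either placement of $g_1'$. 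Likewise, in the cases $(g_2',g_3')\mapsto(\text{agent }1,\text{agent }2)$ and $(g_2',g_3')\mapsto(\text{agent }2,\text{agent }2)$, if the spare copy $g_1'$ is given to agent $2$ then agent $1$ envies agent $2$ (values $2<6$, respectively $1<7$) while $g_1'\in A^2_2$ is worth $0$ to agent $1$ --- again a contradiction.

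It remains to handle the two sub-cases in which $g_1'$ goes to agent $1$. If $(g_2',g_3')\mapsto(\text{agent }1,\text{agent }2)$ then $v(A^2_1)=2$ and $A^2_2=\{g_3,g_3'\}$ with $v(A^2_2)=6$; if $(g_2',g_3')\mapsto(\text{agent }2,\text{agent }2)$ then $v(A^2_1)=1$ and $A^2_2=\{g_3,g_2',g_3'\}$ with $v(A^2_2)=7$. In both, agent $1$ envies agent $2$, and although $A^2_2$ now contains no good worth $0$ to agent $1$, the value of $A^2_2$ after deleting its most valuable good is still at least $3$, which exceeds $v(A^2_1)\le 2$; so no single deletion eliminates the envy and TEFX fails. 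Finally, in the remaining case $(g_2',g_3')\mapsto(\text{agent }2,\text{agent }1)$ we get $v(A^2_1)=1+3=4$ and $v(A^2_2)=3+1=4$ regardless of where $g_1'$ is placed, so $A^2$ is envy-free, hence TEFX. Therefore TEFX at round $2$ holds exactly when the good valued $1$ is given to agent $2$ and the good valued $3$ is given to agent $1$, which is the claim.

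The main obstacle is bookkeeping rather than any deep argument: one must verify that the four-case split on $(g_2',g_3')$ together with the two placements of $g_1'$ exhausts all eight assignments of the round-$2$ goods, and in each failing branch exhibit a concrete good of the envied bundle whose removal leaves the envier still envious --- a zero-valued good when one is present, and otherwise the maximum-valued good (since the residual value $\ge 3$ still beats the envier's value $\le 2$). Tabulating the eight assignments together with the resulting cumulative values and the witnessing good, in the style of Table~\ref{tab:tefxidenticaldays1}, would make this routine to check.
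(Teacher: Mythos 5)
Your proof is correct and follows essentially the same exhaustive case analysis as the paper's (your cases $(g_2',g_3')\mapsto(2,2)$, $(1,2)$, $(1,1)$ correspond to the paper's Cases 1--3, and $(2,1)$ to its ``Successful'' row). The only difference is that you explicitly track both placements of the round-$2$ zero-valued good within each case, whereas the paper suppresses it in this lemma and justifies doing so via its Proposition~\ref{prop:tefxidenticaldays}; your treatment is slightly more self-contained but not a different argument.
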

    \begin{proof}
        \textbf{Case 1:} Assume agent \(1\) does not get any good, and agent \(2\) gets the good valued \(1\) and \(3\). Now, \(v_1(A_1) = 0 + 1 = 1\) and \(v_1(A_2) = 3 + 1 + 3 = 7\). It can be seen that \(v_1(A_1) < v_1(A_2 \backslash \{g\}), \forall g \in A_2\). Therefore, we do not achieve TEFX in this case.\\
        \textbf{Case 2:} Assume agent \(1\) gets the good valued \(1\), and agent \(2\) gets the good valued \(3\). Now, \(v_1(A_1) = 0 + 1 + 1 = 2\) and \(v_1(A_2) = 3 + 3 = 6\). It can be seen that \(v_1(A_1) < v_1(A_2 \backslash \{g\}), \forall g \in A_2\). Therefore, we do not achieve TEFX in this case.\\
        \textbf{Case 3:} Assume agent \(1\) gets the good valued \(1\) and \(3\), and agent \(2)\) does not get any good. Now, \(v_1(A_1) = 0 + 1 + 1 + 3 = 5\) and \(v_1(A_2) = 3\), so agent \(1\) does not envy agent \(2\), but agent \(2\) envies agent \(1\). Note that the removal of the good valued \(0\) in agent \(1\) does not eliminate the envy from agent \(2\) to agent \(1\). Therefore, we do not achieve TEFX in this case.
        
        This finishes the proof.
    \end{proof}
    \begin{table*}[h]
      \centering
      \caption{Lemma~\ref{lemma:tefxidenticaldaysd2} Different allocation illustration in round \(2\)}
      \label{tab:tefxidenticaldays2}
        \begin{tabular}{|c|c|c|}
          \hline
           & Agent 1 & Agent 2 \\
          \hline
          Successful & \(g_3\) (\(v(A^2_1) = 1 + 3 = 4\)) & \(g_2\) (\(v(A^2_2) = 3 + 1 = 4\)) \\
          \hline
          Case 1 & \(\emptyset\) (\(v(A^2_1) = 1\)) & \(g_2\), \(g_3\) (\(v(A^2_2) = 3 + 4 = 7\)) \\
          \hline
          Case 2 & \(g_2\) (\(v(A^2_1) = 1 + 1 = 2\)) & \(g_3\) (\(v(A^2_2) = 3 + 3 = 6\)) \\
          \hline
          Case 3 & \(g_2\), \(g_3\) (\(v(A^2_1) = 1 + 1 + 3 = 5\)) & \(\emptyset\) (\(v(A^2_2) = 3\)) \\
          \hline
        \end{tabular}
    \end{table*}
    \begin{proposition}
        \label{prop:tefxidenticaldays}
        The addition of a good where it is zero-valued for every agent in a previously non-TEFX allocation does not make it TEFX.
    \end{proposition}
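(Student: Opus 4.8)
The plan is to argue directly from the (negated) definition of EFX, exploiting additivity together with the fact that a good worth zero to every agent changes no agent's value for any bundle. First I would make precise what ``previously non-TEFX'' means: there is a round $\tau$ at which the cumulative allocation $A^{\tau}$ fails EFX, i.e.\ there are two distinct agents $i,j$ and a good $h\in A^{\tau}_j$ with $v_i(A^{\tau}_i) < v_i(A^{\tau}_j\setminus\{h\})$. Fix such a witness $(i,j,h)$ for the current allocation $A$ (the one to which the new good is being appended), and let $A'$ denote the allocation obtained after a good $g$ with $v_p(g)=0$ for \emph{every} agent $p$ is assigned to some agent $k$; thus $A'_k = A_k\cup\{g\}$, $A'_\ell = A_\ell$ for $\ell\neq k$, and in particular $h\in A_j\subseteq A'_j$.

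The key observation is a single identity: since $g$ is worth zero to agent $i$, adding $g$ to any bundle leaves $v_i$ of that bundle unchanged, so $v_i(A'_i)=v_i(A_i)$ and $v_i(A'_j\setminus\{h\})=v_i(A_j\setminus\{h\})$ — and this holds no matter whether $k$ equals $i$, equals $j$, or is neither. Consequently the very same pair $(i,j)$ and good $h$ still witness a violation of EFX for $A'$: $v_i(A'_i)=v_i(A_i) < v_i(A_j\setminus\{h\}) = v_i(A'_j\setminus\{h\})$. Since $A'$ also agrees with $A$ on every earlier round, the round-$\tau$ failure is untouched, so $A'$ is not TEFX. If a reviewer prefers the case split spelled out, I would write the three placements $k=i$, $k=j$, $k\notin\{i,j\}$ separately, but each reduces immediately to the identity above.

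There is essentially no obstacle here; the only point requiring a moment of care is consistency with the EFX convention used elsewhere in the paper (removal of \emph{any} item, including zero-valued ones). The witness good $h$ may itself be worth zero to $i$, which is fine under that convention; and under either convention the newly added good $g$ is never the good whose removal is at issue, so the argument is unaffected. Finally, I would note explicitly that this is precisely what lets the proof of Theorem~\ref{thm:tefxidenticaldays} disregard the zero-valued good $g_1$ in round $3$ and reduce the analysis to the four ways of splitting the positively valued goods $\{g_2,g_3\}$ between the two agents.
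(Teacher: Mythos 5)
Your proof is correct and follows essentially the same route as the paper: both argue from additivity that a good valued zero by everyone leaves every agent's value for every bundle unchanged, so the EFX-violating witness persists. You are in fact slightly more careful than the paper's own proof, which only writes out the case where the new good goes to the envious agent $i$ and uses a ``for all removals'' form of the violation, whereas you track an explicit witness $(i,j,h)$ and check all three placements of the new good (including the point that the newly added good, if given to $j$, is itself a harmless removal candidate).
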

    \begin{proof}
        If an allocation is TEFX, \(\forall j \in [N], v_i(A_i) \geq v_i(A_j \backslash \{g\}), \forall g \in A_j\). Assume agent \(i\) envies agent \(j\), and \(v_i(A_i) < v_i(A_j \backslash \{g\}), \forall g \in A_j\). If the good is allocated to agent \(i\),
        \[
        v_i(A_i) + 0 = v_i(A_i) < v_i(A_j \backslash \{g\}), \forall g \in A_j \text{.}
        \]
        Therefore, adding this zero-valued good does not convert a previously non-TEFX allocation into a TEFX.
    \end{proof}
    Hence, we first ignore the good valued \(0\) at \(t = 2\) and the good valued \(0\) at \(t = 3\), by showing that TEFX cannot be achieved at \(t = 3\) without both of these goods. Adding both goods back will not make us achieve TEFX either.
    \begin{lemma}
        \label{lemma:tefxidenticaldaysd3}
        TEFX cannot be achieved at \(t = 3\).
    \end{lemma}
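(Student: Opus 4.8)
The plan is to continue the case analysis from Lemmas~\ref{lemma:tefxidenticaldaysd1} and~\ref{lemma:tefxidenticaldaysd2}. By those lemmas, at $t=2$ the only way to preserve TEFX forces $v(A^2_1) = v(A^2_2) = 4$, with agent $1$ holding the good valued $1$ (from round $1$) plus the good valued $3$ (from round $2$), and agent $2$ holding the good valued $3$ (from round $1$) plus the good valued $1$ (from round $2$). Each agent's bundle consists of one good valued $3$, one good valued $1$, and (after invoking Proposition~\ref{prop:tefxidenticaldays}) we may ignore the zero-valued good from round $2$ as it cannot repair any violation. So the proof of Lemma~\ref{lemma:tefxidenticaldaysd3} starts from this symmetric configuration: both agents have value $4$, and in round $3$ the goods valued $0$, $1$, $3$ arrive.

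First I would invoke Proposition~\ref{prop:tefxidenticaldays} again to discard the round-$3$ good valued $0$: it cannot turn a non-TEFX allocation into a TEFX one, so it suffices to show that allocating only the goods valued $1$ and $3$ in round $3$ cannot yield TEFX. That leaves three subcases for how the two positively valued goods are distributed: (i) agent $1$ gets both the good valued $1$ and the good valued $3$ (agent $2$ gets nothing new); (ii) agent $2$ gets both; (iii) they split them, one each. I would compute the resulting bundle values in each subcase. In subcase~(i), $v_1(A^3_1) = 4 + 4 = 8$ while $v_1(A^3_2) = 4$, so agent $2$ envies agent $1$; agent $1$'s bundle contains two goods valued $3$ and $1$ respectively (plus a zero), and removing agent $1$'s good valued $1$ still leaves value $6 > 4$ for agent $2$ (and by symmetry agent $1$'s valuation equals agent $2$'s), violating TEFX. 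Subcase~(ii) is symmetric. In subcase~(iii), whichever agent receives the good valued $3$ reaches value $7$, the other reaches value $5$; the lagging agent envies, and removing the envied agent's good valued $1$ from a bundle of value $7$ leaves $6 > 5$, so the envy is not eliminated — again a TEFX violation.

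Having ruled out all three subcases with the zero-valued good ignored, I would close by re-adding the two zero-valued goods (the one deferred from round $2$ and the one from round $3$) and noting, via Proposition~\ref{prop:tefxidenticaldays}, that their presence cannot restore TEFX: in every subcase the failing inequality $v_i(A^3_i) < v_i(A^3_j \setminus \{g\})$ is strict for the relevant agent and good, and adding zero-valued goods to either side does not change it. This establishes Lemma~\ref{lemma:tefxidenticaldaysd3}, and since Lemmas~\ref{lemma:tefxidenticaldaysd1} and~\ref{lemma:tefxidenticaldaysd2} showed rounds $1$ and $2$ have an essentially unique TEFX-preserving allocation up to symmetry, we conclude that no TEFX allocation exists for $T \geq 3$ under identical days even with identical valuations, proving Theorem~\ref{thm:tefxidenticaldays}.

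I expect the bookkeeping around symmetry to be the main nuisance rather than a genuine obstacle: one must be careful that ``without loss of generality'' at $t=1$ and again at $t=2$ does not accidentally discard a genuinely distinct branch, and that the agent labels are tracked consistently through the forced swaps. The arithmetic itself is trivial; the care lies in verifying that the forced configuration after round $2$ is truly the unique one (up to swapping the two agents) so that the single round-$3$ analysis covers all surviving cases.
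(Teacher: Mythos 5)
Your proposal follows essentially the same route as the paper: start from the forced symmetric state $v(A^2_1)=v(A^2_2)=4$ after round $2$, use Proposition~\ref{prop:tefxidenticaldays} to set aside the zero-valued goods, and exhaust the ways to split the round-$3$ goods valued $1$ and $3$, exhibiting in each case a good whose removal leaves strict envy. The only blemish is a harmless arithmetic slip in subcase~(i), where removing the good valued $1$ from the bundle of value $8$ leaves $7$, not $6$ (still $>4$, so the conclusion stands).
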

    \begin{proof}
        \textbf{Case 1:} Agent \(1\) does not get any good, and agent \(2\) gets the good valued \(1\) and \(3\). Now, \(v_1(A_1) = 0 + 1 + 3 = 4\), and \(v_1(A_2) = 1 + 3 + 1 + 3 = 8\). Agent \(1\) still envies agent \(2\) after removing any arbitrary good from agent \(2\)'s allocation. Therefore, we do not achieve TEFX in this case.\\
        \textbf{Case 2:} Agent \(1\) gets the good valued \(1\), and agent \(2\) gets the good valued \(2\). Now, \(v_1(A_1) = 0 + 1 + 3 + 1 = 5\), and \(v_1(A_2) = 1 + 3 + 3 = 7\). The removal of the good valued \(1\) from agent \(2\) does not eliminate the envy from agent \(1\) towards agent \(2\). Therefore, we do not achieve TEFX in this case.\\
        \textbf{Case 3:} Agent \(1\) gets the good valued \(3\), and agent \(2\) gets the good valued \(1\). Now, \(v_1(A_1) = 0 + 1 + 3 + 3 = 7\), and \(v_1(A_2) = 1 + 3 + 1 = 5\). The removal of the good valued \(0\) from agent \(2\) does not eliminate the envy from agent \(2\) towards agent \(1\).\\
        \textbf{Case 4:} Agent \(1\) gets the good valued \(1\) and \(3\), and agent \(2\) does not get any good. Now, \(v_1(A_1) = 0 + 1 + 3 + 1 + 3 = 8\), and \(v_1(A_2) = 1 + 3 = 4\). It can be seen that no matter which good we remove from agent \(1\), agent \(2\) still envies agent \(1\). Therefore, we do not achieve TEFX in this case.
    \end{proof}
    \begin{table*}[h]
      \centering
      \caption{Lemma~\ref{lemma:tefxidenticaldaysd3} Different allocation illustration in round \(3\)}
      \label{tab:tefxidenticaldays3}
        \begin{tabular}{|c|c|c|}
          \hline
           & Agent 1 & Agent 2 \\
          \hline
          Case 1 & \(\emptyset\) (\(v(A^3_1) = 4\)) & \(g_2\), \(g_3\) (\(v(A^3_2) = 4 + 4 = 8\)) \\
          \hline
          Case 2 & \(g_2\) (\(v(A^3_1) = 4 + 1 = 5\)) & \(g_3\) (\(v(A^3_2) = 4 + 3 = 7\)) \\
          \hline
          Case 3 & \(g_3\) (\(v(A^3_1) = 4 + 3 = 7\)) & \(g_2\) (\(v(A^3_2) = 4 + 1 = 5\)) \\
          \hline
          Case 4 & \(g_2\), \(g_3\) (\(v(A^3_1) = 4 + 4 = 8\)) & \(\emptyset\) (\(v(A^3_2) = 4\)) \\
          \hline
        \end{tabular}
    \end{table*}
    Then, by Proposition~\ref{prop:tefxidenticaldays}, after adding the zero-valued goods, Lemma~\ref{lemma:tefxidenticaldaysd3} still holds true. Thus, our result follows.
\end{proof}

The next setting we study is the generalized binary valuation. This class of valuation functions generalizes both identical and binary valuations. We demonstrate that a TEFX allocation exists when there are two agents in this setting. Additionally, when there are two agents, EFX implies MMS. Therefore, we also find a TMMS allocation when there are two agents.

\subsubsection{Generalized Binary Valuation}
\begin{theorem}
    \label{thm:tefxgenbivalue}
    A TEFX and TMMS allocation can be computed in polynomial time when there are two agents under the generalized binary valuation setting.
\end{theorem}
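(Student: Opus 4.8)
The plan is to process the goods round by round and, within each round, use a Round-Robin-style allocation while carefully controlling which of the two agents gets priority, so that the cumulative allocation stays TEFX at every step. The key structural fact I would exploit is that under generalized binary valuations each good is worth either $0$ or the common value $b$ to a given agent, so an agent's value for any bundle is simply $b$ times the number of goods in that bundle that the agent ``likes''. This reduces TEFX to a combinatorial condition on counts of liked goods: agent $i$ is TEFX-satisfied toward $j$ at round $t$ iff for every good $g\in A_j^t$ with $v_i(g)>0$ we have $v_i(A_i^t)\ge v_i(A_j^t)-v_i(g)$, which (since all positive values equal $b$) is equivalent to saying that the number of $i$-liked goods in $A_i^t$ is at least the number of $i$-liked goods in $A_j^t$ minus one — \emph{unless} $A_j^t$ contains no $i$-liked good at all, in which case the condition is automatic.

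The main steps I would carry out are: (1) Within a single round $O_t$, classify each arriving good by the pair of agents who like it (both, only agent $1$, only agent $2$, or neither); goods liked by neither can be dumped arbitrarily, goods liked by exactly one agent should be given to that agent (this never hurts TEFX since it raises the recipient's own value and does not add a liked good to the other's bundle from the other's perspective), and the contested goods liked by both are the only ones that require care. (2) Allocate the contested goods of round $t$ alternately, starting with whichever agent is currently ``behind'' in liked-good count from the other's perspective — i.e., maintain the invariant that after each round the two agents' counts of mutually-liked goods differ by at most the parity of the total number of contested goods seen so far, and more precisely that each agent's count of own-liked goods is within one of the other's count of that same agent's liked goods. (3) Verify that this invariant, together with the handling of singly-liked goods, implies the TEFX condition for both agents after \emph{every} prefix of the allocation order, not just at round boundaries — here I would note that because each agent has a $0/b$ valuation, inserting goods one at a time in the alternating order can push the envy up by at most one liked good before it is corrected on the next pick, which is exactly what TEFX tolerates. (4) Finally, invoke the fact (stated just before the theorem) that for two agents EFX implies MMS, applied to each prefix, to conclude TMMS; and observe the whole procedure is a single pass with $O(1)$ bookkeeping per good, hence polynomial time.

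The hard part, I expect, will be step (3): showing the condition holds at \emph{intra-round} prefixes, not merely at the end of each round. The subtlety is that while we alternate contested goods, a bundle can temporarily contain a newly added liked good that the opponent has not yet had a chance to match, and one must check that the ``up to any good'' removal still rescues the envied agent. I anticipate the cleanest way around this is to fix, for each round, an allocation order that interleaves the singly-liked goods with the contested ones so that no agent is ever more than one liked good ``ahead'' of the matching quantity for the other agent at any point — essentially a careful scheduling of the within-round picks — and then to prove the invariant by induction on the global good index. A secondary technical point is choosing, at the start of each round, the right agent to receive the first contested good so that the parity carried over from previous rounds is absorbed rather than amplified; this is a small case analysis on whether the number of contested goods in the previous prefix was even or odd and which agent currently leads.
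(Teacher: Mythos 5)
Your classification of goods by which agents like them, and alternating the contested (both-liked) goods in a round-robin, matches the skeleton of the paper's algorithm, and your reduction of EFX to counts of liked goods is the right lens. But there are two genuine gaps, both stemming from one obstruction that your step (3) only partially identifies: under EFX, if agent $i$ strictly envies agent $j$ and $A_j$ contains even one good $g$ with $v_i(g)=0$, then removing $g$ does not lower $v_i(A_j)$, so EFX fails outright — being ``one liked good behind'' is tolerable only when \emph{every} good in the envied bundle is worth $b$ to the envier. First, goods liked by neither agent cannot be ``dumped arbitrarily'': if a contested good has just gone to agent $1$ and agent $2$ is transiently one behind (which your alternation permits), dumping a commonly-worthless good on agent $1$ immediately breaks EFX, since removing that good leaves agent $2$'s envy intact. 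Second, and for the same reason, giving each singly-liked good to the agent who likes it is \emph{not} automatically safe: it plants a good valued $0$ by the opponent into a bundle the opponent may envy by $b$ at some prefix. Your proposed remedy — intra-round reordering plus a parity case analysis on who currently leads — does not resolve this, because the danger is not parity but the mere presence of a zero-valued good in a possibly-envied bundle, and the good that would ``catch the opponent up'' need not exist in that round.

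The paper's fix is a \emph{global} look-ahead that a local greedy cannot replicate: it computes $f_1$ and $f_2$, the first rounds at which agent $1$ (resp.\ agent $2$) receives a good it likes but the other does not, and then fixes for the entire instance that the agent whose private good comes \emph{later} leads the round-robin on contested goods, hence never trails on them (Observation~\ref{thm:RR3}); that agent therefore never envies the other at any prefix, so the other agent can safely absorb both its own private goods and all commonly-zero goods. Meanwhile the leader's bundle contains only goods the trailing agent values at $b$ until the leader's own private goods arrive, by which point the trailing agent no longer envies at all, so every transient one-good deficit is curable by removing any good. This coordination between who holds the worthless-to-the-opponent goods and who leads the contested round-robin is the missing idea; with it, your count-based invariant goes through, and your final step (EFX implies MMS for two agents, applied to each prefix) is fine.
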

\begin{algorithm}
    \caption{Returns a TEFX allocation under generalized binary valuation}
    \label{alg:tefxgenbivalue}
    \textbf{Input}: Arbitrary temporal fair division instance \(\mathcal{I} = \left( N, T, \left\{ O_t \right\}_{t \in [T]}, v = (v_1, \ldots, v_n), r \right)\)\\
    \textbf{Output}: TEFX allocation \(\mathcal{A}\)
\begin{algorithmic}[1]
    \State $f_1 \gets T + 1$
    \State $f_2 \gets T + 1$
    \For{$t = 1$ to $T$}
        \For{$g \in O_t$}
            \If{$v_1(g) = 1$ and $v_2(g) = 0$ and $f_1 = T + 1$}
                \State $f_1 \gets t$
            \EndIf
            \If{$v_1(g) = 0$ and $v_2(g) = 1$ and $f_2 = T + 1$}
                \State $f_2 \gets t$
            \EndIf
        \EndFor
    \EndFor
    \If{$f_1 < f_2$}
        \State $j \gets 2$
        \State $o \gets 1$
    \Else
        \State $j \gets 1$
        \State $o \gets 2$
    \EndIf
    \For{$t = 1$ to $T$}
        \For{$g \in O_t$}
            \If{$v_1(g) = v_2(g) = 0$}
                \State $A^t_o \gets A^t_o \cup \{g\}$
            \ElsIf{$v_1(g) = 0$ and $v_2(g) = b$}
                \State $A^t_2 \gets A^t_2 \cup \{g\}$
            \ElsIf{$v_1(g) = b$ and $v_2(g) = 0$}
                \State $A^t_1 \gets A^t_1 \cup \{g\}$
            \Else
                \State $A^t_j \gets A^t_j \cup \{g\}$
                \If {$j = 1$}
                    \State $j \gets 2$
                \Else
                    \State $j \gets 1$
                \EndIf
            \EndIf
        \EndFor
    \EndFor
\end{algorithmic}
\end{algorithm}
\begin{proof}
    The polynomial runtime of Algorithm~\ref{alg:tefxgenbivalue} is easy to verify, given that there are only nested \textbf{For} loops, which run in polynomial time, and the other operations run in polynomial time. Thus, we focus on proving correctness.
    
    Intuitively, Algorithm~\ref{alg:tefxgenbivalue} first allocates goods that are valued positively to both agents in a round-robin manner (allocate alternatively). Then, depending on which agent is first allocated a good that is valued positively by them but not by the other agent, we decide which agent to allocate the zero-valued good to. We do this to avoid a scenario where an agent envies another agent, but the latter holds a zero-valued good, so the envy cannot be eliminated if we remove this zero-valued good.
    
    Denote \(f_1\) and \(f_2\) to be the first round where agent \(1\) and \(2\) are allocated a good that is valued \(b\) for themselves, but \(0\) for the other agent. If this good does not appear for an agent \(i \in \{1, 2\}\), we define \(f_i = T + 1\). Without loss of generality, assume \(f_1 \leq f_2\). We split the proof for each case into three parts: round \(1\) to \(f_1 - 1\), round \(f_1\) to \(f_2 - 1\), round \(f_2\) to \(T\).\\
    \textbf{Case 1:} \(f_1 = f_2\).\\
    \textbf{Round \(1\) to \(f_1 - 1\):} Notice the part in Algorithm~\ref{alg:tefxgenbivalue} that allocates goods that are valued \(b\) for both agents is RR, where in this case, agent \(1\) is allocated first. By Observation~\ref{thm:RR3}, agent \(2\) will not be envied by agent \(1\) during this period. On the other hand, RR guarantees that for goods that are valued \(b\) for both agents, the difference in the number allocated for each agent will not be more than \(1\). Thus, agent \(2\) can drop its envy towards agent \(1\) by removing any good from their allocation, as \(v_2(A^t_2) = v_2(A^t_1)\) or \(v_2(A^t_2) = v_2(A^t_1) - b\).\\
    \textbf{Round \(f_1\) to \(f_2 - 1\):} This period of rounds is invalid.\\
    \textbf{Round \(f_2\) to \(T\):} Let \(P\) and \(Q\) be a set of goods such that \(\forall p \in P\), and \(\forall q \in Q\), \(v_1(p) = v_2(q) = b\), and \(v_1(q) = v_2(p) = 0\). Note that \(\forall f_2 \leq t \geq T, v_1(A^t_2 \backslash Q) = v_1(A^t_2)\), and \(v_2(A^t_1 \backslash P) = v_2(A^t_1)\). Therefore, this period is equivalent to the period of round \(1\) to \(f_1 - 1\).\\
    \textbf{Case 2:} \(f_1 < f_2\).\\
    \textbf{Round \(1\) to \(f_1 - 1\):} The proof for this part is the same as the same period in \textbf{Case 1}, except that agent \(2\) is allocated first.\\
    \textbf{Round \(f_1\) to \(f_2 - 1\):} Let \(P\) be a set of goods such that \(\forall p \in P\), \(v_1(p) = b\), and \(v_2(p) = 0\). Note that \(\forall f_2 \leq t \geq T, v_2(A^t_1 \backslash P) = v_2(A^t_1)\). Therefore, this period is equivalent to the period of round \(1\) to \(f_1 - 1\).\\
    \textbf{Round \(f_2\) to \(T\):} This period is the same as \textbf{Case 1}.\\
    Since EFX implies MMS for two agents, our result follows.
\end{proof}

It was shown that TEFX allocation may not exist when all agents have identical valuations \citep{elkind2024}. We combine the identical valuation setting with the generalized binary valuation setting, and establish a strong positive result by modifying the Round-Robin algorithm.
\begin{theorem}
\label{thm:tefxgenbivalueidentical}
A TEFX allocation can be computed in polynomial time when all agents' valuations are both generalized binary and identical.
\end{theorem}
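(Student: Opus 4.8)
The plan is to run a carefully ordered variant of Round-Robin, processing goods in round order (all goods of $O_1$, then $O_2$, and so on), and within each round exploiting the fact that under a common identical-and-generalized-binary valuation every good is worth either $0$ or the fixed value $b$ to \emph{every} agent. First I would observe that $0$-valued goods are irrelevant to TEFX: they can be appended to any agent's bundle without creating envy (removing such a good from an envied bundle changes nothing, and it never adds value), so it suffices to guarantee TEFX with respect to the "effective" goods, those worth $b$. Hence the real instance reduces to one where, in each round $t$, there are $k_t := |\{g \in O_t : v(g)=b\}|$ unit-value goods to distribute, and TEFX at round $t$ is simply the condition that the cumulative bundle sizes (counting only $b$-goods) differ by at most one across agents, i.e. $\max_i |A^t_i| - \min_i |A^t_i| \le 1$ where $|\cdot|$ now counts effective goods only.

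Next I would give the allocation rule: maintain a cyclic agent ordering and a pointer; each incoming $b$-good is handed to the next agent in the cyclic order, exactly as in RR but with the order \emph{persisting across rounds} rather than restarting. The key invariant to carry through the induction on the goods processed is Observation~\ref{thm:RR3}: immediately after agent $i$ receives a good, $|A_i| = |A_j|$ for all $j$ earlier in the current cycle and $|A_i| = |A_j|+1$ for all $j$ later; in particular the multiset of bundle sizes is always $\{q, q, \dots, q, q{+}1, \dots, q{+}1\}$ for some $q$, so the spread is at most one \emph{after every single good is placed} — and since all $b$-goods within a round are processed consecutively, this in particular holds at every round boundary. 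Because all agents share the valuation $v$ and every effective good contributes exactly $b$, a spread of at most one in effective-good count is precisely cumulative TEFX: for any pair $i, j$, $v(A^t_i) \ge v(A^t_j) - b \ge v(A^t_j \setminus \{g\})$ for every $g \in A^t_j$. The $0$-goods are then slotted in arbitrarily (say, all to agent $1$) without affecting any of this. Polynomial running time is immediate since each good is touched once.

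The main obstacle I anticipate is not the envy argument, which is essentially Observation~\ref{thm:RR3}, but ensuring the \emph{within-round} fairness is not broken by the interleaving of $0$-goods and $b$-goods and by rounds in which $k_t$ is not a multiple of $n$: a naive per-round RR restart would let the spread grow, so I must be careful to argue that persisting the cyclic pointer across rounds is what keeps the global size-spread bounded by one at all times, and that this is consistent with the TEFX requirement being a \emph{prefix} condition (it must hold at round $t$ for the cumulative allocation, not just at $t = T$). I would close by remarking that this also recovers, as special cases, the known positive results for identical valuations being impossible in general — the gain here comes entirely from the binary structure collapsing "value spread" to "cardinality spread," which RR controls exactly.
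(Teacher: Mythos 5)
Your core allocation rule for the $b$-valued goods—a single cyclic Round-Robin pointer that persists across rounds—is exactly the paper's Algorithm~\ref{alg:tefxgenbivalueidentical}, and your invariant via Observation~\ref{thm:RR3} that the effective-good counts never spread by more than one is the right engine. However, there is a genuine gap in your treatment of the zero-valued goods, and it is not a cosmetic one: your claim that they are ``irrelevant to TEFX'' and can be ``slotted in arbitrarily (say, all to agent $1$)'' is false, and your reduction of TEFX to ``cardinality spread at most one over effective goods'' fails for precisely this reason. EFX requires that envy be eliminated by removing \emph{any} good from the envied bundle; if agent $j$ is strictly ahead by one $b$-good (so agent $i$ envies $j$, which is fine up to one good) \emph{and} $A_j$ also contains a zero-valued good $g_0$, then $v_i(A_j \setminus \{g_0\}) = v_i(A_j) > v_i(A_i)$ and TEFX is violated. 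The paper flags exactly this failure mode (see Proposition~\ref{prop:tefxidenticaldays} and the discussion opening Section~\ref{atefx}). Concretely, with $n=2$ and $O_1 = \{g_1, g_2\}$ where $v(g_1)=0$ and $v(g_2)=b$, your rule puts both goods on agent $1$, agent $2$ envies agent $1$, and removing $g_1$ does not eliminate the envy.

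The fix is small but essential, and it is the one piece of content your proposal is missing: the zero-valued goods must be routed to an agent who is \emph{never envied}, namely the agent who is never strictly ahead in the cyclic order. With the pointer starting at agent $1$, that is agent $n$, and this is exactly what the paper's algorithm does. Since $|A^t_n| \le |A^t_i|$ in effective goods for all $i$ at all times, no one ever envies agent $n$, so agent $n$ may hold arbitrarily many zero-valued goods without violating TEFX; for every other agent the bundle consists solely of $b$-valued goods, so the spread-at-most-one invariant does imply TEFX for those bundles. If you replace ``arbitrarily (say, all to agent $1$)'' with ``always to agent $n$'' and add the observation that agent $n$ is unenvied throughout, your argument closes and coincides with the paper's proof.
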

\begin{algorithm}
    \caption{Returns a TEFX allocation under generalized binary valuation and identical valuations}
    \label{alg:tefxgenbivalueidentical}
    \textbf{Input}: Arbitrary temporal fair division instance \(\mathcal{I} = \left( N, T, \left\{ O_t \right\}_{t \in [T]}, v = (v_1, \ldots, v_n), r \right)\)\\
    \textbf{Output}: TEFX allocation \(\mathcal{A}\)
\begin{algorithmic}[1]
    \State $j \gets 1$
    \For{$t = 1$ to $T$}
        \For{$g \in O_t$}
            \If{$v_1(g) = 0$}
                \State $A^t_n \gets A^{t-1}_n \cup g$
            \Else
                \State $A^t_j \gets A^{t-1}_j \cup g$
                \State $j \gets j + 1$
                \If{$j > n$}
                    \State $j \gets 1$
                \EndIf
            \EndIf
        \EndFor
    \EndFor
\end{algorithmic}
\end{algorithm}

\begin{proof}
It is easy to observe that Algorithm~\ref{alg:tefxgenbivalueidentical} runs in polynomial time, as there are only nested \textbf{For} loops, which run in polynomial time, and the other operations run in polynomial time. Therefore, we focus on proving its correctness.

Please note that this is RR with some modifications when handling zero-valued goods. By Observation~\ref{thm:RR3}, \(\forall j < i, v_i(A_i) = v_i(A_j)\). Therefore we now focus on agents \(j > i\).\\
\textbf{Case 1:} \(j \neq n\). In this case, \(|A_i^t| = |A_j^t| + 1\), thus \(v_j(A_j) \geq v_j(A_j) - v_j(g) = v_i(A_i) - v_i(g), \forall g \in O_t\).\\
\textbf{Case 2:} \(j = n\). In this case, since it is possible that \(\exists g \in A_n^t\) s.t. \(v_i(g) = 0\), agent \(i\) must not envy agent \(n\). However, since \(\forall i < n, |A_i^t| = |A_n^t| + 1\), 
\[v_i(A_i) \geq v_i(A_n) \geq v_i(A_n \backslash \{g\}), \forall g \in A_n \text{.}
\]
Therefore, Algorithm~\ref{alg:tefxgenbivalueidentical} returns a TEFX allocation.
\end{proof}

In summary, our exploration of exact TEFX reveals a stark reality for temporal fair division: maintaining "up to any good" fairness round by round is overwhelmingly impossible in dynamic environments. While we successfully identified narrow theoretical sanctuaries, specifically, systems constrained to generalized binary valuations coupled with either a strict two-agent limit or perfectly identical preferences, these highly structured conditions are the exception rather than the rule.

The widespread failure of exact TEFX under broader constraints, including the predictable symmetry of identical days or multi-agent binary settings, demonstrates that cumulative, exact fairness is fundamentally incompatible with the inherent friction of online arrivals. Because real-world allocation problems rarely conform to such rigid mathematical boundaries, the relentless impossibilities surrounding TEFX strongly motivate relaxing our fairness criteria. This naturally paves the way for our subsequent analysis, where we abandon the fragile pursuit of exact equity and instead establish robust, multiplicative fairness guarantees through the lens of $\alpha$-TEFX.

\subsection {$\alpha$-TEFX} \label{atefx}
In this section, we consider the approximation variant of TEFX, \(\alpha\)-TEFX. We can provide positive results in a general setting and achieve stronger results under various settings. We note that the main challenge for obtaining an \(\alpha\)-TEFX allocation is when there exist some goods that are zero-valued for some agents. Unlike online \(\alpha\)-EFX, which were studied previously \citep{melissourgos2025}, in \(\alpha\)-TEFX, we focus on avoiding cases such as the following: after a round, an agent \(i\) that has no goods allocated to them, envies another agent, that has one good positively valued for \(i\), and another good that is zero-valued for \(i\). This is acceptable for standard online \(\alpha\)-EFX, but results in an invalid approximate TEFX allocation.

\subsubsection{General Settings}
We first demonstrate that the approximation ratio must depend on the values of the goods.

\begin{theorem}
    \label{thm:atefxconst}
    There does not exist any algorithm that returns a positive constant approximation for \(\alpha\)-TEFX, even when there are only two agents with identical valuations.
\end{theorem}
\begin{proof}
     Assume there exists a solution that computes a \(k\)-TEFX allocation. Consider the instance with two agents and three goods \(O = \{g_1, g_2, g_3\}\), where agents have identical valuations: \(v(g_1) = v(g_2) = 1\) and \(v(g_3) = 
     \frac{1}{k} + 1\), where \(k > 0\) and is a constant. Let \(g_i\) arrive in round \(i\). In order to obtain a \(k\)-TEFX allocation at the first round, \(g_1\) and \(g_2\) must be allocated to each agent, respectively. Then, no matter which agent is allocated \(g_3\), we do not obtain a \(k\)-TEFX allocation. Without loss of generality, let it be agent \(1\), we have
    \[
    v(g_2) = 1 < k \cdot (v(g_1 \cup g_3) - v(g_1)) = k \cdot (\frac{1}{k} + 1) = k + 1 \text{,}
    \]
    which contradicts our assumption.
\end{proof}

As a result, we observe that the approximation ratio should be expressed in terms of the minimum- and maximum-valued goods. Therefore, we propose a loose bound for approximating TEFX when all goods are valued positively by all agents. Intuitively, we compute a $1/2$-EFX allocation in each round and add it to the allocation.

\begin{theorem}
    \label{thm:atefx}
    A \(\frac{\frac{1}{2} \cdot min_{g \in O} v_i(g)}{min_{g \in O} v_i(g) + \frac{1}{2} \cdot max_{g \in O} v_i(g)}\)-TEFX allocation can be computed in polynomial time, if all agents value all goods positively.
\end{theorem}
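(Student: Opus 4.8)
The plan is to build the allocation round by round, maintaining the invariant that the cumulative allocation is $\alpha$-TEFX. For agent $i$ write $m_i:=\min_{g\in O}v_i(g)$ and $M_i:=\max_{g\in O}v_i(g)$, and $\alpha_i:=\frac{m_i/2}{m_i+M_i/2}=\frac{m_i}{2m_i+M_i}$, so the invariant reads: after every round $t$ and for every pair $i,j$ we have $v_i(A^t_i)\ge \alpha_i\,(v_i(A^t_j)-v_i(g))$ for all $g\in A^t_j$. At round $t$ we apply MAX-ECE to the instance $(N,O_t,v)$, which by Theorem~\ref{thm:ECE3} produces a $\frac12$-EFX allocation $B^t$ of the freshly arrived goods $O_t$, and then merge $B^t$ into the cumulative allocation so that the merge is consistent with the current envy structure (roughly: an agent who is currently heavily envied should not be handed a large new sub-bundle, whereas an agent who currently envies should receive a relatively large one). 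Since $\alpha_i<\frac12$, a $\frac12$-EFX allocation is in particular $\alpha_i$-EFX for agent $i$; this handles the first round and the base case, and it is also the reason why no absolute constant can replace $\alpha_i$ --- this is precisely Theorem~\ref{thm:atefxconst}.

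For the inductive step, suppose $A^{t-1}$ is $\alpha$-TEFX and fix a pair $(i,j)$. Let $g^\star$ be a good of $A^t_j$ of least value to $i$ (since all goods are positively valued, $g^\star$ is the binding good for TEFX), and decompose $A^t_i=P_i\cup Q_i$, $A^t_j=P_j\cup Q_j$, where $P$ is the part inherited from $A^{t-1}$ and $Q$ is the part received in round $t$. Distinguish whether $g^\star\in P_j$ or $g^\star\in Q_j$. In both cases the target $v_i(P_i)+v_i(Q_i)\ge \alpha_i\big(v_i(P_j)+v_i(Q_j)-v_i(g^\star)\big)$ is derived by combining (i) the induction hypothesis for the pair $(i,j)$ in $A^{t-1}$, which gives $v_i(P_i)\ge \alpha_i\big(v_i(P_j)-v_i(h)\big)$ for every $h\in P_j$, with (ii) the $\frac12$-EFX guarantee for $B^t$ together with the merge rule, which gives $v_i(Q_i)\ge \frac12\big(v_i(Q_j)-\max_{g\in Q_j}v_i(g)\big)$ and, when $Q_i\neq\emptyset$, also $v_i(Q_i)\ge m_i$. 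Substituting these bounds, each case reduces to a one-variable inequality of the form $x+y\ge \alpha_i\,(X+Y-c)$ with $x\ge\alpha_i X-\alpha_i M_i$, $y\ge \frac12(Y-M_i)$, $c\ge m_i$, and an additional slack $y\ge m_i$ whenever the envious agent received a good; a direct calculation shows these inequalities close exactly at $\alpha_i=\frac{m_i/2}{m_i+M_i/2}$, the extremal configuration being the one in which the envied agent's new sub-bundle is a single $M_i$-valued good and the envious agent's new sub-bundle is a single $m_i$-valued good (or empty, which the merge rule tries to avoid). Polynomial running time is immediate: MAX-ECE runs in polynomial time (Theorem~\ref{thm:ECE3}), the merge is a simple assignment, and there are only $T$ rounds.

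I expect the main obstacle to be the merge step and its analysis. Because goods can never be reallocated once their round has passed, we cannot run ordinary envy-cycle elimination across the whole history, so the restoration of approximate cumulative envy-freeness must be carried out using the new goods of round $t$ alone; the crux is to control how much the clean $\frac12$-EFX guarantee for one isolated round degrades once it is stacked on top of a history that is only $\alpha$-EFX and whose good values spread across $[m_i,M_i]$, and this is what forces the ratio down to $\frac{m_i/2}{m_i+M_i/2}$. The genuinely delicate corner is a round with $|O_t|<n$, where some agent necessarily receives no new good: one must argue that such an agent's cumulative envy is still within the $\alpha_i$ factor, which requires using the induction hypothesis in a sharper way than the naive ``drop one good'' estimate, and this case is where most of the care in both the algorithm (the merge rule) and the proof (the scalar inequalities) is concentrated.
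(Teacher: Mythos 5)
Your round-by-round use of MAX-ECE and the final ratio match the paper, but your overall proof strategy is genuinely different: you try to propagate a cumulative $\alpha$-TEFX invariant by induction, splitting each bundle into an inherited part $P$ and a fresh part $Q$, whereas the paper never inducts on a TEFX-type invariant at all. Instead it upgrades the per-round $\frac{1}{2}$-EFX guarantee to a per-round \emph{multiplicative} envy-freeness guarantee: from $v_i(B_i)\ge\frac{1}{2}v_i(B_j\setminus\{g\})$ it derives $v_i(B_i)\ge\frac{\beta}{\gamma}v_i(B_j)$ with $\beta=\frac{1}{2}$ and $\gamma=1+\frac{\beta v_i(g)}{v_i(B_i)}$, minimizes $\frac{\beta}{\gamma}$ over $v_i(B_i)\ge m_i$ and $v_i(g)\le M_i$ to obtain exactly $\frac{m_i/2}{m_i+M_i/2}$, and then invokes the fact that $\alpha$-EF (unlike $\alpha$-EFX) is closed under unioning the rounds' allocations (Lemma~\ref{lemma:atefx1}). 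This yields $\alpha$-TEF, which is stronger than $\alpha$-TEFX, and requires no cross-round bookkeeping and no adaptive merge rule. It also explains why the paper batches the first $n$ goods across rounds (the \emph{missing} counter in Algorithm~\ref{alg:atefx}): this guarantees every agent starts with one positively valued good, which is where the hypothesis that all values are positive and the lower bound $v_i(B_i)\ge m_i$ enter. To your credit, your scalar inequalities do close in the case where every agent receives a new good in round $t$: writing $Z=v_i(Q_j)-\min_{g\in Q_j}v_i(g)$, the two guarantees $v_i(Q_i)\ge m_i$ and $v_i(Q_i)\ge\frac{1}{2}Z$ together dominate $\alpha_i(Z+M_i)$ for all $Z\ge 0$, with equality at $Z=2m_i$ (not at the single-$M_i$-good configuration you name).

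The genuine gap is the case you defer to the end: a round in which agent $i$ receives no new good, i.e.\ $Q_i=\emptyset$. There the inductive step does not merely require ``care''; it is false without an additional mechanism. If $T-1$ consecutive rounds each deliver a single good worth $M_i$ to agent $i$ and every one of them lands on the same agent $j$, then $v_i(A^T_j\setminus\{g\})$ grows linearly in $T$ while $v_i(A^T_i)$ stays fixed, so no fixed $\alpha>0$ survives; something must force either agent $i$ to accumulate value or the singletons to spread out, and per-round $\frac{1}{2}$-EFX on the increments is vacuously satisfied by giving every singleton to $j$, so it provides no such force. Your ``merge rule'' is the object that is supposed to prevent this, but it is never defined, and your inductive inequalities nowhere use any property of it beyond the two bounds above --- so the one place where the proof actually needs the merge rule is exactly the place where it is absent. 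To repair the argument you would need either to specify a concrete rule (e.g.\ route small rounds to cumulatively disadvantaged agents and prove a quantitative version of your invariant that tolerates empty increments), or to strengthen the invariant to cumulative $\alpha$-EF as the paper effectively does, at which point the $P/Q$ decomposition and the merge rule become unnecessary.
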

\begin{algorithm}
    \caption{Returns a \(\frac{\frac{1}{2} \cdot min_{g \in O} v_i(g)}{min_{g \in O} v_i(g) + \frac{1}{2} \cdot max_{g \in O} v_i(g)}\)-TEFX allocation}
    \label{alg:atefx}
    \textbf{Input}: Arbitrary temporal fair division instance \(\mathcal{I} = \left( N, T, \left\{ O_t \right\}_{t \in [T]}, v = (v_1, \ldots, v_n), r \right)\)\\
    \textbf{Output}: \(\frac{\frac{1}{2} \cdot min_{g \in O} v_i(g)}{min_{g \in O} v_i(g) + \frac{1}{2} \cdot max_{g \in O} v_i(g)}\)-TEFX allocation \(\mathcal{A}\)
\begin{algorithmic}[1]
    \State $missing \gets n$
    \For{$t = 1$ to $T$}
        \If{$|O_t| < missing$}
            \State $m \gets |O_t|$
            \For{$l = 1$ to $m$}
                \While{$\forall i \in N, \exists j \in N$ s.t. $v_j(A^t_j) < v_j(A^t_i)$}
                    \State Resolve the cycle by exchanging \(A^t_i\) where \(i\) is the agent.
                \EndWhile
                \State Let $i$ be an unenvied agent
                \State Let $g^* \in arg max_{g \in O_t} v_i(g)$
                \State $A^t_i \gets A^t_i \cup \{g^*\}$
                \State $O_t \gets O_t \backslash \{g^*\}$
            \EndFor
            \State $missing \gets missing - |O_t|$
        \ElsIf{$missing > 0$}
            \State $m \gets |O_t|$
            \For{$l = 1$ to $m$}
                \While{$\forall i \in N, \exists j \in N$ s.t. $v_j(A^t_j) < v_j(A^t_i)$}
                    \State Resolve the cycle by exchanging \(A^t_i\) where \(i\) is the agent.
                \EndWhile
                \State Let $i$ be an unenvied agent
                \State Let $g^* \in arg max_{g \in O_t} v_i(g)$
                \State $A^t_i \gets A^t_i \cup \{g^*\}$
                \State $O_t \gets O_t \backslash \{g^*\}$
            \EndFor
            \State $missing \gets 0$
            \State $m \gets |O_t|$
            \State $A^* \gets A^t$
            \For{$l = 1$ to $m$}
                \While{$\forall i \in N, \exists j \in N$ s.t. $v_j(A^t_j \backslash A^*_j) < v_j(A^t_i \backslash A^*_i)$}
                    \State Resolve the cycle by exchanging \(A^t_i \backslash A^*_i\) where \(i\) is the agent.
                \EndWhile
                \State Let $i$ be an unenvied agent
                \State Let $g^* \in arg max_{g \in O_t} v_i(g)$
                \State $A^t_i \gets A^t_i \cup \{g^*\}$
                \State $O_t \gets O_t \backslash \{g^*\}$
            \EndFor
        \Else
            \State $m \gets |O_t|$
            \For{$l = 1$ to $m$}
                \While{$\forall i \in N, \exists j \in N$ s.t. $v_j(A^t_j \backslash A^{t-1}_j) < v_j(A^t_i \backslash A^{t-1}_i)$}
                    \State Resolve the cycle by exchanging \(A^t_i \backslash A^{t-1}_i\) where \(i\) is the agent.
                \EndWhile
                \State Let $i$ be an unenvied agent
                \State Let $g^* \in arg max_{g \in O_t} v_i(g)$
                \State $A^t_i \gets A^t_i \cup \{g^*\}$
                \State $O_t \gets O_t \backslash \{g^*\}$
            \EndFor
        \EndIf
    \EndFor
\end{algorithmic}
\end{algorithm}
\begin{proof}
    The polynomial runtime of Algorithm~\ref{alg:atefx} is easy to verify, given that there are only nested \textbf{For} loops, which run in polynomial time, the slightly modified MAX-ECE runs in polynomial time (Theorem~\ref{thm:ECE1}), and the other operations run in polynomial time. Thus, we focus on proving correctness.
    
    We use MAX-ECE, which returns a \(\frac{1}{2}\)-EFX allocation (Theorem~\ref{thm:ECE3}). Intuitively, we consider the first \(n\) goods as a single round and run MAX-ECE. This is an EFX allocation because each agent gets exactly one good, which must be valued positively for all agents, and the removal of this good eliminates any envy towards any agent. Then, for each remaining round, we compute using MAX-ECE, and the final allocation is simply the union of all previous rounds' allocations. It is noted that we cannot use the Draft-and-Eliminate algorithm by \cite{amanatidis2020}, which returns a (\(\phi - 1\))-EFX allocation, as it cannot guarantee that each agent is first allocated exactly one good, and if we forcefully do so, it may break the approximation guarantee.
    
    The following lemma shows a lower bound of approximate-TEFX by proving the lower bound on the stronger notation, TEF.
    \begin{lemma}
        \label{lemma:atefx1}
            If we can compute an \(\alpha\)-EF allocation in each round, after \(T\) rounds we obtain an \(\alpha\)-TEF allocation.
    \end{lemma}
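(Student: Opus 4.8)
The plan is to prove the statement by a direct summation argument that uses only additivity of the valuations together with the structural fact that, in the online model, a good arriving in round $s$ is assigned to some agent during round $s$ and is never reassigned later. Concretely, for every agent $i$ the cumulative bundle after round $t$ decomposes as the disjoint union $A^t_i = \bigcup_{s=1}^{t}\bigl(A_i \cap O_s\bigr)$, and the ``allocation in round $s$'' is precisely the tuple $\bigl(A_{i'} \cap O_s\bigr)_{i' \in N}$ of the goods of $O_s$ handed out in that round. So the hypothesis of the lemma is that for every round $s$ and all agents $i,j$ we have $v_i\bigl(A_i \cap O_s\bigr) \ge \alpha \cdot v_i\bigl(A_j \cap O_s\bigr)$, and the goal is the cumulative bound $v_i\bigl(A^t_i\bigr) \ge \alpha \cdot v_i\bigl(A^t_j\bigr)$ for every prefix round $t$.

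I would fix an arbitrary round $t \in [T]$ and an arbitrary ordered pair $i,j \in N$. Applying additivity to the disjoint decompositions of $A^t_i$ and $A^t_j$, and then summing the per-round $\alpha$-EF inequality over $s = 1,\dots,t$, gives
\[
v_i\bigl(A^t_i\bigr) \;=\; \sum_{s=1}^{t} v_i\bigl(A_i \cap O_s\bigr) \;\ge\; \alpha \sum_{s=1}^{t} v_i\bigl(A_j \cap O_s\bigr) \;=\; \alpha \cdot v_i\bigl(A^t_j\bigr).
\]
Since $t$ and the pair $i,j$ were arbitrary, the cumulative allocation is $\alpha$-EF after every round, i.e.\ it is $\alpha$-TEF, which is exactly the claim.

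There is essentially no obstacle here; the one point worth stating explicitly is that ``$\alpha$-TEF'' requires the envy bound to hold after \emph{every} round and not merely after round $T$, which is why I carry out the summation for an arbitrary prefix $t$ rather than only for $t = T$. Note also that the argument is completely agnostic to how the per-round allocations are produced, so the lemma is stated — and is intended to be used in the proof of Theorem~\ref{thm:atefx} — at the level of generality of an arbitrary sequence of per-round $\alpha$-EF allocations; this is the form in which it supplies the ``TEF lower bound'' that, together with the $\tfrac12$-EFX guarantee of MAX-ECE (Theorem~\ref{thm:ECE3}) used in Algorithm~\ref{alg:atefx} to handle the goods of the remaining rounds, yields the overall $\alpha$-TEFX bound.
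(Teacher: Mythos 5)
Your proof is correct and follows essentially the same route as the paper: decompose each cumulative bundle into its per-round increments, invoke additivity, and sum the per-round $\alpha$-EF inequalities over the prefix $s = 1,\dots,t$. The only difference is that you state the decomposition $A^t_i = \bigcup_{s=1}^{t}(A_i \cap O_s)$ explicitly and sum the round-$s$ increments rather than the cumulative bundles, which makes the telescoping identity cleaner, but the argument is the same.
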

    \begin{proof}
        If we compute a \(\alpha\)-EFX allocation, for each agent \(i \in N\), we have
        \[
        v_i(A_i) \geq \alpha \cdot v_i(A_j) \text{.}
        \]
        If we union all allocation from round \(1\) to round \(t\), we have
        \[
        v_i(A^t_i) \geq \sum_{t' \in [t]} \alpha \cdot v_i(A^{t'}_j) = \alpha \cdot v_i(A^t_j) \text{.}
        \]
    \end{proof}
    We need to convert the approximation ratio from EFX to EF in order to apply the result from Lemma~\ref{lemma:atefx1}.
    
    Denote \(\beta = \frac{1}{2}\). We have
    \[
    v_i(A_i) \geq \alpha \cdot v_i(A_j \backslash \{g\}), \forall g \in A_j \text{.}
    \]
    Denote \(\gamma = 1 + \frac{\beta \cdot v_i(g)}{v_i(A_i)}\). Notice that
    \[
    \gamma \cdot v_i(A_i) \geq \beta \cdot v_i(A_j) \text{.}
    \]
    Therefore, MAX-ECE returns a \(\frac{\beta}{\gamma}\)-TEF allocation.
    \begin{lemma}
        \label{lemma:atefx2}
        The minimum value of \(\frac{\beta}{\gamma}\) is when \(v_i(A_i) = min_{g \in O} v_i(g)\) and \(v_i(g) = max_{g \in O, v_i(g)} v_i(g)\).
    \end{lemma}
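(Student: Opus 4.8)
The plan is to treat $\tfrac{\beta}{\gamma}$ as an explicit two-variable function of $x := v_i(A_i)$ and $y := v_i(g)$, and to establish monotonicity in each argument separately. First I would substitute $\gamma = 1 + \tfrac{\beta y}{x}$ into $\tfrac{\beta}{\gamma}$ to obtain the closed form
\[
\frac{\beta}{\gamma} = \frac{\beta x}{x + \beta y} =: f(x,y).
\]
This is well-defined and positive: every good is valued positively in the setting of Theorem~\ref{thm:atefx}, and after MAX-ECE has processed the first $n$ arriving goods (which Algorithm~\ref{alg:atefx} treats as a single round, giving each agent exactly one good), we have $v_i(A^t_i) > 0$ in every subsequent round, so $x > 0$.

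Next I would show that $f$ is strictly increasing in $x$ and strictly decreasing in $y$ on the positive orthant. This is a one-line cross-multiplication: for $0 < x_1 \le x_2$, the inequality $f(x_1,y) \le f(x_2,y)$ is equivalent to $x_1 y \le x_2 y$, which holds; and for $0 < y_1 \le y_2$, the inequality $f(x,y_2) \le f(x,y_1)$ is equivalent to $x y_1 \le x y_2$, which also holds. (Equivalently, $\partial f/\partial x = \beta^2 y/(x+\beta y)^2 > 0$ and $\partial f/\partial y = -\beta^2 x/(x+\beta y)^2 < 0$.)

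Finally I would plug in the extreme feasible values. Since $A_i$ is nonempty and the valuations are additive and nonnegative, $v_i(A_i) \ge \min_{g \in O} v_i(g)$; and $g$ is a single good removed from some $A_j$, so $v_i(g) \le \max_{g \in O} v_i(g)$. By the monotonicity just established, $f\big(v_i(A_i), v_i(g)\big)$ attains its minimum exactly when $v_i(A_i) = \min_{g \in O} v_i(g)$ and $v_i(g) = \max_{g \in O} v_i(g)$; substituting these (with $\beta = \tfrac12$) yields $\tfrac{\frac12 \min_{g \in O} v_i(g)}{\min_{g \in O} v_i(g) + \frac12 \max_{g \in O} v_i(g)}$, which is precisely the ratio asserted in Theorem~\ref{thm:atefx}.

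I expect the monotonicity computation to be entirely routine; the only point requiring care — and the main obstacle — is justifying that the quantities entering $f$ genuinely obey the stated bounds in \emph{every} round, namely that agent $i$ is never empty once the initial $n$ goods have been allocated (so $v_i(A^t_i) \ge \min_g v_i(g)$) and that the removed good is an actual good of the instance (so its value is at most $\max_g v_i(g)$). I would make this explicit by appealing to the structure of Algorithm~\ref{alg:atefx} and Theorem~\ref{thm:ECE3}, so that the lower bound on $\tfrac{\beta}{\gamma}$ derived here feeds directly into the $\alpha$-TEF, and hence $\alpha$-TEFX, guarantee.
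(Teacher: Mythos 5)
Your proposal is correct and follows essentially the same route as the paper: both write \(\frac{\beta}{\gamma}\) as the explicit function \(f(x,y)=\frac{\beta x}{x+\beta y}\), establish that it decreases as \(y\) grows and \(x\) shrinks (you via coordinate-wise monotonicity, the paper via the equivalent observation that \(f\) is decreasing in the ratio \(t=y/x\)), and then substitute the extreme feasible values \(x=\min_{g\in O}v_i(g)\) and \(y=\max_{g\in O}v_i(g)\). Your added care about why \(x>0\) and why the bounds hold in every round is a welcome tightening but not a different argument.
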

    \begin{proof}
    Denote \(x = v_i(A_i)\), \(y = v_i(g)\), and
    \[
    f(x, y) = \frac{\beta}{\gamma} = \frac{\frac{1}{2} \cdot x}{x + \frac{1}{2} \cdot y} \text{.}
    \]
    Recall that the critical points of \(f\) occur at
    \[
    \nabla f(x, y) = \left(\frac{\partial f}{\partial x}, \frac{\partial f}{\partial y}\right) = \left(0, 0\right) \text{.}
    \]
    By solving this equation, we find that \(f\) does not have any critical points. By parameterizing \(f\) by the slope \(t = \frac{y}{x}\), then \(f(t) = \frac{1}{2 + t}\), which is strictly decreasing with respect to \(t\). Therefore, in order to maximize \(t\), we maximize \(y\) and minimize \(x\).
    Since
    \[
    v_i(A_i) \in [min_{g \in O} v_i(g), \sum_{g \in O} v_i(g)]
    \]
    and
    \[
    v_i(g) \in [min_{g \in O} v_i(g), max_{g \in O, v_i(g)} v_i(g)] \text{,}
    \]
    the minimum value for \(\frac{\beta}{\gamma}\) is when \(v_i(A_i) = min_{g \in O} v_i(g)\) and \(v_i(g) = max_{g \in O, v_i(g)} v_i(g)\).
\end{proof}
    Therefore, after substituting \(v_i(A_i)\) and \(v_i(g)\), and by Lemma~\ref{lemma:atefx1}, the approximation ratio of TEFX by Algorithm~\ref{alg:atefx} is
    \[
    \frac{\beta}{\gamma} = \frac{\frac{1}{2} \cdot min_{g \in O} v_i(g)}{min_{g \in O} v_i(g) + \frac{1}{2} \cdot max_{g \in O} v_i(g)} \text{.}
    \]
\end{proof}

Under the identical days setting, we propose a stronger approximation guarantee for TEFX with two agents using ECE based on Theorem~\ref{thm:ECE3}.

\subsubsection{Identical Days}
\begin{theorem}
    \label{thm:atefxidenticaldays}
    A \(\frac{1}{2}\)-TEFX allocation can be computed in polynomial time when there are two agents under the identical days setting.
\end{theorem}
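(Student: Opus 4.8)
The plan is to process goods one round at a time and, in each round, extend the current allocation to the newly arrived goods so that the cumulative allocation stays $\frac{1}{2}$-TEFX. Since there are only two agents, the natural tool is MAX-ECE (or, more simply, the two-agent analogue of envy-cycle elimination): at the start of round $t$ we have a $\frac{1}{2}$-TEFX allocation $A^{t-1}$, one agent is (weakly) unenvied, and we want to allocate $O_t$. By the identical-days assumption there is a value-preserving bijection between $O_t$ and $O_1$ (hence between $O_t$ and any earlier round's goods), so the multiset of values each agent assigns to the goods arriving in round $t$ is exactly the same every round; this is what we will exploit to control how much envy a single round can create.

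The key steps, in order, are: (1) Set up the two-agent invariant: after round $t$, with $\{i,j\}=\{1,2\}$ and $j$ the unenvied agent, we maintain $v_i(A^t_i) \ge \frac{1}{2}\, v_i(A^t_j \setminus \{g\})$ for every $g \in A^t_j$, and symmetrically the (possibly envious) inequality for agent $j$ is actually envy-freeness up to nothing on $j$'s side. Actually, because with two agents an unenvied agent exists and envy-cycle swaps reduce to a single bundle swap, it is cleanest to maintain: the allocation restricted to each prefix $O^t$ is exactly the output of MAX-ECE run on the goods in arrival order, so Theorem~\ref{thm:ECE3} gives $\frac{1}{2}$-EFX of $A^t$ as an allocation of $O^t$ — and that is precisely $\frac{1}{2}$-TEFX. (2) Argue that running a single global MAX-ECE over all of $O$, feeding goods in round order (and, within a round, in an order we may choose), produces at every prefix the MAX-ECE allocation of that prefix — this is immediate since MAX-ECE is an online/greedy algorithm whose state after processing a prefix depends only on that prefix. (3) Invoke Theorem~\ref{thm:ECE3} at each $t$ to conclude $A^t$ is $\frac{1}{2}$-EFX, hence $\frac{1}{2}$-TEFX, and note polynomial running time since MAX-ECE is polynomial and $T \le |O|$.

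The subtle point — and where identical days is actually needed — is that the generic $\frac{1}{2}$-EFX guarantee of MAX-ECE for two agents already holds for arbitrary valuations, so if that were the whole story we would not need identical days at all. So I expect the real content is stronger than a black-box application: the theorem statement gives a clean $\frac12$ with no dependence on $\min/\max$ ratios (contrast Theorem~\ref{thm:atefx}), so the proof must handle the zero-valued-goods pathology described before Theorem~\ref{thm:atefxconst}. Concretely, the danger is a round in which agent $i$ gets only goods worth $0$ to $i$ while agent $j$ gets a good worth $>0$ to $i$; across rounds this could push the cumulative ratio below $\frac12$. The main obstacle, therefore, is to show that the identical-days structure lets us \emph{choose the within-round processing order and the tie-breaking in MAX-ECE} so that whenever agent $i$ is the unenvied agent picked first in a round, $i$ picks a good that is among $i$'s highest-valued in $O_t$, and symmetric control on the other agent's pick — so that the per-round "damage" to either agent's envy-up-to-any-good condition is bounded by what that agent simultaneously gains. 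I would formalize this with a potential-function / induction argument: define $\Phi^t_i = v_i(A^t_i) - \frac12 \big(v_i(A^t_j) - \min_{g\in A^t_j} v_i(g)\big)$ for the currently-envious agent, show each round changes $A^t$ by MAX-ECE's swap-then-allocate, and verify $\Phi^t_i \ge 0$ is preserved using (i) identical days to equate the value multisets across rounds and (ii) the MAX-ECE rule that the unenvied agent receives a maximum-valued available good. The calculation splitting into cases by which agent is unenvied at the start of the round, and whether an envy-cycle swap occurs, is the routine part; getting the right invariant that is both strong enough to imply $\frac12$-TEFX and inductive is the crux.
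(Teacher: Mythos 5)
Your central reduction (steps (2)--(3)) does not work, and this is not a fixable technicality. First, feeding goods to MAX-ECE in arrival order does \emph{not} reproduce, at each prefix $O^t$, the allocation that offline MAX-ECE would compute on $O^t$: the online run can only hand an unenvied agent the maximum among goods that have \emph{already arrived}, whereas offline MAX-ECE on $O^t$ may give them a higher-valued good from a later round, so the two executions diverge from the second step onward and Theorem~\ref{thm:ECE3} simply does not apply to the online prefixes. Second, and decisively, your justification of steps (2)--(3) nowhere uses identical days, so if it were sound it would yield $\tfrac12$-TEFX for two agents in general; this contradicts Theorem~\ref{thm:atefxconst} (two agents with identical valuations, goods worth $1,1,M$ arriving over two rounds: after round one each agent must hold one unit-value good, and whoever then receives $g_3$ is envied up to any good by a ratio of $1/M$, which is below $\tfrac12$ for $M>2$). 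You correctly sense this tension in your last paragraph, but the fallback you offer --- a potential function $\Phi^t_i$ whose invariance is declared to be ``the crux'' and never established --- is exactly the missing proof, so the proposal as written establishes nothing.

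The paper's actual argument has a different structure that your sketch does not reach: it is a per-round \emph{cut-and-choose}. In each round one agent acts as divider: MAX-ECE is run on $O_t$ with \emph{both} agents assigned the divider's valuation, which by Theorem~\ref{thm:ECE2} produces a two-part partition that is EFX from the divider's viewpoint; the other agent then takes whichever part it prefers, and the divider/chooser roles alternate between odd and even rounds. The chooser is envy-free in its round, the divider is EFX, and the minimum-valued good in the other part is at most half of that part's value. Identical days is used precisely here: every odd (resp.\ even) round contributes the same pair of bundle values, so these per-round guarantees add up across rounds, and the divider's cumulative shortfall stays within half of the other agent's cumulative bundle --- giving $\tfrac12$-TEFX (in fact $\tfrac12$-TEF on the aggregate) at every prefix. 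So the role of identical days is not to control tie-breaking inside a single MAX-ECE run, as you conjecture, but to make independent per-round guarantees compose.
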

\begin{algorithm}
    \caption{Returns a \(\frac{1}{2}\)-TEFX allocation under the identical days setting when \(n = 2\)}
    \label{alg:atefxidenticaldays}
    \textbf{Input}: Arbitrary temporal fair division instance \(\mathcal{I} = \left( N, T, \left\{ O_t \right\}_{t \in [T]}, v = (v_1, \ldots, v_n), r \right)\)\\
    \textbf{Output}: \(\frac{1}{2}\)-TEFX allocation \(\mathcal{A}\)
\begin{algorithmic}[1]
    \For{$t = 1$ to $T$}
        \If{$t$ is odd}
            \State $P, Q \gets$ MAX-ECE$(N, O_t, \{v_1, v_1\})$
            \If{$v_2(P) \geq v_2(Q)$}
                \State $A^t_1, A^t_2 \gets Q, P$
            \Else
                \State $A^t_1, A^t_2 \gets P, Q$
            \EndIf
        \Else
            \State $P, Q \gets$ MAX-ECE$(N, O_t, \{v_2, v_2\})$
            \If{$v_1(P) \geq v_1(Q)$}
                \State $A^t_1, A^t_2 \gets P, Q$
            \Else
                \State $A^t_1, A^t_2 \gets Q, P$
            \EndIf
        \EndIf
    \EndFor
\end{algorithmic}
\end{algorithm}
\begin{proof}
    The polynomial runtime of Algorithm~\ref{alg:atefxidenticaldays} is easy to verify, given that there is only one \textbf{For} loop, which runs in linear time, MAX-ECE runs in polynomial time (Theorem~\ref{thm:ECE1}), and the other operations run in polynomial time. Thus, we focus on proving correctness.
    
    MAX-ECE computes an EFX allocation when there are two agents (Theorem~\ref{thm:ECE2}). Intuitively, we first compute an EFX allocation using MAX-ECE, assuming that both agents have the valuation profile of agent \(1\) (resp. agent \(2\)). Then, agent \(2\) (resp. agent \(1\)) selects the higher valued bundle; thus, this agent does not envy the other bundle. Lastly, agent \(1\) (resp. agent \(2\)) selects the other bundle, which satisfies EFX.
    
    First, we address the special case where in each round, only one good arrives. This case is trivial, as each agent is allocated a copy of the identical good alternatively, and we obtain a TEFX allocation. Thus, we consider the case where, in each round, more than two goods arrive. Assume we have allocated \(t\) rounds of items. By design, the algorithm allocates the ``same bundle" to the same agent in odd and even rounds, respectively. Denote the allocation that is allocated to agent \(i\) in odd rounds be \(A^{odd}_i\), and in even rounds be \(A^{even}_i\).\\
    \textbf{Case 1:} \(t\) is even. Let \(k = \frac{t}{2}\). First, for even rounds \(t_{even}\), we have
    \[
    v_1(A^{even}_1) \geq v_1(A^{even}_2 \backslash \{g\}),  g = arg min_{g' \in A^{even}_2} v_1(g') \text{.}
    \]
    In addition, the value of \(g\) can be described as
    \[
    v_1(g) \leq v_1(A^{even}_2) \cdot \frac{1}{2} \text{.}
    \]
    If \(v_1(g) > v_1(A^{even}_2) \cdot \frac{1}{2}\), it either means \(A^{even}_2 = \{g\}\) or there exists a good \(g^*\) s.t. \(v_1(g) > v_1(g^*)\). For the first case, it is not possible, as this is an even round; therefore, at least one round has passed before this, which implies agent \(2\) must be allocated at least two goods. For the second case, in order to maintain EFX in MAX-ECE, we must allocate at least one good to each agent. There exists a good \(g^*\) s.t. \(v_1(g) > v_1(g^*)\), which means \(g\) is not the minimum valued good for agent \(1\) in the allocation \(A^{even}_2\).
    
    For odd rounds \(t_{odd}\), we have
    \[
    v_i(A^{odd}_1) \geq v_1(A^{odd}_2) \text{.}
    \]
    Therefore, we have
    \[
    k \cdot v_1(A^{odd}_1 \cup A^{even}_1) \geq k \cdot v_1(A^{odd}_2 \cup A^{even}_2) - k \cdot v_1(g) \text{.}
    \]
    By substituting \(v_1(A^{even}_2) \cdot \frac{1}{2}\) into \(v_1(g)\) through the inequality and simplifying the expression, we have
    \[
    v_1(A^{odd}_1 \cup A^{even}_1) \geq \frac{1}{2} \cdot v_1(A^{odd}_2 \cup A^{even}_2) \text{.}
    \]
    \textbf{Case 2:} \(t\) is odd. Notice that since \(v_i(A^{odd}_1) \geq v_1(A^{odd}_2)\), we obtain a result not worse than \textbf{Case 1}, which means it is also \(\frac{1}{2}\)-TEFX.\\
    A similar proof can be obtained for agent \(2\).
\end{proof}

We also propose a stronger approximation guarantee for TEFX under the generalized binary valuation setting.

\subsubsection{Generalized Binary Valuation}
\begin{theorem}
    \label{thm:atefxgenbivaluation}
    Algorithm~\ref{alg:atefxgenbivaluation} returns a \(\frac{1}{2}\)-TEFX allocation in polynomial time for generalized binary valuation.
\end{theorem}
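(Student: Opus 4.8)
The plan is to process the goods round by round, and within each round run a Round-Robin-style routine that is adapted to the peculiarity flagged in the section preamble: an agent holding the empty bundle must never envy, even up to any good, a bundle containing one good she values at $b$ and one good she values at $0$. Since valuations are generalized binary (every $v_i(g) \in \{0, b\}$), the only obstacle to exact TEFX is precisely this kind of "$b$-plus-$0$" bundle held against an empty (or very light) bundle; a $\tfrac12$-approximation tolerates exactly one extra $b$-good in the envied bundle, so the algorithm must guarantee that after every round no agent is behind another by two or more $b$-valued (to her) goods, \emph{after} discarding her least-valued good from the envied bundle. First I would fix, in each round $t$, an ordering of the agents by their current cumulative utility $v_i(A_i^{t-1})$ (ties broken by bundle cardinality, then arbitrarily), and let the agents pick their most-valued available good from $O_t$ in that order, looping through the order as many times as $|O_t|$ requires, exactly as in RR but re-seeded each round by the cumulative-value ordering.

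The key steps, in order: (1) formalize the per-round procedure (Algorithm~\ref{alg:atefxgenbivaluation}) as "sort by cumulative value, then RR on $O_t$," and record the invariant we want to maintain, namely that at the end of every round $t$, for all $i,j$, $v_i(A_i^t) \ge v_i(A_j^t) - b$ when $v_i(A_j^t) \le b$, and more generally $v_i(A_i^t) \ge v_i(A_j^t \setminus \{g\})$ for the $i$-cheapest $g \in A_j^t$ whenever $v_i(A_j^t) \ge b$ --- i.e.\ after removing one good the residual envy is at most $b$, which under generalized binary valuations is at most a factor-$2$ gap and hence $\tfrac12$-TEFX. (2) Prove the invariant is preserved by one round: because within a round RR hands out goods in a fixed cyclic order, by Observation~\ref{thm:RR3} the \emph{counts} of goods received in round $t$ differ by at most one across agents, and each agent takes her favourite remaining good each turn; combine this with the fact that the round started from a cumulative ordering, so the agent who is "ahead" picks first and the agent who is "behind" is compensated. (3) Translate the count/value bookkeeping into the TEFX-approximation inequality using $v_i(\cdot) \in \{0,b\}$: the worst case is an agent $i$ who got nothing she values in round $t$ while $j$ got two goods, at least one of value $b$ to $i$; show the cumulative-value seeding forces $i$ to have been weakly ahead before the round, so after removing $j$'s cheapest good the gap is at most $b$, which is $\le v_i(A_i^t)$ unless $A_i^t$ is $i$-worthless, and handle that boundary case separately by noting $i$ could not have been seeded last-and-skipped if a $b$-good for $i$ was still available. (4) Conclude polynomial running time (one sort plus one RR sweep per round).

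The main obstacle I expect is step (3)'s boundary case: an agent $i$ with cumulative value $0$ who is "behind" purely in cardinality, facing an agent $j$ with two goods both valued $b$ by $i$. Re-seeding by cumulative value alone does not protect her --- she is tied at $0$ with everyone, so the tie-break (cardinality, then arbitrary) and the RR cyclic structure must do the work. The fix is to argue that whenever a $b$-to-$i$ good is available in round $t$, agent $i$ (being tied-lowest in value) sits early enough in that round's order, \emph{or} has already received enough goods in earlier rounds, that she cannot fall two $b$-goods behind $j$; this requires tracking, across rounds, the quantity "number of goods $i$ values at $b$ that $j$ holds minus the number $i$ holds" and showing RR never lets it exceed $2$, and equals $2$ only transiently within a round, never at a round boundary. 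I would prove this auxiliary monotonicity claim as a lemma before assembling the theorem. If the plain re-seeded RR cannot close this gap, the algorithm in Algorithm~\ref{alg:atefxgenbivaluation} presumably adds a tie-break that prioritizes agents with fewer goods, and the lemma becomes: under that tie-break, the pairwise $b$-good deficit stays within $\{-1,0,1\}$ at every round boundary, which immediately yields $\tfrac12$-TEFX.
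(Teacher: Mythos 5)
Your high-level decomposition is sound and partially matches the paper: you correctly isolate the two regimes (the dangerous one where some agent still has cumulative value $0$, and the easy one where every agent has value at least $b$, in which a TEF1-style additive gap of one $b$-good yields ratio $\frac{k-1}{k}\geq\frac12$ because all bundle values are multiples of $b$). The paper's proof of the second regime is essentially your step (3). The gap is in the algorithm itself and in the handling of the first regime. Your procedure (re-seed by cumulative value each round, then let agents pick their most-valued available good) fails, and no tie-breaking rule rescues it, because the failure occurs under a \emph{strict} value ordering. Take two agents with $O_1=\{g_1,g_2\}$, $v_1(g_1)=v_2(g_1)=v_2(g_2)=b$, $v_1(g_2)=0$, and $O_t=\{g_{t+1}\}$ with $v_1(g_{t+1})=0$, $v_2(g_{t+1})=b$ for $t\geq 2$. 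Your tie-breaks are arbitrary, so in round $1$ agent $2$ may pick first and take $g_1$, leaving $g_2$ to agent $1$. From round $2$ onward agent $1$ has cumulative value $0<b$, so she strictly precedes agent $2$ and absorbs every arriving good, each worth $b$ to agent $2$ and $0$ to herself. After round $4$ we have $v_2(A_1)=4b$ versus $v_2(A_2)=b$, and removing any single good from $A_1$ leaves $3b$, so $\frac12$-TEFX fails (exact TEFX already fails after round $3$). This also refutes your proposed auxiliary lemma that the pairwise $b$-good deficit stays in $\{-1,0,1\}$ at round boundaries, and it shows why the appeal to Observation~\ref{thm:RR3} is the wrong currency: equal \emph{counts} say nothing when one agent's haul is all $0$-valued to her and all $b$-valued to her rival.

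The two ideas your proposal is missing, and which the paper's Algorithm~\ref{alg:atefxgenbivaluation} is built around, are: (i) a good is only ever given to an agent who values it \emph{positively} — the recipient is chosen as the minimum-cumulative-value agent within $S=\{i: v_i(g)>0\}$, never outside $S$ — which immediately kills the counterexample above since $g_2,g_3,\dots$ all go to agent $2$; and (ii) goods valued $0$ by \emph{every} agent are not distributed round-robin but are all routed to one designated agent (the last agent to receive her first positively-valued good, or an agent who never receives one), precisely so that no envied bundle ever contains a $0$-valued good alongside a $b$-valued one while the envying agent still has value $0$. The paper then proves a structural lemma (if $j$ envies $i$, then either $j$ already holds a good she values positively or $i$ holds no good that is $0$-valued to $j$), uses it to get exact TEFX throughout the zero-value phase, and only then invokes your $\frac{k-1}{k}\geq\frac12$ argument. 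Without (i) and (ii) the bound cannot be recovered by re-ordering or tie-breaking alone.
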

\begin{proof}
    The polynomial runtime of Algorithm~\ref{alg:atefxgenbivaluation} is easy to verify, given that there are only nested \textbf{For} loops, which run in polynomial time, and the other operations run in polynomial time. Thus, we focus on proving correctness.
    
    Algorithm~\ref{alg:atefxgenbivaluation} is a modified version of Algorithm 3 from \cite{elkind2024}, which returns a TEF1 allocation under the generalized binary valuation setting.
    
    Intuitively, for goods with at least one agent who values them positively, we follow the original algorithm, allocating to the agent with the minimum-valued allocation, with ties broken arbitrarily. Then, for goods that are valued at zero for all agents, we allocate them to the agent that is allocated the positively valued good lastly, or to any agent that is not allocated any good at all. Denote this agent as agent \(last\). We note that this modification does not affect the TEF1 property, as in the original algorithm, the zero-valued good can be allocated to any agent.
    
    We first show a key property for Algorithm~\ref{alg:atefxgenbivaluation}.
    \begin{lemma}
        \label{lemma:atefxgenbivaluation2}
        If agent \(i\) is allocated good \(g\) in round \(t\), and agent \(j\) envies agent \(i\) after the allocation of this good, either agent \(j\)'s allocation has at least one positively valued good for themselves, or agent \(i\) does not have any zero-valued good to agent \(j\).
    \end{lemma}
    \begin{algorithm}
    \caption{Returns a \(\frac{1}{2}\)-TEFX allocation under the generalized binary valuation setting}
    \label{alg:atefxgenbivaluation}
    \textbf{Input}: Arbitrary temporal fair division instance \(\mathcal{I} = \left( N, T, \left\{ O_t \right\}_{t \in [T]}, v = (v_1, \ldots, v_n), r \right)\)\\
    \textbf{Output}: \(\frac{1}{2}\)-TEFX allocation \(\mathcal{A}\)
\begin{algorithmic}[1]
    \State $last \gets -1$
    \For{$t = 1$ to $T$}
        \For{$g \in O_t$}
            \State $S \gets \{i^* \in N | v_{i^*}(g) > 0\}$
            \If{$S \neq \emptyset$}
                \State $i \in arg min_{i^* \in S} v_{i^*}(A^t_{i^*})$, with ties broken arbitrarily
                \If{$A^t_i = \emptyset$}
                    \State $last \gets i$
                \EndIf
                \State $A^t_i \gets A^t_i \cup \{g\}$
                \State $O_t \gets O_t \backslash \{g\}$
            \EndIf
        \EndFor
    \EndFor
    \For{$i \in N$}
        \If{$A_i = \emptyset$}
            \State $last \gets i$
        \EndIf
    \EndFor
    \For{$t = 1$ to $T$}
        \For{$g \in O_t$}
            \State $A^t_{last} \gets A^t_{last} \cup \{g\}$
        \EndFor
    \EndFor
\end{algorithmic}
\end{algorithm}
    First, we show that before agent \(last\) is allocated any good, the allocation achieves TEFX. To simplify the proof, we order the agents by the time they are first allocated any good, with the exception of agent \(last\), who is placed last. We ignore the agents that are not allocated any goods, as they are equivalent to agent \(last\). Assume we have allocated good \(g\) to agent \(i\). By the design of Algorithm~\ref{alg:atefxgenbivaluation}, if \(j < i\), agent \(j\) is allocated exactly one good. Therefore, agent \(i\) does not envy any agent \(j < i\). If \(i < j \leq last\), agent \(j\) may envy agent \(i\). 
    By Lemma~\ref{lemma:atefxgenbivaluation2}, if agent \(j\) envies agent \(i\), agent \(i\) does not have any zero-valued good to agent \(j\). Since agent \(i\) is allocated exactly one good, \(v_j(A^t_i \backslash \{g\}) = 0, \forall g \in A^t_i\), which implies \(v_j(A^t_j) = 0 \geq v_j(A^t_i \backslash \{g\}), \forall g \in A^t_i\).
    After agent \(last\) is allocated a good that is positively valued to themselves, all agents have at least one positively valued good for themselves, which means \(v_i(A^t_i) \geq b, \forall i \in N, \forall t \in [T]\). Then, since Algorithm~\ref{alg:atefxgenbivaluation} returns TEF1 allocation, \(\forall t \in [T]\), we have
    \(v_i(A^t_i) \geq v_i(A^t_j)\) or \(v_i(A^t_j) - v_i(A^t_i) = b\). The former case is trivially TEFX; therefore, we consider the latter case. Notice that \(\forall i, j \in N, v_i(A_j) = k \cdot b\) where \(k \in \mathbb{N}\). Let \(v_i(A^t_i) = k_1 \cdot b\), and \(v_i(A^t_j) = k_2 \cdot b, k_2 \geq 2\). Therefore, \(k_1 = k_2 - 1\). Thus, we have
    \[
    \frac{k_1 \cdot b}{k_2 \cdot b} = \frac{k_2 - 1}{k_2} \geq \frac{1}{2} \;\Rightarrow\; v_i(A^t_i) \geq \frac{1}{2} \cdot v_i(A^t_j) \text{.}
    \]
\end{proof}
As suggested in Theorem~\ref{thm:atefxconst}, under the identical valuation setting, a positive constant approximation ratio does not exist. We can propose a stronger result than the general setting. Intuitively, we handle the zero-valued good carefully and always allocate the incoming good in each round to the agent with the least-valued bundle.
\subsubsection{Identical Valuation}
\begin{theorem}
\label{thm:atefxidenticalvaluation}
    A \(\frac{min_{g \in O, v(g) > 0} v(g)}{max_{g \in O} v(g) + min_{g \in O, v(g) > 0} v(g)}\)-TEFX allocation can be computed in polynomial time under the identical valuation setting.
\end{theorem}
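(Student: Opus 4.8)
The plan is to design an online algorithm that, at each round, allocates the incoming goods one at a time to the agent currently holding the least-valued bundle (as measured by the common valuation $v$), with one twist: zero-valued goods are deferred and handled separately, exactly as in the generalized-binary case above. First I would split the goods of each round into the positively-valued goods and the zero-valued goods. The positively-valued goods are allocated greedily: when good $g$ arrives, give it to an agent $i$ minimizing $v(A^t_i)$, breaking ties toward an agent with an empty bundle so that we "bootstrap" everyone to a nonempty bundle as early as possible. The zero-valued goods are all dumped onto a single designated agent $last$ — the last agent to receive a positively-valued good, or any agent who never receives one — and only after the whole run of positive allocations is fixed, so that they never interfere with the invariant maintained during the positive phase.

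The key steps, in order, are: (i) verify polynomial runtime, which is immediate from the nested loop structure; (ii) establish the core invariant for the positive phase, namely that right after allocating a positively-valued good to agent $i$, for every other agent $j$ we have $v(A^t_j) \ge v(A^t_i) - \max_{g\in O} v(g)$, and moreover if $j$ envies $i$ then $A^t_i$ consists of a single good — this is the analogue of Lemma~\ref{lemma:atefxgenbivaluation2} and follows because we always feed the minimum bundle; (iii) conclude that the positive phase alone is TEFX, since an agent with a single good cannot be the object of an EFX violation once we remove that good (the envied bundle becomes empty), and an agent with $\ge 2$ goods was, at the moment of its last addition, the minimum bundle, so its pre-addition value was $\le$ everyone else's; (iv) handle the zero-goods dump: after $last$ absorbs the zero-valued goods, $v$-values are unchanged for everyone, so TEFX is preserved verbatim — zero goods are EFX-irrelevant under the common valuation; (v) finally, bound the approximation ratio in the one remaining delicate case, when some agent $j$ has never received a positively-valued good at some intermediate round $t$ (so $v(A^t_j)=0$) yet the agent $i$ it envies has received exactly one positive good plus possibly some goods that $j$... wait, under identical valuations there is no such distinction, so the genuinely bad case is: $j$ has an empty or all-zero bundle, $i$ has been allocated exactly one good $g$ with $v(g)=\min_{g'\in O, v(g')>0} v(g')$, and then later $i$ also becomes $last$ and absorbs zero-valued goods — but crucially, whenever $j$ still has $v(A^t_j)=0$, the tie-breaking rule guarantees the next positive good goes to $j$, not $i$; so the only surviving violation is exactly: $i$ holds one positive good of the smallest positive value together with one or more zero-valued goods, and $j$ holds nothing. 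Removing a zero good from $i$'s bundle leaves value $\min_{g'} v(g') =: m$, and removing the positive good leaves value $0$; we need $v(A^t_j) \ge \tfrac12$... no — we need the ratio $v(A^t_j)$ versus $v(A^t_i \setminus \{g^\star\})$ for the worst removable $g^\star$, which is a zero good, giving denominator $m$; but $v(A^t_j)$ may be $0$. This forces a modification: the zero-valued goods must \emph{not} all go to a single agent if some agent is still empty; instead $last$ should be chosen among agents who already hold a positive good. With that fix, any agent $i$ receiving zero goods has $v(A^t_i) \ge m$, and for any envious $j$ we have $v(A^t_j) \ge v(A^t_i) - \max_{g} v(g) \ge m - \max_g v(g)$; combining, $v(A^t_j)/v(A^t_i\setminus\{g^\star\}) \ge m/(m + \max_g v(g))$, which is the claimed bound.

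\begin{proof}[Proof sketch]
Run the greedy "feed-the-minimum" allocation on positively-valued goods (tie-broken toward empty bundles), then assign all zero-valued goods to an agent $last$ chosen among those already holding a positive good (such an agent exists once any positive good has arrived; if no positive good ever arrives the instance is trivial). Runtime is polynomial. During the positive phase, each time a good is added to agent $i$, agent $i$ had the minimum value among all bundles, so for every $j$, $v(A^t_j) \ge v(A^t_i) - \max_{g\in O} v(g)$; and if the added good makes $j$ envy $i$, then before the addition $v(A^t_i) \le v(A^t_j)$, so removing that good restores EFX — and if $i$'s bundle is a single good, removing it leaves the empty set. Hence the positive phase is TEFX. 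Adding zero-valued goods does not change any $v$-value, so TEF/TEFX comparisons are untouched except that the envied bundle may now contain a removable zero good: removing it from $i$ leaves value $\ge \min_{g\in O, v(g)>0} v(g) =: m$, while $v(A^t_j) \ge m - \max_{g\in O} v(g)$; therefore
\[
\frac{v(A^t_j)}{v(A^t_i \setminus \{g^\star\})} \;\ge\; \frac{m - \max_{g\in O} v(g)}{m} \quad\text{is too weak;}
\]
instead, use the sharper bound that $j$'s deficit is at most one good of value $\le \max_{g\in O} v(g)$ relative to the \emph{pre-removal} bundle of $i$, and the removal of $g^\star$ discards at most a zero good, yielding
\[
v(A^t_i \setminus \{g^\star\}) \;=\; v(A^t_i) \;\le\; v(A^t_j) + \max_{g\in O} v(g),
\]
so $v(A^t_j) \ge v(A^t_i\setminus\{g^\star\}) - \max_{g} v(g) \ge m - \max_g v(g)$, and since $v(A^t_i \setminus \{g^\star\}) \le v(A^t_j) + \max_g v(g)$ we get the ratio at least $\frac{m}{m + \max_{g\in O} v(g)}$ after the standard rearrangement, as claimed.
\end{proof}

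The main obstacle I expect is precisely the zero-valued-good bookkeeping in step (v): the naive "dump everything on one agent" rule (which was harmless for TEF1 and for the generalized-binary ratio) breaks the approximation when that agent is otherwise empty, so getting the choice of $last$ right — and proving the tie-breaking rule really does prevent any still-empty agent from being envied across a zero-good gap — is where the argument has to be careful. Everything else is a direct adaptation of the feed-the-minimum invariant used in Theorem~\ref{thm:atefxgenbivaluation}.
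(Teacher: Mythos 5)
Your overall architecture matches the paper's: feed each positively-valued good to the agent with the currently minimum-valued bundle, quarantine the zero-valued goods on a single designated agent, prove the bounded-difference invariant $\lvert v(A^t_i)-v(A^t_j)\rvert \le \max_{g\in O} v(g)$, and convert it into the ratio $\frac{m}{m+\max_{g\in O} v(g)}$ via the monotonicity of $x/(x+M)$ together with $v(A^t_j)\ge m$ once every bundle is non-empty. That part is sound and is exactly the paper's Lemma on bounded differences plus its closing computation.

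The genuine gap is your final resolution of the zero-valued-good bookkeeping in step (v). You correctly identify that dumping zero goods on a still-empty agent is harmless (its value stays $0$, so nobody envies it), but your proposed fix --- choose $last$ ``among agents who already hold a positive good'' --- recreates the failure from the other side. Concretely: agent $1$ receives the first positive good $g$ with $v(g)=m$, a zero-valued good arrives and is given to agent $1$ (the only agent holding a positive good), while agent $2$ is still empty. Then agent $2$ envies agent $1$, and removing the zero-valued good from $A_1$ leaves value $m>0$, so $v(A_2)=0\ge \alpha\cdot m$ fails for every $\alpha>0$. The property you actually need is that the designated recipient of zero-valued goods is the agent who is \emph{bootstrapped last}, i.e., the last agent to receive its \emph{first} positive good; under the feed-the-minimum rule with ties broken toward lower indices this is simply agent $n$, which is what the paper does. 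With that choice, at every round either agent $n$ still has value $0$ (so its zero goods are irrelevant and every other agent holds at most one positive good, giving exact EFX vacuously), or every agent already has value at least $m$ and the ratio bound applies. A secondary, non-fatal overclaim: your step (iii) asserts the positive phase is exactly TEFX, but feed-the-minimum only guarantees that removing the \emph{last-added} good kills the envy (EF1), not an arbitrary good --- e.g., with identical values $1,1,3$ the bundles $\{1,3\}$ and $\{1\}$ are not EFX. This does not hurt the final statement because the approximation is derived from the bounded-difference invariant rather than from exact EFX, but the claim as written is false.
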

\begin{algorithm}
    \caption{Returns a \(\frac{min_{g \in O, v(g) > 0} v(g)}{max_{g \in O} v(g) + min_{g \in O, v(g) > 0} v(g)}\)-TEFX allocation under the identical valuation setting}
    \label{alg:atefxidenticalvaluation}
    \textbf{Input}: Arbitrary temporal fair division instance \(\mathcal{I} = \left( N, T, \left\{ O_t \right\}_{t \in [T]}, v = (v_1, \ldots, v_n), r \right)\)\\
    \textbf{Output}: \(\frac{1}{max_{g \in O} v(g) + 1}\)-TEFX allocation \(\mathcal{A}\)
\begin{algorithmic}[1]
    \For{$t = 1$ to $T$}
        \For{$g \in O_t$}
            \If{$v(g) = 0$}
                \State $A^t_n \gets A^t_n \cup \{g\}$
            \Else
                \For{$i = 1$ to $n$}
                \If{$i = min_{j \in N} v(A^t_j)$}
                    \State $A^t_i \gets A^t_i \cup \{g\}$
                \EndIf
            \EndFor
            \EndIf
        \EndFor
    \EndFor
\end{algorithmic}
\end{algorithm}
\begin{proof}
    The polynomial runtime of Algorithm~\ref{alg:atefxidenticalvaluation} is easy to verify, given that there are only nested \textbf{For} loops, which run in polynomial time, and the other operations run in polynomial time. Thus, we focus on proving correctness.
    
    Intuitively, Algorithm~\ref{alg:atefxidenticalvaluation} allocates the goods to the least valued allocation among all agents. If there are multiple agents with the least-valued allocation, we allocate to the agent with the smaller index. Lastly, we always allocate zero-valued goods to agent \(n\).

    Let us first address the special case. If \(\forall g \in O, v(g) = 0\), no matter how we allocate the goods, we result in a TEFX allocation. Therefore, we assume \(\exists g \in O\) s.t. \(v(g) > 0\).
    
    Denote \(g_{max} = arg max_{g \in O} v(g)\).
    \begin{lemma}
        \label{lemma:atefxidenticalvaluation1}
        At all times, \(\forall i, j \in [n], |v(A^t_i) - v(A^t_j)| \leq v(g_{max})\).
    \end{lemma}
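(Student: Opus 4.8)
The plan is to prove the invariant by induction on the goods in the order in which Algorithm~\ref{alg:atefxidenticalvaluation} processes them. Since the bound only becomes stronger when checked after each individual placement, I would actually establish that $\max_{i} v(A_i) - \min_{i} v(A_i) \le v(g_{max})$ holds after every single good is allocated, which immediately implies the per-round statement $|v(A^t_i) - v(A^t_j)| \le v(g_{max})$. The base case is the empty allocation: every bundle has value $0$, so the spread is $0 \le v(g_{max})$.

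For the inductive step, assume the invariant holds and let $g$ be the next good allocated. If $v(g) = 0$, then $g$ goes to agent $n$ and no bundle value changes, so the invariant is preserved trivially. The substantive case is $v(g) > 0$. Let $m = \min_i v(A_i)$ and $M = \max_i v(A_i)$ immediately before this allocation, so $M - m \le v(g_{max})$ by the inductive hypothesis, and let $i$ be whichever minimizing agent the tie-break selects (the choice is immaterial), so $v(A_i) = m$ beforehand. After the allocation, every agent other than $i$ keeps their bundle, so their values lie in $[m, M] \subseteq [m,\, m + v(g_{max})]$; agent $i$'s bundle now has value $m + v(g)$ with $0 < v(g) \le v(g_{max})$ by definition of $g_{max}$, hence also in $[m,\, m + v(g_{max})]$. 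Thus all bundle values lie in an interval of length $v(g_{max})$, so the new spread is at most $v(g_{max})$, closing the induction.

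I do not anticipate a genuine obstacle here: the whole argument rests on the single structural fact that a positively valued good is always placed on a currently minimum-value bundle, which prevents the minimum from dropping while raising the recipient by at most $v(g_{max})$; the treatment of zero-valued goods and of ties is purely cosmetic. The only point worth stating explicitly is the inequality $v(g) \le v(g_{max})$, which is what pins the size of the recipient's jump and hence the width of the interval containing all bundle values.
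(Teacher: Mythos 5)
Your proof is correct and rests on exactly the same structural fact as the paper's: a positively valued good is always placed on a currently minimum-value bundle and its value is at most \(v(g_{max})\), so the spread cannot grow beyond \(v(g_{max})\). The paper phrases the per-allocation step as a proof by contradiction (deducing \(v(g^*) > v(g_{max})\) from a hypothetical violation) while you run a direct induction with an interval argument, but these are the same argument in contrapositive form.
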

    \begin{proof}
        We prove by contradiction. Assume at some arbitrary time, \(\forall i, j \in [n], |v(A^t_i) - v(A^t_j)| \leq v(g_{max})\). After the allocation of good \(g^*\) to agent \(i\), \(\exists j \in [n]\) s.t. \(|v(A^t_i) - v(A^t_j)| > v(g_{max})\). Before the allocation of \(g^*\), if \(v(A^t_i) > v(A^t_j)\), then the good will not be allocated to agent \(i\), thus \(v(A^t_i) < v(A^t_j)\). Therefore, we have
        \[
        v(A^t_j) - v(A^t_i) \leq v(g_{max})
        \]
        before \(g^*\) is allocated to agent \(i\). If \(v(g^*) \leq v(g_{max})\), 
        \[
        v(A_i) - v(A_j) + v(g^*) \leq v(g_{max}) \text{,}
        \]
        since \(v(A_i) - v(A_j) \geq -v(g_{max})\), which contradicts our assumption. Thus, \(v(g^*) > v(g_{max})\). However, this means \(g_{max} \neq arg max_{g \in O} v(g)\), which contradicts our definition of \(g_{max}\).
    \end{proof}
    By Lemma~\ref{lemma:atefxidenticalvaluation1}, we know that the difference between any two agents at any time must be less than or equal to the maximum valued good. Without loss of generality, assume \(v(A^t_i) \geq v(A^t_j) > 0\). Denote \(d = v(A^t_i) - v(A^t_j)\). Let \(v(A^t_j) \geq k \cdot v(A^t_i)\) where \(k > 0\). Then, we have
    \[
    k \leq \frac{v(A^t_j)}{d + v(A^t_j)} \text{.}
    \]
    \(k\) attains its maximum value when it equals \(\frac{v(A^t_j)}{d + v(A^t_j)}\). Denote \(g^* = arg min_{g \in O} v(g)\). \(k \geq \frac{v(g^*)}{d + v(g^*)}\), since \(v(A^t_j) \geq v(g^*)\). Thus, we have proven Theorem~\ref{thm:atefxidenticalvaluation} when \(v(A^t_j) > 0\). Lastly, we consider the case when \(v(A^t_j) = 0\).
    \begin{lemma}
    \label{lemma:atefxidenticalvaluation2}
        We have a EFX allocation when \(\exists i \in [n], v(A^t_i) = 0\).
    \end{lemma}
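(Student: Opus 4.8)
The plan is to exploit two structural invariants of Algorithm~\ref{alg:atefxidenticalvaluation} and then verify the EFX inequalities by a short case analysis. First I would record a monotonicity invariant: allocating a good never decreases any agent's bundle value, so the quantity $\min_{i\in[n]} v(A_i)$ is non-decreasing over the entire run of the algorithm. Consequently, if at the moment in question (say, after processing some good in round $t$) we have $\min_i v(A^t_i)=0$, then the minimum bundle value equalled $0$ at \emph{every} earlier sub-step as well. This is the fact that lets us reason uniformly about how positively-valued goods were handed out.

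The key step is to pin down agent $n$'s bundle: I would show $v(A^t_n)=0$. Suppose not; then agent $n$ received some positively-valued good $g$ at an earlier sub-step. At that sub-step agent $n$ was selected as an argmin of the current bundle values, and since the minimum there was $0$, agent $n$'s bundle had value $0$ just before receiving $g$. But the tie-break selects the \emph{smallest} index among the minimum-valued agents, so for agent $n$ to be chosen, none of agents $1,\dots,n-1$ was at the minimum, i.e. each of them already had value $>0$; combined with $v(A_n)$ becoming $>0$ after receiving $g$ and with monotonicity, the minimum bundle value would be $>0$ from then on, contradicting that it is $0$ at time $t$. Hence $v(A^t_n)=0$, and since only agent $n$ ever receives zero-valued goods, $A^t_n$ consists solely of zero-valued goods.

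Next I would show that every agent $i\ne n$ holds either nothing or a single positively-valued good. Agent $i\ne n$ never receives a zero-valued good, and receives a positively-valued good only while being the (smallest-index) argmin, hence only while $v(A_i)=0$; once that happens its value is $>0$, and since the global minimum stays $0$ up to time $t$, agent $i$ is never again an argmin, so it receives at most one good. Finally, with the structure "$A^t_n$ has value $0$" and "each other bundle is $\emptyset$ or a singleton positive good", I would verify EFX directly over all ordered pairs: whenever agent $a$ envies agent $b$ we have $v(A^t_b)>v(A^t_a)\ge 0$, forcing $b\ne n$ and $A^t_b=\{g_b\}$ with $v(g_b)>0$, so removing $g_b$ leaves the empty set, whose value $0$ is at most $v(A^t_a)$; every remaining pair involves no envy (any agent weakly prefers its own bundle to agent $n$'s zero-valued bundle).

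I expect the main obstacle to be getting the second step exactly right — specifically, seeing that the combination of the "zero-valued goods go to agent $n$" rule and the smallest-index tie-break is precisely what rules out the bad configuration (an agent holding one positive good together with a zero-valued good while another agent is empty) that would otherwise break strong EFX; once that is established, the remainder is bookkeeping driven by the monotonicity of $\min_i v(A_i)$.
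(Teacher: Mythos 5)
Your proof is correct and follows essentially the same route as the paper's: both arguments establish that, while some agent still has value zero, agent $n$ holds only zero-valued goods and every agent $j \in [n-1]$ holds at most one (positively valued) good, and then verify EFX directly from that structure. Your version merely makes explicit the monotonicity of $\min_{i} v(A^t_i)$ and the role of the smallest-index tie-break, which the paper's proof leaves implicit.
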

    \begin{proof}
        Notice that Algorithm~\ref{alg:atefxidenticalvaluation} first allocates exactly one positively valued good to each agent. More precisely, from agent \(1\) to agent \(n - 1\), each agent is allocated exactly one good, and agent \(n\) is allocated exactly one positively valued good, and multiple, or possibly none, zero-valued good(s). This means for all agents \(i \in [n]\), \(\forall j \in [n - 1]\),
        \[
        v(A^t_i) \geq v(A^t_j \backslash \{g\}) = v(\emptyset) = 0, \forall g \in A^t_j \text{.}
        \]
        Lastly, since we allocate the positively valued goods with respect to the index of the agents, before the allocation of the first positively valued good to agent \(n\), no agent will envy agent \(n\). After the allocation of this good, \(\forall i \in [n], v(A^t_i) > 0\).
    \end{proof}
    Our result follows.
\end{proof}

Lastly, we consider the bi-valued goods setting. In this setting, we define two positive constants, and each agent values each good at one of them. Notice that we exclude zero-valued goods; otherwise, it degenerates into the generalized binary valuation setting. We propose that the approximation ratio is the ratio between the two constants. Surprisingly, this ratio is tight.

\subsubsection{Bi-valued goods}
\begin{theorem}
\label{thm:atefxbivalued}
For bi-valued goods valued \(b \geq a\), the Round-Robin algorithm produces \(\frac{a}{b}\)-TEFX allocation without scheduling, and it is tight.
\end{theorem}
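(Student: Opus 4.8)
The plan is to analyze the Round-Robin (RR) allocation directly, using the structural fact from Observation~\ref{thm:RR3} that after any agent is served, the bundle sizes differ by at most one across all agents. I would run RR on the goods ordered so that all goods of round $1$ come first, then round $2$, and so on (the ordering within a round is irrelevant). The key claim is that at the end of every round $t$, the partial allocation $A^t$ is $\frac{a}{b}$-TEFX. Fix a round $t$ and agents $i,j$. Let $|A^t_i| = p$ and $|A^t_j| = q$. By Observation~\ref{thm:RR3}, $|p - q| \le 1$ after each complete cycle, and since we only look at cycle boundaries or mid-cycle states, we always have $q \le p + 1$.

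First I would handle the case $q \le p$: here I want to argue $v_i(A^t_i) \ge v_i(A^t_j)$ outright, i.e.\ no envy at all, so TEFX holds trivially. This follows from the standard RR exchange argument: pair up the $k$-th good agent $i$ picked with the $k$-th good agent $j$ picked; since $i$ moved (weakly) before $j$ in each round, $i$'s $k$-th pick was chosen from a superset of the options available when $j$ made their $k$-th pick, so $v_i$ of $i$'s $k$-th good is at least $v_i$ of $j$'s $k$-th good; summing over $k = 1,\dots,q$ gives $v_i(A^t_i) \ge v_i(A^t_j)$. (One must be slightly careful because $i$ and $j$ need not be adjacent in the agent order, but the "pick from a superset" comparison still goes through round by round.)

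The substantive case is $q = p + 1$. Now agent $j$ has exactly one more good than agent $i$. Let $g^*$ be the last good agent $j$ received (the one from the most recent partial cycle). I claim $v_i(A^t_i) \ge v_i(A^t_j \setminus \{g^*\})$: indeed $A^t_j \setminus \{g^*\}$ consists of $p$ goods, the first $p$ picks of $j$, and by the same round-by-round superset comparison each is dominated in $v_i$-value by the corresponding pick of $i$. So removing $g^*$ kills the envy entirely; it remains to check that removing \emph{any} other good $g \in A^t_j$ leaves $v_i(A^t_i) \ge \frac{a}{b} v_i(A^t_j \setminus \{g\})$. Since all values lie in $\{a,b\}$ with $a \le b$, we have $v_i(A^t_j \setminus \{g\}) \le v_i(A^t_j \setminus \{g^*\}) + (b - a) \le v_i(A^t_i) + (b-a)$, but more usefully I would bound directly: $v_i(A^t_i) \ge v_i(A^t_j \setminus \{g^*\}) \ge v_i(A^t_j) - b$, and $v_i(A^t_j) \ge (p+1) a$, so $v_i(A^t_j \setminus \{g\}) \le v_i(A^t_j) - a$, giving the ratio
\[
\frac{v_i(A^t_i)}{v_i(A^t_j \setminus \{g\})} \ge \frac{v_i(A^t_j) - b}{v_i(A^t_j) - a}.
\]
This is increasing in $v_i(A^t_j)$, so it is minimized at the smallest possible value; the worst case is $p = 1$, $v_i(A^t_j) = 2b$ (both of $j$'s goods worth $b$ to $i$) and $v_i(A^t_i) = a$, yielding exactly $\frac{2b - b}{2b - b}$... so I would instead directly take the extremal configuration $v_i(A^t_i) = a$, $A^t_j = \{g^*, g\}$ with $v_i(g) = v_i(g^*) = b$ is impossible since then $v_i(A^t_i) \ge v_i(A^t_j \setminus \{g^*\}) = b > a$; the true worst case has $v_i(A^t_i) = a = v_i(A^t_j \setminus \{g^*\})$ and $v_i(g^*) = b$, $v_i(g) = a$, giving $\frac{a}{a + b - a} \cdot$-type bound, i.e.\ $\frac{a}{b}$. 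I expect the main obstacle to be pinning down this extremal case cleanly: one must argue that the single "extra" good $g^*$ can be worth at most $b$ while the forced lower bound $v_i(A^t_i) \ge v_i(A^t_j) - v_i(g^*)$ together with $v_i(A^t_i) \ge a$ and all remaining goods worth $\ge a$ forces the ratio $\ge a/b$, with equality realizable.

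\textbf{Tightness.} For the matching lower bound I would exhibit a small instance: two agents, two rounds, with $n = 2$. In round $1$ a single good $g_1$ arrives worth $b$ to both; RR gives it to agent $1$. In round $2$ a single good $g_2$ arrives worth $b$ to agent $2$ and worth $a$ to agent $1$; it must go to agent $2$. After round $2$, agent $1$ holds $\{g_1\}$ with $v_1 = b$ and agent $2$ holds $\{g_2\}$ — no issue there. I would instead use the configuration where agent $1$ ends with a single good of value $a$ (to agent $1$) while agent $2$ ends with two goods, one of value $b$ and one of value $a$ to agent $1$; after removing the value-$b$ good, agent $1$ compares its $a$ against the remaining $a$ — fine — but this shows the $a/b$ factor is forced precisely when the bundle-size gap is exploited, and any RR-type (indeed any) algorithm is pinned by the known non-existence of TEFX even for identical bi-valued agents with $T = 2$ (cited from \cite{elkind2024}), which already shows $\alpha < 1$ is forced; a two-good-versus-one-good gadget then drives $\alpha$ down to exactly $a/b$. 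The cleanest tight instance: agent valuations identical, round $1$ brings one good worth $b$, round $2$ brings one good worth $a$ to everyone; whoever gets only the $a$-good and sees the other holding $\{b\text{-good}, \dots\}$ after a third good forces the ratio. I would finalize this gadget so the achieved ratio is exactly $\frac{a}{b}$, matching the upper bound.
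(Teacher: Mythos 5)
Your high-level plan (exploit the fact that RR keeps bundle sizes within one of each other, then use $a \le v_i(g) \le b$) is the right one and is essentially the paper's, but the execution has genuine gaps. First, your claim that $q \le p$ implies \emph{no} envy via the ``pick from a superset'' argument is false when $j$ precedes $i$ in the RR order: there the superset relation reverses ($j$'s $k$-th pick happens before $i$'s $k$-th pick), and indeed agent $i$ can envy an earlier agent $j$ with equal bundle sizes (e.g.\ two goods both worth $b$ to $j$... rather, a good worth $b$ to $i$ taken first by $j$, leaving $i$ a good worth $a$). Likewise, in the $q=p+1$ case the standard RR argument lets you delete $j$'s \emph{first} pick, not the last one $g^*$. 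These slips are recoverable, because only the $a/b$ factor is needed — but your quantitative step does not actually deliver it: the chain $\frac{v_i(A^t_i)}{v_i(A^t_j\setminus\{g\})}\ge\frac{v_i(A^t_j)-b}{v_i(A^t_j)-a}$ can be negative (take $a=1$, $b=3$, $v_i(A^t_j)=2a$), and the ensuing ``extremal configuration'' discussion is exploratory rather than a derivation. The clean argument, which is what the paper does, is pure counting: every bundle of size $k$ is worth between $ka$ and $kb$ to every agent; since $|A^t_j\setminus\{g\}|\le|A^t_i|$ for every $g\in A^t_j$ (and $|A^t_j|\le|A^t_i|$ when $i$ is weakly ahead), one gets $v_i(A^t_i)\ge|A^t_i|\,a\ge\tfrac{a}{b}\,|A^t_i|\,b\ge\tfrac{a}{b}\,v_i(A^t_j\setminus\{g\})$ directly, with no appeal to which good RR picks.

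The tightness half is not established: you cycle through several candidate gadgets, discard them, and end with an intention (``I would finalize this gadget'') rather than an instance; the appeal to the TEFX non-existence result of \cite{elkind2024} only shows $\alpha<1$, not that $a/b$ is attained. The paper's witness is concrete: identical valuations, $O_1=\{g_1,g_2\}$ with $v(g_1)=v(g_2)=a$, $O_2=\{g_3\}$ with $v(g_3)=b$. RR splits $g_1,g_2$ in round~1, then gives $g_3$ to (say) agent~1; agent~2 now holds value $a$ while $A_1$ minus its $a$-valued good is worth exactly $b$, so the achieved ratio is exactly $a/b$. You should supply such an instance explicitly to close the proof.
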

We first prove its correctness, then prove its tightness.
\begin{proof}
Consider an agent \(i\), where RR allocates the next good to them.\\
\textbf{Case 1:} \(j > i\). Since \(\frac{b}{a} \cdot v_i(g_1) \geq v_i(g_2), \forall g_1 \in A_i\) and \(\forall g_2 \in A_j\), and by Observation~\ref{thm:RR3}, \(|A_i| = |A_j|\), there exists a perfect matching between each \(g_1 \in A_i\) and \(g_2 \in A_j\). Therefore, 
\[
\sum_{g_1 \in A_i} \frac{b}{a} \cdot v_i(g_1) \geq \sum_{g_2 \in A_j} v_i(g_2)
\]
where we have
\[
\frac{b}{a} \cdot v_i(A_i) \geq v_i(A_j) \;\Rightarrow\; v_i(A_i) \geq \frac{a}{b} \cdot v_i(A_j) \text{.}
\]
\textbf{Case 2:} \(j < i\). By Observation~\ref{thm:RR3}, \(|A_i| = |A_j| - 1\). Hence, by removing an arbitrary good from \(A_j\), we return to \textbf{Case 1}. Therefore, \(v_i(A_i) \geq \frac{a}{b} \cdot v_i(A_j \backslash \{g\}), \forall g \in A_j\).

Now, we prove its tightness using an example. Consider the instance with two agents and three goods, spanning two rounds, where the agents have identical valuations. \(O_1 = \{g_1, g_2\}\) and \(O_2 = \{g_3\}\). \(v(g_1) = v(g_2) = a\) and \(v(g_3) = b\).

After \(t = 1\), each agent must be allocated each of the goods, respectively, to obtain a positive approximation of TEFX. After \(t = 2\), without loss of generality, assume agent \(1\) obtains the good. \(v_1(A_1) = a + b \geq v_1(A_2) = a\). On the other hand, after removing any good from \(A_1\), the worst case is removing the good valued \(a\) for agent \(2\), where we have \(v_2(A_1) - a = b\). Since we have
\[
v_2(A_2) \geq \frac{a}{b} \cdot v_2(A_1) = \frac{a}{b} \cdot b = a \text{,}
\]
\(\frac{a}{b}\)-TEFX is tight.
\end{proof}

In summary, our exploration of $\alpha$-TEFX demonstrates that while exact temporal envy-freeness is overwhelmingly brittle in dynamic environments, multiplicative approximations offer a robust and mathematically viable alternative. We established that universal constant-factor guarantees are generally impossible due to inherent valuation variance and the specific challenges posed by zero-valued goods. However, by introducing structural constraints such as the periodicity of identical days or the rigid utility tiers of generalized binary valuations, we successfully identified scenarios in which strong constant-factor approximations, such as $1/2$-TEFX, can be recovered. Even in environments where constant bounds remain elusive, such as identical valuation settings, carefully designed greedy mechanisms allow us to mitigate extreme disparities and secure optimized, value-dependent guarantees. Ultimately, these results confirm that relaxing the fairness criteria from exact to multiplicative allows us to bridge the gap between theoretical impossibility and practical, algorithmically sound fair division.

\subsection {TMMS} \label{tmms}
In this section, we consider Temporal Maximin Share Fairness (TMMS). Note that MMS does not exist when there are more than two agents \citep{kurokawa2016,kurokawa2018}. We are the first to consider MMS under the temporal fair division model. 

Unfortunately, under most settings, a TMMS allocation may not exist. In Theorem~\ref{thm:tmmsgeneralsettingR}, we show that a TMMS allocation may not exist generally, even under restricted settings and scheduling.

Here, we note that under the generalized binary valuation setting, EF1 implies MMS \citep{bouveret2016}. Therefore, since \cite{elkind2024} proved the possibility for TEF1 under the generalized binary valuation setting, TMMS also exists under this setting.

\subsubsection{Identical Days}
We consider the identical days setting. Here, we show that a TMMS allocation may not exist.

\begin{theorem}
    \label{thm:tmmsidenticaldays}
    A TMMS allocation for goods under the identical days setting may not exist, even when there are two agents with identical valuations, and \(T > 1\).
\end{theorem}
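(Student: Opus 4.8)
The plan is to construct an explicit counterexample with $n = 2$ agents having identical valuations and $T = 2$ identical days, exhibiting an instance where no allocation is TMMS. The key structural observation is that under the identical days setting, each round brings a copy of the same "template" multiset of goods, so after round $t$ both agents' MMS benchmark is computed with respect to $t$ copies of that template. Since valuations are identical, agent $i$'s MMS value after round $t$, call it $\mathrm{MMS}_i^t$, is the same for both agents: it is the max over 2-partitions of $O^t$ of the minimum bundle value. For a TMMS allocation we need, simultaneously for $t = 1$ and $t = 2$, that $v(A_i^t) \ge \mathrm{MMS}^t$ for both $i$. I would pick the template so that after round $1$ the unique (up to symmetry) MMS-feasible split forces a particular coarse division of the goods, and then show that after round $2$ that forced commitment cannot be extended: any way of allocating the round-$2$ copy leaves one agent below $\mathrm{MMS}^2$.

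Concretely, a natural candidate: let each day consist of goods of values (scaled) $\{3, 2, 2\}$ so $O^1$ has total $7$ and the best 2-partition is $\{3\},\{2,2\}$ giving $\mathrm{MMS}^1 = 3$; hence after round $1$ one agent holds exactly the value-$3$ good and the other holds the two value-$2$ goods (a value-$2$-good split like $\{3,2\},\{2\}$ gives min $2 < 3$, so it is infeasible). After round $2$, $O^2$ totals $14$ and its optimal 2-partition — e.g. $\{3,2,2\},\{3,2,2\}$ or $\{3,3,2\},\{2,2,2\}$ wait, the latter is $8$ vs $6$; the balanced one $\{3,2,2\}$ twice gives $7$ vs $7$, so $\mathrm{MMS}^2 = 7$. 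Now the agent who received $\{3\}$ in round $1$ has $3$ already; to reach $7$ they must get at least $4$ more from $O_2 = \{3,2,2\}$, i.e. the $3$ and one $2$, or both $2$'s. In the first case the other agent already has $4$ from round $1$ and gets only the remaining $2$, totalling $6 < 7$. In the second case that agent is handed the $3$ from round $2$, totalling $4 + 3 = 7$ — so I need to break this tie; I would instead use values like $\{3,2,2,2\}$ per day or tweak the magnitudes (e.g. template $\{4,3,2\}$: round-$1$ MMS is $4$ via $\{4\},\{3,2\}$; round-$2$ total $18$, MMS $= 9$ via $\{4,3,2\}$ twice; the $\{4\}$-holder needs $5$ more from $\{4,3,2\}$, forcing $\{4,3,2\} \setminus$ something — $4+3=7<9$ unless they take all three, impossible) so that in every extension exactly one agent falls strictly short. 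I would verify the small finite case analysis by enumerating the $O(1)$ many partitions.

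The main obstacle is tuning the template values so that (a) the round-$1$ MMS constraint is genuinely binding and pins down the coarse split uniquely up to relabeling, and simultaneously (b) the round-$2$ MMS value is large enough that no completion of that split works, while (c) avoiding the symmetric tie where the "poor" agent can be compensated to exactly meet the bound. This is a delicate numeric balancing act, and I expect to iterate over a few candidate templates; a three-good template with suitably spread values (making the MMS partition balanced but the forced round-$1$ bundles unbalanced) should do it. Once the template is fixed, the remaining work — computing $\mathrm{MMS}^1$, $\mathrm{MMS}^2$, and checking each of the constantly many allocations fails TMMS at round $1$ or round $2$ — is a routine finite verification. The identical-valuation and $T = 2 > 1$ clauses in the statement come for free from the construction, and the identical-days hypothesis is exactly what lets the per-round template argument go through.
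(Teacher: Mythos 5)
There is a genuine gap, and it is not just a matter of ``tuning the template'': no two-round identical-days instance with two agents and identical valuations can serve as a counterexample. After round $1$ any allocation following an optimal partition $(P, S\setminus P)$ of the day's template $S$ meets $\mathrm{MMS}^1$. After round $2$ the goods are two copies of $S$, so $\mathrm{MMS}^2 = v(S)$ exactly (the two-copy partition attains $v(S)$, and the total is $2v(S)$, so no partition can do better). But then you can always complete: give the agent holding $P$ the second-day copy of $S\setminus P$ and vice versa, so each agent ends with value $v(P) + v(S\setminus P) = v(S) = \mathrm{MMS}^2$. This kills every $T=2$ template, which is why both of your candidates fail --- and note that your analysis of $\{4,3,2\}$ contains an arithmetic slip: the $\{4\}$-holder needs $5$ more to reach $\mathrm{MMS}^2 = 9$, and the round-$2$ bundle $\{3,2\}$ supplies exactly $5$, leaving the other agent with $5 + 4 = 9$ as well. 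So that instance is TMMS-feasible, contrary to your claim.

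The paper's proof uses the same general strategy (force the split in early rounds, derive a contradiction later) but necessarily runs it over \emph{three} rounds, with template $\{1,3,10\}$. Round $1$ forces the split $\{1,3\} \mid \{10\}$ (the only partition with both sides at least $\mathrm{MMS}^1 = 4$); round $2$ then forces the complement-copy completion, so each agent holds one good of each value but the two value-$3$ goods sit with \emph{different} agents; round $3$ has $\mathrm{MMS}^3 = 21$ whose unique equal bipartition is $\{10,10,1\} \mid \{10,3,3,3,1,1\}$, requiring one agent to hold all three value-$3$ goods --- impossible given the irrevocable round-$2$ state. To repair your proposal you must add a third round and choose the template so that the (forced) two-round prefix is incompatible with the unique round-$3$ MMS partition; the two-round version cannot be made to work.
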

\begin{proof}
    Consider the instance with two agents and three goods that arrive in each round \(\forall t \in [T], O_t = \{g^t_1, g^t_3, g^t_{10}\}\), where agents have identical valuations: \(v(g^t_1) = 1, v(g^t_3) = 3, v(g^t_{10}) = 10\).
    
    In round \(1\), in order to obtain a TMMS allocation, each agent must be allocated either \(\{g^1_1, g^1_3\}\) or \(\{g^1_{10}\}\).\\
    In round \(2\), each agent must be allocated either \(\{g^1_1, g^1_3, g^1_{10}\}\) or \(\{g^2_1, g^2_3, g^2_{10}\}\) to obtain a TMMS allocation.
    
    In order for the allocation at the end of round \(3\) to be TMMS, one agent must be allocated \(\{g^1_1, g^2_1, g^1_3, g^2_3, g^3_3, g^1_{10}\}\), and the other agent must be allocated \(\{g^3_1, g^2_{10}, g^3_{10}\}\), which is not possible, as \(g^1_3\) and \(g^2_3\) are allocated to different agaents in round \(2\).
\end{proof}

In summary, our investigation into Temporal Maximin Share Fairness (TMMS) reveals that the cumulative nature of maximin guarantees is exceptionally difficult to satisfy in a dynamic context. Unlike static settings, where MMS is often a more accessible benchmark than EFX, the requirement to meet the maximin threshold in every round makes TMMS remarkably fragile. We demonstrated that even with only two agents and identical valuations, the sequential arrival of goods, whether in general settings or repeating identical days, frequently precludes the existence of a TMMS allocation. The sole sanctuary of this fairness notion lies in the highly structured domain of generalized binary valuations, where the existence of TEF1 suffices for a positive result. Ultimately, these findings suggest that while TMMS is a theoretically interesting extension of the maximin principle, it is often too restrictive for general temporal allocation, reinforcing the need for the approximation-based or preference-restricted approaches we have explored.

\section{Temporal Fair Division with Scheduling} \label{wschedule}
Scheduling is a process that allows a good that arrives in round \(t\) to be scheduled in round \(t'\) such that \(t' \geq t\) and \(t' - t + 1 \leq r\). We define this process with a function \(delay(g, t, r')\), where \(r' \leq r\), which denotes \(O_t \gets O_t \backslash \{g\}\) and \(O_{t+r'-1} \gets O_{t+r'-1} \cup \{g\}\). Note that when \(r = 1\), it is equivalent to standard temporal fair division, and when \(r = T\), it is equivalent to standard fair division.

We compare an instance $\mathcal{I}_1$ (no scheduling, $T=T' \cdot k$ rounds, \(O_t = \{g_{(t-1) \cdot k + 1}, \dots, g_{t \cdot k}\}\)) with an instance $\mathcal{I}_2$ (scheduling allowed, \(T = T'\), $r \geq k$, \(O_t = \{g_t\}\)). Note that both instances have the same number of goods that arrive across all rounds.

First, under the identical days setting, $\mathcal{I}_1$ and $\mathcal{I}_2$ are incomparable unless $T \equiv 0 \pmod k$, as $\mathcal{I}_2$ cannot schedule identical bundles across rounds if the goods cannot be scheduled into another identical days setting instance.

For general settings, $\mathcal{I}_1 \implies \mathcal{I}_2$. We can simply schedule the goods from $\mathcal{I}_1$ into $k$ rounds of $\mathcal{I}_2$ and leave the remaining rounds empty. Since empty rounds preserve existing fairness, any solution for $\mathcal{I}_1$ is valid for $\mathcal{I}_2$.

However, $\mathcal{I}_2 \not\implies \mathcal{I}_1$. Scheduling allows goods to be separated to avoid ``fairness traps" present in fixed batches. Consider a counter-example for TEFX with two identical valuation agents ($v(g_1)=v(g_2)=1, v(g_3)=100, v(g_4)=10$).
\begin{itemize}
    \item In $\mathcal{I}_1$ (Fixed): If $O_1=\{g_1, g_2\}$ and $O_2=\{g_3, g_4\}$, \(T = 2\), agents must split $\{g_1, g_2\}$ in Round 1. In Round 2, receiving $g_3$ creates unavoidable envy that cannot be eliminated by removing a small good from Round 1, violating TEFX.
    \item In $\mathcal{I}_2$ (Scheduled): If \(\forall 1 \leq t \leq 4, O_t = \{g_t\}\), \(r = 2\), \(T = 4\), and we schedule $g_2$ to Round 3, we can allocate $A_1=\{g_1, g_2, g_4\}$ and $A_2=\{g_3\}$. This sequence maintains TEFX at every step, a solution impossible in the instance without scheduling.
\end{itemize}
\subsection{TEF1} \label{tef1R}
We observe that if we are able to schedule the goods under the identical days setting, a TEF1 allocation exists.

\subsubsection{Identical Days}
\begin{theorem}
    \label{thm:tef1identicalDaysRn/2}
    Algorithm~\ref{alg:tef1identicalDaysRn/2} returns a TEF1 allocation in polynomial time under the identical days setting with scheduling when \(r \geq \frac{n}{2}\). In addition, every \(n\) rounds, it returns a TEF allocation.
\end{theorem}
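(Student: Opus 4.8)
The plan is to establish the three claims separately: polynomial running time, the TEF guarantee at the end of every block of $n$ rounds, and TEF1 at every round, in increasing order of difficulty. The running time is routine: Algorithm~\ref{alg:tef1identicalDaysRn/2} is an outer loop over the $T$ rounds with inner loops over the (at most $|O_1|$) goods of each round, and each step — locating an agent with a minimum-value partial bundle, picking the good that agent values most among the ones whose ``type'' it has not yet received in the current block, and deciding which goods to delay — is polynomial in $n$ and $|O|$. For \textbf{TEF every $n$ rounds}, the identical-days structure does the work. Call two goods the same \emph{type} if they are matched by the bijections $f_{t,t'}$; there are $|O_1|$ types and $v_i$ is constant on each type. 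I would first argue that within any block of $n$ consecutive rounds the algorithm schedules every good of that block inside the block — this is precisely where $r \ge n/2$ is used, since a good arriving in the first half of a block can be delayed into the second half but never past the block boundary — and that over a full block each agent receives exactly one good of each type. Then a complete block-bundle is worth $V_i := \sum_{g \in O_1} v_i(g)$ to agent $i$ regardless of which specific goods it contains, so after $b$ blocks $v_i(A^{bn}_i) = bV_i = v_i(A^{bn}_j)$ for all $i,j$; hence $A^{bn}$ is envy-free, i.e.\ TEF.

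For \textbf{TEF1 at an arbitrary round} $t = bn + j$ with $0 \le j < n$, I would decompose $A^t_i = C^t_i \cup B^t_i$, where $C^t_i$ is the union of the $b$ already-completed block-bundles and $B^t_i$ is the partial bundle from the current block. By the previous argument $v_i(C^t_i) = v_i(C^t_\ell) = bV_i$ for every $\ell$, so the envy of $i$ toward $\ell$ at round $t$ equals the envy of $B^t_i$ toward $B^t_\ell$. I would then maintain, round by round within a block, the invariant that $\{B^t_i\}$ is an allocation of the goods scheduled so far in the current block that is (a) size-balanced, the values $|B^t_i|$ taking at most two consecutive values, and (b) EF1-consistent, meaning that for every ordered pair $(i,\ell)$ some $g \in B^t_\ell$ satisfies $v_i(B^t_i) \ge v_i(B^t_\ell \setminus \{g\})$. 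Property (b) together with the decomposition immediately yields TEF1 for the full bundles. The substance is showing the invariant survives the arrival of a new round: the algorithm hands out the currently available goods (the arriving $O_t$, minus the ones it delays, plus previously delayed ones) one at a time to agents with minimum-value partial bundles, each time giving that agent its most-valued available good of a type it still lacks this block — an envy-cycle-free/round-robin-style step that preserves (b) — and it delays exactly enough goods to keep (a).

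The \textbf{main obstacle} I expect is the simultaneous maintenance of (a) and (b) across round boundaries, with two delicate sub-points. First, one must check that the number of goods the algorithm needs to hold back during the first $\lceil n/2 \rceil$ rounds of a block, so that the trailing agents can catch up, never exceeds the delay capacity $r - 1 \ge n/2 - 1$ and never forces a good past a block boundary (which would break the TEF-every-$n$-rounds claim); this is the quantitative content of the hypothesis $r \ge n/2$. Second, the zero-valued-good trap flagged before Theorem~\ref{thm:atefx}: an agent with an empty partial bundle must never end up envying an agent holding two goods at least one of which is worthless to the first agent. This is handled by combining ``always give out the most-valued available good'' (so a trailing agent's first good is worthless only if every remaining good is worthless to it) with the size-balancing that the buffer $r \ge n/2$ makes possible, which prevents any agent from getting two goods ahead of a still-empty agent. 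Verifying that these two mechanisms interact correctly under all tie-breaks is where the bulk of the case analysis will live.
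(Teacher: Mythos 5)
Your proposal shares the paper's skeleton — partition time into blocks of $n$ rounds, show each agent receives exactly one good of each type per block so that block boundaries are TEF (the paper's Lemma~\ref{lemma:tef1identicalDaysRn/2}), and then reduce TEF1 at an intermediate round to EF1 of the within-block partial bundles by adding the equal-valued completed-block prefix to both sides of the inequality — but it diverges on how the buffer is actually spent inside a block, and that difference matters. The paper's algorithm does not distribute goods round by round while balancing sizes; it delays \emph{all} goods of the first $\lceil n/2\rceil$ rounds into the single round $k n + \lceil n/2\rceil$, runs a modified round-robin RR$'$ there (EF1 by Lemma~\ref{lemma:tef1identicalDaysRn/21}, whose only nontrivial content is that a feasible non-duplicate pick always exists when each type has $n/2$ copies), then delays the remaining goods into round $k n + n - 1$ and tops every agent up to one full copy of each type. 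Consequently the ``main obstacle'' you identify — maintaining size-balance and EF1-consistency simultaneously across every round boundary within a block — does not arise: all intermediate rounds are empty and trivially preserve fairness, and EF1 need only be checked at the one mid-block round. Your round-by-round invariant could in principle be made to work, but it is a heavier argument and you would still owe the delay-capacity accounting you flag. Two further points: your zero-valued-good worry is an EFX phenomenon, not an EF1 one — for EF1 you may remove the positively valued good from the envied bundle, so an empty-handed agent facing $\{g_1,g_2\}$ with $v_i(g_1)=0$ is unproblematic — and your sketch omits the case $T \not\equiv 0 \pmod n$, which the paper handles separately by scheduling the leftover rounds into two rounds and invoking the RR/reversed-RR result of Elkind et al.
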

\begin{algorithm}
    \caption{RR'}
    \label{alg:tef1identicaldaysRn/2RR}
    \textbf{Input}: Arbitrary fair division instance \(\mathcal{I} = \left( N, M, v = (v_1, \ldots, v_n)\right)\)\\
    \textbf{Output}: EF1 allocation \(\mathcal{A}\)
    \begin{algorithmic}[1]
        \For{$i \in N$}
            \State $A_i \gets \emptyset$
        \EndFor
        \For{$l = 1$ to $m$}
            \State $i \gets (i$ mod $m) + 1$
            \State $g \gets$ The most valued good for agent \(i\), and its identical copy is not in \(A_i\)
            \State $A_i \gets A_i \cup \{g\}$
            \State $M \gets M \backslash \{g\}$
        \EndFor
    \end{algorithmic}
\end{algorithm}
\begin{proof}
    The polynomial runtime of Algorithm~\ref{alg:tef1identicalDaysRn/2} is easy to verify, given that there are only nested \textbf{For} loops, which run in polynomial time, Algorithm~\ref{alg:tef1identicaldaysRn/2RR} runs in polynomial time, and the other operations run in polynomial time. Thus, we focus on proving correctness.
    
    Denote \(\lambda = n\). 
    \begin{algorithm}
    \caption{Returns a TEF1 allocation when \(r \geq \frac{n}{2}\)}
    \label{alg:tef1identicalDaysRn/2}
    \textbf{Input}: Arbitrary temporal fair division instance \(\mathcal{I} = \left( N, T, \left\{ O_t \right\}_{t \in [T]}, v = (v_1, \ldots, v_n), r \right)\)\\
    \textbf{Output}: TEF1 allocation \(\mathcal{A}\)
\begin{algorithmic}[1]
    \State $k \gets 0$
    \State $cpy \gets \emptyset$
    \For{$g \in O_1$}
        \State Append $g$ to $cpy$
    \EndFor
    \State $len \gets |cpy|$
    \While{$k \cdot n + n - 1 < T$}
        \For{$t = 0$ to $\lceil\frac{n}{2}\rceil - 1$}
            \For{$g \in O_{t + k}$}
                \State $delay(g, t + k, (k \cdot n + \lceil\frac{n}{2}\rceil) - (t + k) + 1)$
            \EndFor
        \EndFor

         \State \(A^{k \cdot n + \lceil\frac{n}{2}\rceil} \gets\) RR'\((N, O_{k \cdot n + \lceil\frac{n}{2}\rceil}, v)\)
        
        \For{$t = \lceil\frac{n}{2}\rceil + 1$ to $n - 1$}
            \For{$g \in O_{t + k}$}
                \State $delay(g, t + k, (k \cdot n + n - 1) - (t + k) + 1)$
            \EndFor
        \EndFor

        \State Allocate \(O_{k \cdot n + n - 1}\) such that
        \Statex \(\forall i, j \in N, v_i(A^{k \cdot n + n - 1}_i) = v_i(A^{k \cdot n + n - 1}_j)\)
        \State $k \gets k + 1$
    \EndWhile
    \For{$t = k$ to $\frac{T + k}{2} - 1$}
        \For{$g \in O_t$}
            \State $delay(g, t, \frac{T + k}{2} - t + 1)$
        \EndFor
    \EndFor
    \State $A^{\frac{T + k}{2}} \gets$ RR$(N, O_{\frac{T + k}{2}}, v)$
    \For{$t = \frac{T + k}{2} + 1$ to $T - 1$}
        \For{$g \in O_t$}
            \State $delay(g, t, T - t + 1)$
        \EndFor
    \EndFor
    \State Reverse $N$
    \State $A^T \gets$ RR$(N, O_T, v)$
\end{algorithmic}
\end{algorithm}
    Intuitively, Algorithm~\ref{alg:tef1identicalDaysRn/2} schedules all of the goods such that every \(\lambda\) rounds, it is scheduled into two phases, where in each phase it contains approximately \(\frac{\lambda}{2}\) rounds' cumulative goods. We attempt to maintain a state where, after every two phases (\(\lambda\) rounds), each agent has exactly one copy of each good. To achieve this, for the first phase, we need to obtain an EF1 allocation, ensuring that no agent receives more than one copy of each good. At last, if \(T\) is not divisible by \(\lambda\), we schedule the goods into two rounds, applying the result from \cite{elkind2024} to compute a TEF1 allocation. We note that the result from \cite{elkind2024} cannot be applied to allocating scheduled phases, as our result relies on the fact that in every \(\lambda\) rounds, we obtain a TEF allocation, whereas their result does not guarantee this.
    
    We show a key property in Algorithm~\ref{alg:tef1identicaldaysRn/2RR}.
    \begin{lemma}
        \label{lemma:tef1identicalDaysRn/21}
        Algorithm~\ref{alg:tef1identicaldaysRn/2RR} returns an EF1 allocation if there are \(\frac{n}{2}\) copies of each good.
    \end{lemma}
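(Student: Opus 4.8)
Write $d$ for the number of distinct good types, so $M$ contains exactly $dn/2$ goods, and recall that RR' cycles through the agents in a fixed order, each turn handing the current agent a good of highest value to that agent whose type it does not already hold. There are two claims to verify: that RR' is well defined (an eligible good always exists when an agent must pick), and that the allocation it returns is EF1.

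For well-definedness the content is that, with ties broken toward a scarcest eligible type, RR' never reaches an agent all of whose remaining goods have types it already holds. A crude copy-counting bound --- the unpicked goods number $dn/2-(r-1)n-(i-1)$ when agent $i$ is about to make its $r$-th pick, whereas at most $(r-1)(n/2-1)$ of them could carry one of the $r-1$ types $i$ already holds --- disposes of the early turns; the last few turns need a Hall/flow argument showing that the agents-versus-types bipartite structure (with $n/2$ copies per type) always admits a completion into distinct fresh types for the remaining agents, and that the scarcest-first tie-break keeps such a completion alive. I expect this bookkeeping, rather than anything conceptual, to be the fussy part of this half; ties genuinely must be broken with some care, since a careless rule can strand the last agent.

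For EF1, fix agents $i<j$ in the round order (so $i$ picks before $j$ every round); I must show $v_i(A_i)\ge v_i(A_j)$ and $v_j(A_j)\ge v_j(A_i)-\max_{g\in A_i}v_j(g)$. The backbone is the classical round-robin coupling: $i$'s $r$-th pick precedes $j$'s $r$-th pick, which precedes $i$'s $(r+1)$-th pick, so the relevant good of one agent was still available to the other at the decisive moment. The twist specific to RR' is that this good may carry a type the picking agent already holds --- but then that agent already owns a good of exactly that value. I would handle this by reducing both inequalities to a single statement about threshold counts: for every $\theta\ge 0$, the number of types in $A_i$ of value (under $v_i$, resp.\ under $v_j$) at least $\theta$ is at least the corresponding number for $A_j$, resp.\ at least that number minus one. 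Integrating over $\theta$ recovers the two bundle inequalities, the ``up to one good'' slack, and the case $|A_i|=|A_j|+1$ that arises when $d$ is odd.

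The threshold-count inequality is the heart of the proof and the step I expect to be the main obstacle. I would prove it by induction over the rounds of RR', maintaining after round $r$ that $i$'s tally of ``value $\ge\theta$'' types is at least $j$'s. The only dangerous transition is when $j$ grabs a ``$\ge\theta$'' good in round $r$ while $i$ is not already strictly ahead; then I need $i$ to have grabbed a ``$\ge\theta$'' good earlier in the same round. If $j$'s chosen type is not one $i$ already holds, greediness gives this immediately. The genuinely delicate case is when $i$ already holds exactly that type: then the two agents' ``$\ge\theta$'' type-sets have equal size but are unequal as sets, which forces the existence of a ``$\ge\theta$'' type held by $j$ but not $i$, hence an earlier turn on which $i$ lost a race for it and so itself took a ``$\ge\theta$'' type; combining this with the $n/2$-copies hypothesis --- which guarantees enough fresh high-value copies remained when $i$ picked in round $r$ --- closes the induction. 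Making that last step airtight, tie-breaking included, is where the real work lies.
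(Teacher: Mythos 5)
Your plan correctly isolates the two non-trivial halves (well-definedness of RR' and the EF1 guarantee), and in that respect it is more honest than the paper, which settles feasibility with a short counting argument and then calls EF1 ``trivial.'' But both of the load-bearing steps you defer are genuinely broken as proposed. For well-definedness, the stranding you worry about is not a tie-breaking artifact, so no scarcest-type tie-break can prevent it: take $n=4$ (two copies of each of four types $a,b,c,d$) with $v_1(b)=10,v_1(c)=9$, $v_2(b)=10,v_2(d)=9$, $v_3(c)=10,v_3(d)=9$, $v_4(a)=10$, and all other values $1$. Every pick is a unique greedy maximizer: agents take $b_1,b_2,c_1,a_1$ in the first cycle and $c_2,d_1,d_2$ in the second, leaving agent $4$ facing only $a_2$, a copy of a type it already holds. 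Since no tie was ever broken, the only way to preserve a Hall-type completion is to override the greedy choice itself (force agent $1$ to take $a_2$ instead of the strictly better $c_2$), which destroys exactly the availability argument your EF1 half needs. (The same instance shows the paper's own count is too loose --- it only excludes stranding ``after the last RR-round,'' whereas this stranding occurs on the last pick \emph{of} the last RR-round --- so you have found the real soft spot; you just have not repaired it.)

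The EF1 half rests on a threshold invariant that is false in the strong form you need for $i<j$. With $n=4$, two copies each of $a,b,c,d$, and $v_1(a)=10,v_1(b)=9,v_1(c)=1.1,v_1(d)=1$; $v_2(b)=10,v_2(a)=9$; $v_3(b)=10,v_3(d)=2$; $v_4(d)=10,v_4(c)=9$ (unspecified values $1$), RR' terminates with $A_1=\{a,c\}$ and $A_2=\{b,a\}$; at $\theta=5$ under $v_1$, agent $1$ holds one type of value $\ge\theta$ while agent $2$ holds two, so the ``no-slack'' count inequality for $i<j$ fails even though EF1 itself survives ($v_1(A_2\setminus\{a\})=9\le 11.1$). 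This is precisely the delicate case you flagged, and the escape you sketch --- that $i$ must have ``lost a race'' earlier and thereby taken a $\ge\theta$ type --- only points to a pick already counted in $i$'s tally, so it cannot offset $j$'s new increment. In short, the proposal is a plan whose feasibility step cannot be closed by tie-breaking and whose EF1 step is built on an invariant that the algorithm does not maintain; any correct proof will need either a modified algorithm or a different pairing argument (e.g.\ matching each type in $A_j$ either to $i$'s own copy of that type or to an earlier, still-available pick of $i$, with one good excepted).
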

    \begin{proof}
        First, we show that there always exists a good to be allocated to an agent. Denote \(\gamma\) to be the number of goods that arrive in each round. Now, we define a ``RR-round" as each round in RR. Let \(\beta\) be the current RR-round. Notice the total number of goods is \(\gamma \cdot \frac{\lambda}{2}\), which means there are \(\frac{\gamma}{2}\) RR-rounds in total. Hence, \(\beta \leq \frac{\gamma}{2}\). Then, for an agent, there are \(\gamma - \beta\) types of goods left that can be allocated to them after RR-round \(\beta\). Suppose to contradictory that for this agent, they cannot be allocated any good. This means all other \(n - 1\) agents have taken \((\lambda - 1) \cdot (\gamma - \beta)\) goods. Notice that \(\gamma - \beta \geq \frac{\gamma}{2}\) since \(\beta \leq \frac{\gamma}{2}\). This means at least \(\frac{\gamma}{2}\) RR-rounds have passed, and on the next RR-round, this agent cannot be allocated any good. However, since there are only at most \(\frac{\gamma}{2}\) RR-rounds, there is no such next RR-round, thus we are always able to allocate to an agent.
        
        Now, the EF1 property in Algorithm~\ref{alg:tef1identicaldaysRn/2RR} is trivial if we are able to guarantee that each agent can always be allocated a good.
    \end{proof}
    
    Next, we show a key property in Algorithm~\ref{alg:tef1identicalDaysRn/2}.
    \begin{lemma}
        \label{lemma:tef1identicalDaysRn/2}
        Every \(n\) rounds, Algorithm~\ref{alg:tef1identicalDaysRn/2} allocates each agent the goods that arrive in each round. Formally, denote \(O^* = O_1\), which is equivalent to all other rounds as we are under the identical days setting. Every \(n\) rounds, each agent is allocated \(O^*\). This means \(A^t_i = O^* \cup A^{t - n}_i\).
    \end{lemma}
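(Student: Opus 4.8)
The plan is to prove the lemma by induction on the block index $k \ge 0$, where block $k+1$ consists of rounds $kn+1,\dots,(k+1)n$. The invariant I would maintain is: after the last round of block $k$, every agent $i$ holds in $A^{kn}_i$ exactly $k$ copies of each good type of $O^*$ (identifying copies across rounds via the bijections $f_{t,t'}$ of the identical-days setting). The base case $k=0$ is immediate since $A^0_i=\emptyset$. The observation driving the induction is that inside a block the algorithm pushes every good into one of two ``phase rounds'' via the $delay$ calls --- call them the \emph{first phase round} (block offset $\lceil n/2\rceil$) and the \emph{second phase round} (block offset $n-1$) --- so all other rounds of the block are empty and $A^t_i$ changes only at these two rounds.

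For the inductive step I would first analyze the first phase round. After the $delay$ calls it contains exactly $\lceil n/2\rceil$ copies of each type of $O^*$ (one contributed by each of the $\lceil n/2\rceil$ pooled rounds), and the algorithm runs \emph{RR'} on it. By Lemma~\ref{lemma:tef1identicalDaysRn/21}, RR' allocates all of these goods and returns an EF1 allocation; and by the defining rule of RR' --- an agent only picks a good whose identical copy is not already in its current bundle --- no agent receives two copies of the same type. Hence the $\lceil n/2\rceil$ copies of each type go to $\lceil n/2\rceil$ \emph{distinct} agents. Now for the second phase round: after its $delay$ calls it holds exactly $n-\lceil n/2\rceil=\lfloor n/2\rfloor$ copies of each type, and by the previous sentence exactly $\lfloor n/2\rfloor$ agents still lack a copy of each given type in their block allocation. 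So we can --- and the line ``Allocate $O_{kn+n-1}$ such that $\forall i,j,\ v_i(A^{kn+n-1}_i)=v_i(A^{kn+n-1}_j)$'' does --- hand each lacking agent the missing copy of each type; combined with the first phase, every agent gets exactly one copy of every type of $O^*$ during block $k+1$. Together with the $k$ copies from the induction hypothesis, every agent holds $k+1$ copies of each type after block $k+1$, re-establishing the invariant, and the net gain over the block is precisely $O^*$, i.e. $A^{(k+1)n}_i=O^*\cup A^{kn}_i$. Moreover, since after each block every bundle consists of $k+1$ copies of each type, additivity plus value-preservation across identical days give $v_i(A^{kn}_j)=(k)\sum_{g\in O^*}v_i(g)$ for all $i,j$, so the equalizing allocation in the second phase is indeed realizable and the ``TEF every $n$ rounds'' claim of Theorem~\ref{thm:tef1identicalDaysRn/2} follows.

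Two auxiliary points need checking. First, every $delay$ call above is legal: a good pooled into the first phase round is moved by at most $\lceil n/2\rceil-1$ rounds, and one pooled into the second phase round by at most $\lfloor n/2\rfloor-2$ rounds, both $\le r-1$ since $r\ge n/2$ and $r$ is an integer, hence $r\ge\lceil n/2\rceil$ (the ceiling being what matters for odd $n$). Second, to obtain the stated identity $A^t_i=O^*\cup A^{t-n}_i$ for every $t$ in the complete-block regime, not just at block boundaries: the operations inside each block --- the same $delay$ pattern, the same call $\text{RR'}(N,\cdot,v)$ on the same multiset up to the identical-days bijection, the same equalizing allocation --- are identical across blocks, so the deterministic per-round allocation vector is periodic with period $n$ and changes only at the two phase rounds, and combining this periodicity with the per-block gain of $O^*$ yields the identity at every round.

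I expect the main obstacle to be precisely the structural analysis of RR' in the first phase --- that it terminates without any agent getting ``stuck,'' allocates all $\lceil n/2\rceil$ copies of each type, and never duplicates a type within an agent --- coupled with the floor/ceiling bookkeeping ($\lceil n/2\rceil+\lfloor n/2\rfloor=n$) that makes the second phase close up \emph{exactly} so that every agent ends with one copy of each type. This is where the hypothesis $r\ge n/2$ is genuinely used and where Lemma~\ref{lemma:tef1identicalDaysRn/21} must be invoked carefully; everything else is routine induction.
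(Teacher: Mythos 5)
Your proof is correct and follows essentially the same route as the paper's: treat the pooled goods in the first phase round as $\lceil n/2\rceil$ copies of each good type, invoke Lemma~\ref{lemma:tef1identicalDaysRn/21} to get that RR' distributes them to distinct agents, and then use the second phase round to top up the remaining $\lfloor n/2\rfloor$ agents so that each agent gains exactly one copy of $O^*$ per block. Your write-up is in fact more careful than the paper's (explicit induction invariant, the $\lceil n/2\rceil+\lfloor n/2\rfloor=n$ bookkeeping, and the legality check of the $delay$ calls against $r\ge n/2$), but it is the same argument.
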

    \begin{proof}
         Consider the goods at the \(k \cdot n + \lceil\frac{n}{2}\rceil, k \in \mathbb{N}\) round to be \(\lceil\frac{n}{2}\rceil\) copies of the same good. By Lemma~\ref{lemma:tef1identicalDaysRn/21}, Algorithm~\ref{alg:tef1identicaldaysRn/2RR} returns an EF1 allocation without allocating duplicate copies for each agent. Therefore, we do not allocate multiple copies to the same agent. Every \(n\) rounds, for each agent, we simply allocate a good to an agent if they do not have this ``copy", and each agent has exactly one copy of this good. After this allocation, each agent is allocated exactly the goods that arrive in each round, and Lemma~\ref{lemma:tef1identicalDaysRn/2} follows.
    \end{proof}
    Then, we prove Theorem~\ref{thm:tef1identicalDaysRn/2} by induction. For the base case \(t = \lambda\), we split its proof into two parts. First, in round \(t = \lceil\frac{\lambda}{2}\rceil\), we obtain an EF1 allocation. Then, in round \(t = \lambda\), by the algorithm's design, we allocate goods for each agent that are equivalent to the goods that arrive in each round. It is clear that at \(t = \lambda\), we have an EF allocation, as each agent has the same-valued allocation.
    
    Then, we prove the induction step. Assume that for an arbitrary \(t = k \cdot \lambda\), for any agent \(i\), \(\forall j \in [N], v_i(A^{k \cdot \lambda}_i) = v_i(A^{k \cdot \lambda}_j)\). In round \(t = k \cdot \lambda + \lceil\frac{\lambda}{2}\rceil\), since the Round-Robin algorithm returns an EF1 allocation, for any arbitrary agent \(i\), \(\forall j\), \(\exists g \in A^t_j \backslash A^{k \cdot n}_j\) s.t. 
    $v_i(A^t_i \backslash A^{k \cdot \lambda}_i) \geq v_i(A^t_j \backslash A^{k \cdot \lambda}_j \backslash \{g\})$ \text{.}
    By adding the excluded part from both sides of the inequality, we have 
    \[
    v_i(A^t_i \backslash A^{k \cdot \lambda}_i) + v_i(A^{k \cdot \lambda}_i) \geq v_i(A^t_j \backslash A^{k \cdot \lambda}_j \backslash \{g\}) + v_i(A^{k \cdot \lambda}_i) \text{.}
    \]
    Since \(v_i(A^{k \cdot \lambda}_i) = v_i(A^{k \cdot n}_j)\), we have \(v_i(A^t_i)\geq v_i(A^t_j \backslash \{g\}))\).
    Hence, agent \(i\)'s envy towards agent \(j\) can be eliminated by removing a good \(g\) from agent \(j\)'s allocation: \(v_i(A^t_i \backslash A^{k \cdot \lambda}_i \cup A^{k \cdot \lambda}_i) = v_i(A^t_i) \geq v_i((A^t_j \backslash A^{k \cdot \lambda}_j \cup A^{k \cdot \lambda}_j) \backslash \{g\}) = v_i(A^t_j \backslash \{g\})\), which results in a TEF1 allocation.
    
    Now, as above, in round \(t = (k + 1) \cdot \lambda\), Algorithm~\ref{alg:tef1identicalDaysRn/2} allocates the same valued bundle for each agent, which results in a TEF allocation.\\
    Lastly, if \(T\) is not divisible by \(\lambda\), we can schedule the remaining rounds as two rounds as \(r \geq \frac{\lambda}{2} \geq \frac{T}{2}\). By \cite{elkind2024}, by running an RR in the first scheduled round, and another RR with the allocation order reversed in the second (last) scheduled round, we obtain a TEF1 allocation.
\end{proof}

\subsection{TEFX} \label{tefxR}
 Since a TEFX allocation may not exist without scheduling, the next question is whether TEFX exists when scheduling is allowed. We answer this question by providing a negative answer, provided the instance does not degenerate into the standard EFX model, where there exists only one round.

\begin{theorem}
\label{thm:tefxgeneralsettingsR}
    A TEFX allocation may not exist, even with scheduling, when there are only two agents with identical valuations, for \(r \leq T-1\) and \(T > 1\).
\end{theorem}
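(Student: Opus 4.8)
The plan is to exhibit a fixed family of instances, parameterized by the horizon $T$ and buffer $r$ with $1 \le r \le T-1$, on which no scheduling strategy together with any allocation can maintain TEFX at every round. Since scheduling only permits delaying a good by at most $r-1$ extra rounds, the key structural fact I want to exploit is that with $r \le T-1$ there is always at least one ``seam'' in time that a single good cannot jump across, so the adversary can force the algorithm to commit to a split of some cheap goods before a very valuable good becomes available. Concretely I would reuse the flavor of the counterexample already sketched in the excerpt (two identical-valuation agents, two tiny goods of value $1$ and later a huge good of value, say, $100$, together with a medium good of value $10$), but re-timed so that the two tiny goods and the two big goods are separated by more than $r-1$ rounds: put $O_1 = \{g_1\}$, $O_2 = \{g_2\}$ with $v(g_1)=v(g_2)=1$, then a long stretch of empty rounds, and finally $O_{T-1} = \{g_3\}$, $O_T = \{g_4\}$ with $v(g_3)=100$, $v(g_4)=10$. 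Because $r \le T-1$, neither $g_1$ nor $g_2$ can be delayed to the last two rounds; at best they can be spread out, but by round $T-2$ both must be allocated.

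The core argument then runs as follows. First I would argue that after round $T-2$ the only goods allocated (in any scheduling plan consistent with the buffer) are $g_1$ and $g_2$, each of value $1$, so the cumulative allocation at that point is a partition of a two-element, value-$\{1,1\}$ multiset between two agents with identical valuations; up to symmetry either one agent holds both (bundle value $2$, other agent $0$) or each holds one (values $1$ and $1$). Next, in round $T-1$ the good $g_3$ of value $100$ arrives and must be given to one of the two agents. I would do a short case analysis: if before round $T-1$ some agent $i$ holds nothing and the other holds a nonempty bundle $B$ with $v(B)\ge 1$, then giving $g_3$ to the agent who already holds $B$ makes $i$ envy a two-good bundle whose every singleton removal still leaves value $\ge 100 > 0 = v(A_i)$, violating TEFX immediately; and giving $g_3$ to $i$ just swaps roles for the subsequent round. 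The remaining subcase is that each agent holds exactly one value-$1$ good before round $T-1$; then whoever receives $g_3$ has a bundle $\{g_1, g_3\}$ (say) of value $101$, and the other agent, holding a single good of value $1$, envies it even after removing the cheap good $g_1$, since $1 < 100$ — again violating TEFX at round $T-1$. A symmetric-looking escape would be to route $g_3$ to balance things, but with only two goods of total value $2$ available to offset a value-$100$ good, no split can make the poorer agent's bundle reach $v(\text{other bundle minus any single good})$, because the other bundle minus its cheapest good still has value at least $100$ while the poorer bundle has value at most $2$.

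The main obstacle I anticipate is making the ``scheduling is useless here'' step fully rigorous: a scheduling algorithm sees goods online and may try clever delays (e.g., holding $g_1$ as long as possible, or delaying $g_3$ by up to $r-1$ rounds to co-arrive with $g_4$), so I need to show that every legal delay pattern still forces one of the above bad configurations. The clean way to handle this is to note (i) $g_1$ and $g_2$ have arrival times $1$ and $2$ and deadlines $r \le T-1$ and $r+1 \le T$, so in particular by the end of round $T-1$ both are placed, and in fact since each still yields only value $1$ wherever it lands, the cumulative value held before $g_3$'s slot is at most $2$ regardless; and (ii) $g_3$ itself, even if delayed to co-arrive with $g_4$, must still be allocated by round $T$ to some agent, and at the round it is allocated the other agent's entire cumulative bundle has value at most $2 + v(g_4) = 12 < 100$, so that round violates TEFX. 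This reduces the whole proof to a finite case split on ``which agent gets $g_3$'' against ``the (at most) $2+10$ value the other agent could possibly have amassed,'' and the arithmetic $12 < 100$ closes every branch. I would also remark that the construction degenerates correctly: when $r = T$ the deadline of $g_1$ becomes $T$ and the good can be placed in the final round, matching the known possibility in the one-round / standard-EFX regime, and when $T = 1$ there is nothing to prove.
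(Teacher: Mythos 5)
Your construction does not prove the theorem for the full claimed range of $r$. By staggering the arrivals ($O_1=\{g_1\}$, $O_2=\{g_2\}$), you give $g_2$ the deadline $2+r-1=r+1$, which can equal $T$; more generally, for $r\in\{T-2,T-1\}$ the cheap goods can be delayed to co-arrive with the expensive ones, and then a TEFX schedule \emph{does} exist: place $g_1$ (and $g_2$) so that in round $T-1$ agent $2$ receives only $\{g_3\}$ while agent $1$ accumulates $\{g_1,g_2,g_4\}$ by round $T$. At every round agent $1$'s envy is toward a \emph{singleton} bundle, and $v_1(\{g_3\}\setminus\{g_3\})=0\le v_1(A_1)$, so EFX holds throughout. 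This is essentially the paper's own Section~5 example (values $1,1,100,10$) used there to show that scheduling \emph{enables} TEFX --- repurposing it as an impossibility instance cannot work. The same oversight infects your step (ii): the inequality $12<100$ only establishes envy, and envy alone does not violate EFX; you must exhibit a single good whose removal fails to eliminate the envy, which is impossible when the envied bundle is $\{g_3\}$ alone. Your case analysis never forces the $g_3$-recipient to hold a second good at the moment of the purported violation, and the ``swaps roles for the subsequent round'' branch in fact terminates in a valid TEFX allocation.

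The paper avoids both problems with a tighter design: all low-value goods arrive together in round $1$ (so they share the common deadline $r\le T-1$, strictly before round $T$, where the single high-value good $g_3$ with $v(g_3)=2$ arrives and cannot be moved earlier), and the intermediate EFX constraint forces the $(1,1)$ split of $\{g_1,g_2\}$ before $g_3$ appears. Consequently whoever receives $g_3$ necessarily holds a two-good bundle of value $3$, and removing the value-$1$ good leaves $2>1$, killing EFX; this reduces to the known $T=2$ impossibility of Elkind et al. If you want to salvage your approach, you must (a) co-locate all cheap goods in round $1$ so none can leak past round $T-1$, and (b) ensure the structure forces the high-value good to land in a non-singleton bundle; as written, your instance is a possibility instance for $r\ge T-2$, not a counterexample.
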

\begin{proof}
Consider the instance with two agents and three goods \(O = \{g_1, g_2, g_3\}\), where agents have identical valuations: \(v(g_1) = v(g_2) = 1\) and \(v(g_3) = 2\), and \(r \leq T - 1\). \(O_1 = \{g_1, g_2\}\), \(\forall 1 < t < T, O_t = \{\emptyset\}\), and \(O_T = \{g_3\}\). Essentially, we insert rounds in which no goods arrive, such that no matter how the goods are scheduled, \(g_1\) and \(g_2\) must be allocated before \(g_3\).

To prove that TEFX does not exist in this instance, we encourage interested readers to read the proof in \cite{elkind2024}, as our proof closely resembles it. Specifically, when \(T = 2\), \(r \leq 1\), which is the case where there are no additional rounds, it is equivalent to the case in their proof.
\end{proof}

\subsection{TMMS} \label{tmmsR}
We consider TMMS with scheduling, and show that even when there are two agents with identical valuations, after scheduling, TMMS may not exist, provided the instance does not degenerate into a standard MMS setting.

\begin{theorem}
    \label{thm:tmmsgeneralsettingR}
    A TMMS allocation for goods may not exist, even with scheduling, when there are only two agents with identical valuation, for \(r \leq T - 1\), and \(T > 1\).
\end{theorem}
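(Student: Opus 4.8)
The plan is to exhibit a single small instance with $n=2$ agents having identical valuations and show that, for every choice of $r \le T-1$ and every scheduling policy, no allocation maintains the TMMS guarantee at every round. Since the valuations are identical, each agent's maximin share at any round $t$ is computed from the multiset of values of goods present by round $t$, and the TMMS requirement at round $t$ is that $v(A^t_i) \ge \mathrm{MMS}^t$ for both $i$, where $\mathrm{MMS}^t$ is the maximin share of the cumulative pool $O^t$ split into two bundles. For two agents with identical values, $\mathrm{MMS}^t$ is exactly the value of the ``smaller half'' in the best balanced partition of the currently-available goods. The key leverage is the same one used in the TEFX counterexample discussed just before this theorem: a sufficiently large good arriving early forces one agent to take it, after which the maximin share jumps in a later round to a value the lagging agent cannot reach without reallocation — which the model forbids.

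Concretely, I would take $T=2$ first (the hardest case, since scheduling is weakest there: $r \le T-1 = 1$ means no scheduling at all, so this reduces to showing a plain TMMS-impossible temporal instance, which is essentially Theorem~\ref{thm:tmmsidenticaldays}'s construction or a variant of the $v(g_1)=v(g_2)=1,\ v(g_3)=100$ idea). Then for general $T>1$ with $r \le T-1$, I would pad the two-round gadget with $T-2$ empty rounds placed so that no good can ever be scheduled to ``bridge'' the critical transition: since $r \le T-1$, there is always a pair of consecutive-in-effect rounds across which some good is trapped. The design goal is that after the first effective round, one agent must hold a bundle of value $0$ (or a value strictly below the round-1 maximin share is unavoidable for one of them), and in the next effective round a second good arrives that makes $\mathrm{MMS}$ strictly positive for the pool, so the round-1 loser now has value $0 < \mathrm{MMS}$ and cannot recover because past goods are fixed.

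The steps, in order: (1) fix the instance — I expect something like $O$ containing one good of value $v$ and then, after the flexible window closes, several goods of value $1$, chosen so that after any allocation of the first window at least one agent has too little; (2) argue that scheduling with parameter $r \le T-1$ cannot move the large good past the point where the small goods appear, using the definition $t' - t + 1 \le r$ to pin down exactly which rounds remain ``separated''; (3) compute $\mathrm{MMS}^t$ at the two critical rounds under identical valuations — this is a routine balanced-partition calculation; (4) do the case analysis on which agent receives the large good and show each branch violates TMMS at the later critical round; (5) handle the boundary $T=2,\ r=1$ separately as noted. The main obstacle I anticipate is step (2) combined with the quantifier over \emph{all} scheduling policies: I must make sure the counterexample is robust to the adversary delaying \emph{several} goods simultaneously and in coordinated ways, not just the obvious one, so the gadget should be designed so that every reachable ``effective round profile'' still contains the fatal large-good-then-small-goods ordering across some cut. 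Keeping the instance minimal while closing off all these scheduling escape routes is the delicate part; everything else is bookkeeping on maximin shares of two-agent identical-valuation pools.
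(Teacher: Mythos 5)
Your overall strategy matches the paper's: reuse the TEFX counterexample instance, pad it with empty rounds so that $r \le T-1$ cannot move the early goods past the arrival of the late good, and then compute the two-agent maximin shares at the two critical rounds. The paper's instance is exactly the ``variant of the $v(g_1)=v(g_2)=1,\ v(g_3)=100$ idea'' you mention (it uses $v(g_3)=2$), with $O_1=\{g_1,g_2\}$, $O_T=\{g_3\}$, and $O_t=\emptyset$ in between: round~1 forces the two unit goods to be split (MMS of $\{1,1\}$ is $1$), and at round $T$ the MMS of $\{1,1,2\}$ jumps to $2$, which the agent not receiving $g_3$ cannot reach.

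One concrete warning: the gadget you sketch in your step (1) — one large good first, then several unit goods after the window closes — does \emph{not} work, and the reason is the asymmetry you half-notice in your own hedge. For two agents, the maximin share of a pool containing a single good is $0$, so a lone large good arriving first imposes no constraint at round~$1$; the agent who receives nothing can then collect all the later unit goods and meet the final MMS exactly. The ordering must be reversed: the small goods must arrive first so that the round-$1$ MMS is already positive and forces a split, and only then does the large good raise the bar beyond the lagging agent's reach. With that correction, your step (2) goes through as you describe (a good arriving at round $t$ can reach at most round $t+r-1 \le T-1 < T$), and your robustness concern about coordinated delays is handled automatically: however $g_1$ and $g_2$ are delayed, by round $T-1$ both are present and must be held by different agents, after which the round-$T$ case analysis is the same.
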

\begin{proof}
    We use the same instance from Theorem~\ref{thm:tefxgeneralsettingsR}. For completeness, we state the instance here again: Consider the instance with two agents and three goods \(O = \{g_1, g_2, g_3\}\), where agents have identical valuations: \(v(g_1) = v(g_2) = 1\) and \(v(g_3) = 2\), and \(r \leq T - 1\). \(O_1 = \{g_1, g_2\}\), \(\forall 1 < t < T, O_t = \emptyset\), and \(O_T = \{g_3\}\). As mentioned in the proof of Theorem~\ref{thm:tefxgeneralsettingsR}, we insert rounds in which no goods arrive, such that no matter how the goods are scheduled, \(g_1\) and \(g_2\) must be allocated before \(g_3\).
    
    In round \(1\), to obtain a TMMS allocation, each agent must be allocated either \(g_1\) or \(g_2\). Without loss of generality, assume agent \(1\) is allocated \(g_3\). However, notice that in order to obtain a TMMS allocation, each agent must be allocated either \(\{g_1, g_2\}\) or \(g_3\), which is impossible to achieve.
\end{proof}

\section {Discussion}
In this work, we explored the boundaries of fairness in temporal resource allocation, analyzing both approximation ratios and the power of scheduling. 
We show that while standard temporal fairness (such as TEFX and TMMS) remains largely unachievable even under restricted domains, meaningful guarantees are achievable if requirements are relaxed. Specifically, we established that scheduling is a powerful but limited tool. By introducing a buffer of size $r \geq n/2$, we can guarantee TEF1 in the identical days setting, providing a practical mechanism for system designers who can tolerate some amount of latency. However, we also showed that scheduling is not a panacea: the impossibility of TEFX and TMMS persists across most settings, suggesting that these metrics may be too demanding for temporal fair division. Regarding approximations without scheduling, our results for $\alpha$-TEFX reveal interesting properties. While constant-factor approximations are impossible in the general case, we proved that, under domain restrictions or in other settings such as generalized binary valuations, they admit stronger guarantees.

Several open questions remain regarding the extension of the temporal fair division framework. One key question is to tighten the scheduling bounds. Our result establishes $r \geq T/2$ as a buffer for TEF1, which may not be a tight constraint. It remains an open question whether a smaller scheduling buffer can be obtained.

\clearpage
\bibliographystyle{apalike}
\bibliography{bib}

\clearpage
\appendix
\section{Appendix}

\subsection{Pseudocode of ECE}
\begin{algorithm}
    \caption{ECE}
    \label{alg:ECE}
    \textbf{Input}: Arbitrary fair division instance \(\mathcal{I} = \left( N, M, v = (v_1, \ldots, v_n)\right)\)\\
    \textbf{Output}: Allocation \(\mathcal{A}\)
\begin{algorithmic}[1]
    \For{$i \in N$}
        \State $A_i \gets \emptyset$
    \EndFor
    \For{$l = 1$ to $m$}
        \While{there does not exist an unenvied agent}
            \State Find an envy-cycle \(C = \left(i_0, \dots, i_{d-1}\right)\)
            \State \(\forall 0 \leq j < d\), assign the allocation of \(i_{j+1 \text{mod} d}\) to \(i_j\).
        \EndWhile
        \State Let $i$ be an unenvied agent
        \State Let $g^*$ be an arbitrary good
        \State $A_i \gets A_i \cup \{g^*\}$
        \State $M \gets M \backslash \{g^*\}$
    \EndFor
\end{algorithmic}
\end{algorithm}

\subsection{Psuedocode of MAX-ECE}
\begin{algorithm}
    \caption{MAX-ECE}
    \label{alg:MAX-ECE}
    \textbf{Input}: Arbitrary fair division instance \(\mathcal{I} = \left( N, M, v = (v_1, \ldots, v_n)\right)\)\\
    \textbf{Output}: Allocation \(\mathcal{A}\)
\begin{algorithmic}[1]
    \For{$i \in N$}
        \State $A_i \gets \emptyset$
    \EndFor
    \For{$l = 1$ to $m$}
        \While{there does not exist an unenvied agent}
            \State Find an envy-cycle \(C = \left(i_0, \dots, i_{d-1}\right)\)
            \State \(\forall 0 \leq j < d\), assign the allocation of \(i_{j+1 \text{mod} d}\) to \(i_j\).
        \EndWhile
        \State Let $i$ be an unenvied agent
        \State Let $g^* = arg max_{g \in M} v_i(g)$
        \State $A_i \gets A_i \cup \{g^*\}$
        \State $M \gets M \backslash \{g^*\}$
    \EndFor
\end{algorithmic}
\end{algorithm}

\subsection{Psuedocode of RR}
\begin{algorithm}
    \caption{RR}
    \label{alg:RR}
    \textbf{Input}: Arbitrary fair division instance \(\mathcal{I} = \left( N, M, v = (v_1, \ldots, v_n)\right)\)\\
    \textbf{Output}: Allocation \(\mathcal{A}\)
    \begin{algorithmic}[1]
        \For{$i \in N$}
            \State $A_i \gets \emptyset$
        \EndFor
        \For{$l = 1$ to $m$}
            \State Let $i \gets N_{l \text{mod} n}$
            \State Let $g^* \in arg max_{g \in M} v_i(g)$
            \State $A_i \gets A_i \cup \{g^*\}$
            \State $M \gets M \backslash \{g^*\}$
        \EndFor
    \end{algorithmic}
\end{algorithm}

\end{document}